\documentclass[a4paper,12pt]{amsart}
\usepackage[margin=2cm]{geometry}
\usepackage{color}
\usepackage{pdfsync}
\usepackage{amsmath,cool}

\newtheorem{thm}{Theorem}[section]
\newtheorem{cor}[thm]{Corollary}

\newtheorem{prop}[thm]{Proposition}
\theoremstyle{definition}
\newtheorem{defn}[thm]{Definition}
\theoremstyle{remark}
\newtheorem{rem}[thm]{Remark}
\numberwithin{equation}{section}
\newcommand{\R}{\mathbb R}

\newcommand{\norm}[1]{\left\Vert#1\right\Vert}
\newcommand{\abs}[1]{\left\vert#1\right\vert}
\newcommand{\set}[1]{\left\{#1\right\}}

\newcommand{\To}{\longrightarrow}

\newcommand{\cald}{\mathcal D}

\newcommand{\pr}{^{\prime}}

\newcommand{\beq}{\begin{equation}}
\newcommand{\eeq}{\end{equation}}
\newcommand{\ben}{\begin{enumerate}}
\newcommand{\een}{\end{enumerate}}

\newcommand{\C}{\mathbb C}

\newcommand{\lan}{\langle}
\newcommand{\ran}{\rangle}
\newcommand{\bs}{\boldsymbol}


\begin{document}

\title[]{Parametrizations of degenerate density matrices}

\author{E. Br\"uning \and S. Nagamachi}

\address[E. Br\"uning]{School Mathematics, Statistics and Computer Science,
University of KwaZulu-Natal, Private Bag X54001, Durban 4000,
South Africa, and Natonal Institute for Theoretical Physics (NITheP), KwaZulu-Natal, South Africa} \email{bruninge@ukzn.ac.za}

\address[S. Nagamachi]{Emeritus Professor, The University of Tokushima\\ Tokushima 770-8506,
	Japan} \email{shigeaki-nagamachi@memoad.jp}

\begin{abstract}
It turns out that a parametrization of degenerate density matrices requires a parametrization of 	
$\mathfrak{F}=U(n)/({U(k_1)\times U(k_2)\times \cdots
	\times	U(k_m)})\quad n=k_1 +\cdots + k_m $ where $U(k)$ denotes the set of all unitary $k\times k$-matrices with complex entries. Unfortunately the parametrization of this quotient space is quite involved. Our solution does not rely on Lie algebra methods {directly,} but succeeds through the construction of suitable sections for natural projections, by using techniques from the theory of homogeneous spaces. 
We mention the relation to the Lie algebra back ground and conclude with two concrete examples.
\end{abstract}

\maketitle \tableofcontents

\section{Introduction} \label{intro}
In various parts of Physics density matrices, i.e., positive trace class operators  of trace $1$ on a complex separable Hilbert space play an important role, see \cite{Bl12}. Density matrices represent states of quantum systems. 
In many concrete applications the Hilbert space is typically finite dimensional and the Hilbert space then is $\C^n$, the space of $n$-tuples of complex numbers with its standard inner product. Thus the space $\cald_n$ of all density matrices on  $\C^n$ is the space of all $n\times n$ matrices $\rho$ with complex entries such that $\lan x,\rho x \ran \geq 0$ for all
$x \in \C^n$ and $\Tr(\rho)= \sum_{j=1}^{n} \lan e_j,\rho e_j\ran = 1$
for any orthonormal basis $\set{e_j:\, j = 1,\ldots,n}$ of $\C^n$.
Through these two constraints the entries of a density matrix are not all independent and thus contain redundant parts. But for an effective description of quantum states one would like to get rid of these redundant parts of a density matrix, i.e., one would like to have a description of density matrices in terms of a set of independent parameters, that is a parametrization in the sense of
Definition \ref{def:parametrization}. The best known parametrization of density matrices seems to be the Bloch vector
parametrization \cite{Bl46,NC04}. While this parametrization is perfect for $n=2$-level systems, it has a serious defect for $n\geq 3$-level systems in the sense that the parameter set cannot be determined explicitly (see for instance \cite{BMMP12}). Thus various authors have been looking for alternative ways to parametrize density matrices, see for instance \cite{Ak07,BK03,Harriman78a,Spengler10}. 
Some time ago we started with a parametrization of density matrices based on their spectral representation \cite{BP08,BCP08,BMMP12}.

The spectral representation of a density matrix  $\rho \in \cald_n$ reads
\beq \label{eq:spect-rep}
\rho =U D_n(\lambda_1, \ldots,\lambda_n)U^*
\eeq
where $D_n(\lambda_1, \ldots,\lambda_n)$ is the diagonal matrix of the eigenvalues $\lambda_1, \ldots,\lambda_n$ and $U$ is some
unitary $n\times n$ matrix, i.e., $U\in U(n)$.
These $n$ eigenvalues are not necessarily distinct; they occur in this list as many times as their multiplicity requires.

In this article we consider parametrizations in the strict sense as suggested in \cite{BMMP12}. This definition reads:
\begin{defn} \label{def:parametrization}
	A {\bf { parametrization of  density matrices}}  is given by the following:
	\begin{enumerate}
		\item[(a)] Specification of a parameter set $Q_n \subset \R^m$ where $m$ depends on $n$, i.e., $m=m(n)$;
		\item[(b)] Specification of a one-to-one and onto map $F_n:\, Q_n \To \cald_n$.\\
	\end{enumerate}
\end{defn}

When the spectral representation (\ref{eq:spect-rep}) is chosen as the starting point one obviously needs a suitable parametrization of unitary matrices.

The set of eigenvalues $\set{\lambda_1,\ldots,\lambda_n}$ can be ordered according to their size: We denote the set of eigenvalues ordered in this way by $\Lambda_n$, i.e.,
\beq \label{eq:ev}
\Lambda_n = \set{\bs{\lambda}=(\lambda_1,\ldots,\lambda_n):
	0\leq \lambda_n \leq \cdots \leq \lambda_2 \leq \lambda_1, \sum_{j=1}^n \lambda_j =1}	\;.
\eeq

We begin by addressing the question of uniqueness of the spectral representation (\ref{eq:spect-rep}).  Accordingly suppose that for $\bs{\lambda},\bs{\lambda}\pr \in \Lambda_n$ and
$U,V \in U(n)$ we have
$$ U^* D_n(\bs{\lambda})U =V^*D_n(\bs{\lambda}\pr)V $$
Since the spectrum of a matrix is uniquely determined and since $\bs{\lambda},\bs{\lambda}\pr \in \Lambda_n$ it follows $\bs{\lambda}=\bs{\lambda}\pr$ and therefore it follows $V U^* D_n(\bs{\lambda})=D_n(\bs{\lambda})VU^*$, i.e.,
\beq \label{eq:commut}
V U^* \in D_n(\bs{\lambda})\pr
\eeq
where $D_n(\bs{\lambda})\pr$ denotes the commutant of the diagonal matrix $ D_n(\bs{\lambda})$ in $U(n)$. If a density matrix $\rho \in \cald_n$ has a non-degenerate spectrum, i.e., if 
\beq \label{eq:nondeg-spec}
\bs{\lambda} \in \Lambda_n^{\neq} =\set{\bs{\lambda}\in \Lambda_n:
	0\leq \lambda_n <\lambda_{n-1}< \cdots < \lambda_2 <\lambda_1}
\eeq
then this commutant is easily determined and is given by
\beq \label{eq:comm1}
\cald_n(\bs{\lambda})\pr =U(1)\times \cdots \times U(1),\quad n\;\rm{terms}
\eeq
Naturally there are many ways in which a density matrix can be degenerate. Suppose that the spectrum of $\rho_n \in \cald_n$ has $m$ different eigen-values $\lambda_1, \ldots , \lambda_m$ with multiplicities $k_1,\ldots, k_m$ with 
$\sum_{j=1}^m k_j =n$. Thus $\bs{\lambda} \in \Lambda_n$
is of the form
$$(\lambda_1, \ldots, \lambda_1, \lambda_2,\ldots, \lambda_2, \ldots \lambda_m, \ldots \lambda_m) $$
where each $\lambda_j$ is repeated $k_j$ times and 
$\sum_{j=1}^m k_j \lambda_j =1$ and where we use the ordering $0\leq \lambda_m <\lambda_{m-1} < \cdots < \lambda_1$ according to (\ref{eq:ev}). Thus one has in this case
\beq \label{eq:diag-degen}
D_n(\bs{\lambda}) = \rm{diag}_n(
\lambda_1 I_{k_1},\lambda_2 I_{k_2},\ldots,
\lambda_m I_{k_m} ) \eeq
where $\rm{diag}_n$ denotes the $n \times n$ diagonal matrix with entries as indicated and where $I_{k_j}$ denotes the $k_j \times k_j$ identity matrix. Therefore the commutant of the diagonal matrix $D_n(\bs{\lambda})$ is in this case
\beq \label{eq:commut-degenerate}
D_n(\bs{\lambda})\pr =U(k_1)\times U(k_2)\times \cdots \times U(k_m)\;.
\eeq

Thus in order to complete the parametrization problem for
degenerate density matrices we need to find a suitable parametrization of 
\beq \label{eq:deg-param} 
\mathfrak{F}=U(n)/({U(k_1)\times U(k_2)\times \cdots
	\times	U(k_m)})\quad n=k_1 +\cdots + k_m .
\eeq

We begin with a discussion of the simplest case, i.e., $k_j=1$ for all $j$ and $m=n$.
Note that in this case
$$U(n)/(U(1)\times \cdots \times U(1))=U(n)/\sim $$
for the equivalence relation $\sim$ in $U(n)$ defined by
$$      U \sim  U^{\prime } \Leftrightarrow  U^{-1}U^{\prime } \in  U(1) \times  \ldots \times  U(1).$$
Accordingly the elements of $U(n)/\sim$ are the equivalence classes
$$      [U] = \{ UV; V \in  U(1) \times  \cdots \times  U(1)\} $$

$U(n)/\sim =U(n)/(U(1) \times  \cdots \times  U(1))$ is called a (complex) full flag
manifold (see \cite{Ma72}).  Introduce the natural projection
$$      \pi : U(n) \ni  U \rightarrow  [U] \in  U(n)/(U(1) \times  \cdots \times  U(1)).$$
Then a map $\iota : U(n)/(U(1) \times  \cdots \times  U(1)) \rightarrow  U(n)$ is called a
{\bf section of $U(n)$ on $U(n)/(U(1) \times  \cdots \times  U(1))$} for
$\pi$ if $\pi \circ \iota  = {\rm id \, }$.
The relation $U^{\prime } = \iota ([U])$ implies that $U^{\prime }$ is a representative of
the coset
$[U]$.  The mapping
$$      p: \Lambda_{n}^{\neq} \times  U(n)/(U(1) \times  \cdots \times  U(1)) \ni  ((\lambda _{1}, \ldots , \lambda _{n}), m)
\longrightarrow  \iota (m) D_{n}(\lambda _{1}, \ldots , \lambda _{n}) \iota (m)^{*}$$
does not depend on the section $\iota $, and thus the mapping $p$ gives
a parametrization of density matrices, if $U(n)/(U(1) \times  \cdots \times 
U(1))$ is suitably parametrized.  Since $U(n)/(U(1) \times  \cdots \times 
U(1))$ is a manifold, it is parametrized locally.  But unfortunately,
this parametrization is not simple.  Even though the mapping does
not depend on $\iota $, the construction of a concrete section is
necessary, but also not so simple.

Through the construction of a concrete section we will also achieve a parametrization of unitary matrices, an important problem in itself which has found
considerable attention in the last $10 - 15$ years (see the references mentioned above).
The starting point of this construction is
the so called canonical coset decomposition which gives in particular the well-known Jarlskog parametrization \cite{Ja05,Ja06}.

Recall that the  coset space $U(n)/(U(n-1) \times  U(1))$ is the projective space $CP^{n-1}$ (see \cite{Ma72}).

Symbolically, the canonical coset decomposition is:
$$      U(n) = U(n)/(U(n-1) \times  U(1)) \cdot  U(n-1)/(U(n-2) \times  U(1))$$
$$      \cdot  \ldots \cdot  U(2)/(U(1) \times  U(1)) \cdot  (U(1) \times  \ldots \times  U(1)).$$
In Section \ref{projectsp}, we parametrize $U(n)$ by constructing sections $\iota _j: CP^{n-j} \rightarrow U(n-j+1)$ for $j = 1, \ldots , n-1$.

For the degenerate case we have only to use (complex) Grassmann
manifolds $U(k)/(U(k_{1}) \times  U(k_{2})$ instead of the projective spaces
$U(k)/(U(k-1) \times  U(1))$:
$$      U(n) = U(n)/(U(n-k_{m}) \times  U(k_{m})) \cdot  U(n-k_{m})/(U(n-k_{m-1}-k_{m}) \times  U(k_{m-1}))$$
$$      \cdot  \ldots \cdot  U(k_{1}+k_{2})/(U(k_{1}) \times  U(k_{2})) \cdot  (U(k_{1}) \times  \cdots \times  U(k_{r})).$$
In Section \ref{grassmann}, we study  this case extensively because the parametrization of degenerate density matrices is the main new result of this paper (Propositons \ref{gr:canonicaldec}, \ref{gr:flagsection}).  The result of Section \ref{projectsp} is the special case of $k_j = 1$ for $1 \leq j \leq n$.
In Section \ref{jarlskog}, we study Jarlskog parametrization used in \cite{BMMP12} for the non-degenerate case.

If $S(\C^n) $ denotes the unit sphere in $\C^n$,  we can parametrize the subset $\Omega  = \{ [z]; z \in  S(\mathbb C^{n}), \  z_{n} \neq
0\}  \subset  CP^{n-1}$ by $B(n-1) = \{ x \in  \mathbb C^{n-1}; \Vert x\Vert  < 1\} $.  But for the boundary
$\partial B(n-1) = \{ x \in  \mathbb C^{n-1}; \Vert x\Vert  = 1\} $, the mapping 
$$      \partial B(n-1) \ni  x \rightarrow  [x] \in  CP^{n-1}$$
is not injective and consequently there are $x$ and $x^{\prime }$ in $\partial B(n-1)$
such that $W(x) \neq W(x^{\prime })$ for $W(x)$ of (\ref{pr:Wofx}) and there exist $V$ and $V^{\prime }$ in $U(n-1)
\times  U(1)$ such that
$$       W(x)V = W(x^{\prime })V^{\prime }.$$
Consequently, the parametrization of density matrices is not always unique.
{
	In Section \ref{section} we present a way to construct a section on a Grassmann manifold by using sections on a suitable projective space, since, for concrete calculations, the construction of a section presented in Section \ref{grassmann} is fairly involved.  In Section \ref{example}, we give simple concrete examples of degenerate density matrices.
}
In this paper, we mainly use the technique of homogeneous spaces.  But there is the theory of Lie algebra behind it.  In Section \ref{liealg}, a Lie algebraic back ground is presented.
\section{Grassmannian and canonical coset decomposition} \label{grassmann}

The Grassmann manifold $G(k, \mathbb C^{n})$ is the set of all complex
$k$-dimensional subspaces of $\mathbb C^{n}$ (see \cite{Ma72}).  Let $W$ be a $k$-dimensional
subspace of $\mathbb C^{n}$.  Then we choose a basis of column vectors $\boldsymbol{z}_{1}, \ldots \boldsymbol{z}_{k}$ of $W$
and associate with it the matrix
$$      M(W) = \begin{pmatrix} \boldsymbol{z}_{1} & \boldsymbol{z}_{2} & \ldots & \boldsymbol{z}_{k} \end{pmatrix}.$$
Since the matrix $M(W)$ depends on the choice of a basis of $W$,
$M(W)$ is not determined uniquely by $W$.  There is a freedom of
multiplication by regular $k\times k$  matrices from the right.  Thus we have
$$      G(k, \mathbb C^{n}) = \{ M;M= {\rm complex } \; n\times k \  {\rm matrix \, }\,  {\rm of \, }\,  {\rm rank \, }\,  k\} /GL(k, \mathbb C).$$
Let $S_{k, n}$ be the set of permutations $\sigma $ of $\set{1,\ldots,n}$ such that $1 \leq  \sigma (1) < \sigma (2) <
\cdots < \sigma (n-k) \leq  n$ and $1 \leq  \sigma (n-k+1) < \sigma (n-k+2) < \cdots < \sigma (n) \leq 
n$.
Let $M(W)_{\sigma (n-k+1), \sigma (n-k+2), \cdots , \sigma (n)}$ be the $k\times k$-matrix
which consists of the $\sigma (n-k+1), \sigma (n-k+2), \ldots , \sigma (n)$-th rows of $M(W)$,
and define
$$    \Omega _{\sigma } = \{ W \in  G(k, \mathbb C^{n}); \det M(W)_{\sigma (n-k+1), \sigma (n-k+2), \ldots , \sigma (n)} \neq 0\} \subset G(k, \mathbb C^{n}).$$
Since the rank of the matrix $M(W)$ is $k$, there is a $\sigma  \in  S_{k, n}$
such that \\
$\det M(W)_{\sigma (n-k+1), \sigma (n-k+2), \ldots , \sigma (n)} \neq 0$.  Thus
we have
\beq  \label{gr:union}
G(k, \mathbb C^{n}) = \cup _{\sigma \in S} \Omega _{\sigma }.
\eeq
Let $M(n-k, k)$ be a set of all $(n-k)\times k$ complex matrices.
Define the mapping
\beq \label{gr:chart}
\phi _{\sigma }: \Omega _{\sigma } \ni  W \rightarrow  M(W)_{\sigma (1), \ldots , \sigma (n-k)}
M(W)_{\sigma (n-k+1), \ldots , \sigma (n)}^{-1} \in  M(n-k, k). 
\eeq     
Then $\phi _{\sigma }$ gives the homeomorphism
$$      \Omega _{\sigma } \cong  M(n-k, k) \cong  \mathbb C^{(n-k)k}.$$
In fact,
the element $W$ of $\Omega _{\sigma }$ corresponds in a 1-1 way to the
matrix of the form
\beq \label{gr:Z}
M(W)_{\sigma } M(W)_{\sigma {n-k+1}, \ldots , \sigma {n}}^{-1}
= \begin{pmatrix} \phi _{\sigma }(W) \cr I \end{pmatrix}      
= \begin{pmatrix} Z \cr I \end{pmatrix}      
= \begin{pmatrix} z_{11} & \ldots & z_{1k} \cr
	\vdots & \ddots & \vdots \cr
	z_{n-k,1} & \ldots & z_{n-k,k} \cr
	1   & \ldots & 0 \cr
	\vdots & \ddots & \vdots \cr
	0   & \ldots & 1  \end{pmatrix} ,
\eeq 
where $M(W)_{\sigma }$ is the matrix whose $i$-th row is the $\sigma (i)$-th row of $M(W)$, $Z \in  M(n-k, k)$ and $I$ is the $k\times k$ identity matrix.  Then
the set $\{ (\Omega _{\sigma }, \phi _{\sigma }); \sigma  \in  S_{k,n}\} $ gives an atlas of $G(k, \mathbb C^{n})$.

There is another parametrization of $\Omega _{\sigma }$ which is more
convenient for our purpose.\newline
Let
$$      B(n-k, k) = \{ X \in  M(n-k, k); X^{*}X < I_k\} .$$
Then the set $\Omega _{\sigma }$ can be parametrized by the set $B(n-k,k)$.  We will show this in the following.

Let $S(k, \mathbb C^{n})$ denote the set of all  orthonormal frames $F =( \boldsymbol{x}_{1},  \boldsymbol{x}_{2}, \cdots , \boldsymbol{x}_{k}) $  in $\mathbb C^{n}$ of length $k$, i.e.,$\lan \bs{x}_j,\bs{x}_i \ran = \delta_{ij}$ for $i,j=1,\ldots,k$.  For
$F \in  S(k, \mathbb C^{n})$ we associate a matrix $M(F)$ by\newline
$$    M(F) = ( \boldsymbol{x}_{1},\boldsymbol{x}_{2}, \ldots , \boldsymbol{x}_{k} )
.$$ 
Let $M(F)_{\sigma (n-k+1), \sigma (n-k+2), \ldots , \sigma (n)}$ ($F \in  S(k, \mathbb C^{n})$)
be the $k\times k$-matrix which consists of $\sigma (n-k+1), \sigma (n-k+2), \ldots ,
\sigma (n)$-th rows of $M(F)$,
and let
\beq \label{eq:omegatilde}
\tilde{\Omega } _{\sigma } = \{ F \in  S(k, \mathbb C^{n}); \det
M(F)_{\sigma (n-k+1), \sigma (n-k+2), \ldots , \sigma (n)} \neq 0\}  \subset  S(k, \mathbb C^{n}). \eeq
Since the rank of the matrix $M(F)$ is $k$, there is a
$\sigma  \in  S_{k, n}$ such that \newline
$\det M(F)_{\sigma (n-k+1), \sigma (n-k+2), \ldots , \sigma (n)} \neq 0$.  Thus
we have
$$      S(k, \mathbb C^{n}) = \cup _{\sigma \in S_{k, n}} \tilde{\Omega } _{\sigma }.$$
\begin{defn} \label{gr:usigma} For a permutation $\sigma$ of $\set{1,2,\ldots,n}$ define $U_{\sigma } \in  U(n)$ by $U_{\sigma }e_{j} = e_{\sigma (j)}$.  
\end{defn}
Then $M(F)_{\sigma } = U_{\sigma }^{-1}M(F)$.  Define $F_{\sigma }$ by $M(F_{\sigma }) = M(F)_{\sigma }$, and identify $F$ and $M(F)$.
Let $\pi _{2}$ be the surjective mapping
\beq  \label{gr:pi2}
\pi _{2}: S(k, \mathbb C^{n}) \ni  F \rightarrow  W = {\rm span \, }F \in  G(k, \mathbb C^{n}),
\eeq
where ${\rm span \, }F$ is the complex subspace of $\mathbb C^{n}$ spanned by the frame
$F$.  If $F, F^{\prime } \in  S(k, \mathbb C^{n})$ define the same subspace, then $F^{\prime } =
FU$ for some $U \in  U(k)$.  Thus we have 
\beq \label{gr:cosets}
G(k, \mathbb C^{n}) \cong  S(n, \mathbb C^{n})/U(k). 
\eeq
In order to parametrize $G(k, \mathbb C^{n})$, we must choose a unique
representative $F \in  S(k, \mathbb C^{n})$ from (\ref{gr:cosets}).
Note that
\beq  \label{gr:piOmegasigma}
G(k, \mathbb C^{n}) \supset  \Omega _{\sigma } = \{ W \in  G(k, \mathbb C^{n}); \det F(W)_{\sigma (n-k+1), \sigma (n-k+2), \ldots , \sigma (n)} \neq 0\}
= \pi _{2}(\tilde{\Omega } _{\sigma }) .
\eeq
For $W \in  \Omega _{\sigma }$, we can choose a unique representative from the coset
$F(W)U(k)$.  In fact, since the submatrix
$Y_{\sigma } = F_{\sigma (n-k+1), \ldots , \sigma (n)}$ is nonsingular, from the uniqueness of the polar decomposition
(see \cite{BB15}) we have
\beq  \label{gr:polar}
Y_{\sigma }^{*} = U\vert Y_{\sigma }^{*}\vert 
\eeq    
for a unique $U \in  U(k)$.  Consequently
$$    Y_{\sigma } = \vert Y_{\sigma }^{*}\vert U^{*} $$ 
for a unique $U^{*} \in  U(k)$.  So, we select a unique representative
$\begin{pmatrix} X_{\sigma }^{\prime } \cr Y_{\sigma }^{\prime } \end{pmatrix}$ which corresponds to $W \in  \Omega _{\sigma }$
such that $Y_{\sigma }^{\prime }$ = $Y_{\sigma }U$ =
$\vert Y_{\sigma }^{*}\vert $ is a positive operator and $X_{\sigma }^{\prime }$ = $X_{\sigma } U$ =
$F_{\sigma (1), \ldots , \sigma (n-k+1)}U$.  Since the column vectors of
$\begin{pmatrix} X_{\sigma }^{\prime } \cr Y_{\sigma }^{\prime } \end{pmatrix}$ gives an orthonormal
frame, we have\newline
$$    X_{\sigma }^{\prime *}X_{\sigma }^{\prime } + Y_{\sigma }^{\prime 2} =  (X_{\sigma }^{\prime *}, Y_{\sigma }^{\prime }) \begin{pmatrix} X_{\sigma }^{\prime } \cr Y_{\sigma }^{\prime }
\end{pmatrix} = I_{k}.$$ 
This shows that $X_{\sigma }^{\prime } \in  B(n-k, k)$ and $Y_{\sigma }^{\prime } = (I_{k} - X_{\sigma }^{\prime *}X_{\sigma }^{\prime })^{1/2}$.
Thus $\Omega _{\sigma }$ is parametrized by $B(n-k, k)$. 

This can be also understood (by showing directly that there is a 
1 to 1 correspondence between $B(n-k, k)$ and $M(n-k, k)$.
For the matrix  $Z$ of (\ref{gr:Z}) introduce
$$    \begin{pmatrix} X \cr Y \end{pmatrix}
= \begin{pmatrix} Z \cr I \end{pmatrix} (Z^{*}Z + I)^{-1/2}$$ 
Then we have
$$    I - X^{*}X = Y^{2} > 0, \  {\rm and \, }\,  0 \leq  X^{*}X < I \  {\rm and \, }\,  Y = (I - X^{*}X)^{1/2}. $$ 
Therefor the mappings
$$    \begin{pmatrix} Z \cr I \end{pmatrix} \rightarrow  \begin{pmatrix} X \cr Y \end{pmatrix}$$ 
and
$$    \begin{pmatrix} X \cr Y \end{pmatrix} \rightarrow  \begin{pmatrix} X \cr Y \end{pmatrix} (I - X^{*}X)^{-1/2} = \begin{pmatrix} Z \cr I \end{pmatrix} $$ 
give a 1-1 onto correspondence, and the mappings
$$    M(n-k, k) \ni  Z \rightarrow  X = Z(Z^{*}Z + I)^{-1/2} \in  B(n-k, k),$$  
$$    B(n-k, k) \ni  X \rightarrow  Z = X(I - X^{*}X)^{-1/2} \in M(n-k, k)$$ 
give a 1-1 onto correspondence between $M(n-k, k)$ and $B(n-k,
k)$.

Let
$\tilde{\psi }_{\sigma }$ be the mapping
$$
\tilde{\psi }_{\sigma }: \tilde{\Omega }_{\sigma } \ni F \rightarrow X_{\sigma }U \in B(n-k, k), \ F_{\sigma } = \begin{pmatrix} X_{\sigma } \cr Y_{\sigma } \end{pmatrix}, \ Y_{\sigma }U = \abs{Y_{\sigma }^*}.
$$
Then $\tilde{\psi }_{\sigma }$ induces the mapping
$$
\psi _{\sigma }: \Omega _{\sigma } \ni \pi _{2}(F) \rightarrow X_{\sigma }U \in B(n-k, k), \ F_{\sigma } = \begin{pmatrix} X_{\sigma } \cr Y_{\sigma } \end{pmatrix}, \ Y_{\sigma }U = \abs{Y_{\sigma }^*}
$$
because $\pi _{2}(F) = \pi _{2}(F^{\prime })$ implies $\psi _{\sigma }(F) = \psi _{\sigma }(F^{\prime })$.

\begin{prop} \label{gr:kappa}
	The mapping 	 $\kappa _{\sigma }: B(n-k, k) \rightarrow  \Omega _{\sigma } = \pi _{2}(\tilde{\Omega }_{\sigma} )$ defined by
	$$    \kappa _{\sigma}: B(n-k, k) \ni  X \rightarrow U_{\sigma } \pi _{2} \begin{pmatrix} X \cr (I_{k} - X^{*}X)^{1/2} \end{pmatrix} \in  \Omega _{\sigma}$$ 
	satisfies $\kappa _{\sigma } \circ \psi _{\sigma } = {\rm id}$ and $\psi _{\sigma } \circ \kappa _{\sigma } = {\rm id}$. 
\end{prop}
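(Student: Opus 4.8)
The plan is to verify both identities by unwinding the definitions of $\psi_{\sigma}$ and $\kappa_{\sigma}$; no new machinery beyond what has already been set up is needed. The four facts I will use are: (i) for $X \in B(n-k,k)$ the columns of $\begin{pmatrix} X \cr (I_k-X^*X)^{1/2}\end{pmatrix}$ form an orthonormal frame, since its Gram matrix is $X^*X+(I_k-X^*X)=I_k$; (ii) $U_{\sigma}$ acts on an $n\times k$ matrix by the row permutation sending the $\sigma(j)$-th row of $U_{\sigma}M$ to the $j$-th row of $M$, so that for a frame $F$ the bottom $k\times k$ block of $F_{\sigma}=U_{\sigma}^{-1}F$ is precisely $M(F)_{\sigma(n-k+1),\ldots,\sigma(n)}=Y_{\sigma}$; (iii) positive square roots and polar decompositions are unique; and (iv) right multiplication of an orthonormal frame by $U\in U(k)$ leaves its column span, hence its image under $\pi_{2}$, unchanged. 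A preliminary observation is that $\kappa_{\sigma}$ really does land in $\Omega_{\sigma}$: by (ii) the rows $\sigma(n-k+1),\ldots,\sigma(n)$ of $U_{\sigma}\begin{pmatrix} X \cr (I_k-X^*X)^{1/2}\end{pmatrix}$ make up the matrix $(I_k-X^*X)^{1/2}$, which is invertible because $X^*X<I_k$.

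For $\psi_{\sigma}\circ\kappa_{\sigma}=\mathrm{id}$, take $X\in B(n-k,k)$. Since $\psi_{\sigma}$ is independent of the chosen representative frame (established above), I evaluate it on the convenient representative $F=U_{\sigma}\begin{pmatrix} X \cr (I_k-X^*X)^{1/2}\end{pmatrix}$, which by (i) and the preliminary observation lies in $\tilde{\Omega}_{\sigma}$ and satisfies $\pi_{2}(F)=\kappa_{\sigma}(X)$. Then $F_{\sigma}=\begin{pmatrix} X \cr (I_k-X^*X)^{1/2}\end{pmatrix}$, so $X_{\sigma}=X$ and $Y_{\sigma}=(I_k-X^*X)^{1/2}$ is already positive; its polar factor is $I_k$, whence $\psi_{\sigma}(\kappa_{\sigma}(X))=X_{\sigma}\cdot I_k=X$.

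For $\kappa_{\sigma}\circ\psi_{\sigma}=\mathrm{id}$, take $W=\pi_{2}(F)\in\Omega_{\sigma}$ with $F_{\sigma}=\begin{pmatrix} X_{\sigma} \cr Y_{\sigma}\end{pmatrix}$, $Y_{\sigma}$ invertible, and let $U\in U(k)$ be the polar factor, so $Y_{\sigma}U=\abs{Y_{\sigma}^{*}}$ and $\psi_{\sigma}(W)=X_{\sigma}U$. Orthonormality of the columns of $F_{\sigma}$ gives $X_{\sigma}^{*}X_{\sigma}+Y_{\sigma}^{*}Y_{\sigma}=I_k$, hence
$$ I_k-(X_{\sigma}U)^{*}(X_{\sigma}U)=U^{*}Y_{\sigma}^{*}Y_{\sigma}U=(U^{*}Y_{\sigma}^{*})(Y_{\sigma}U)=\abs{Y_{\sigma}^{*}}^{2}, $$
since $U^{*}Y_{\sigma}^{*}=(Y_{\sigma}U)^{*}=\abs{Y_{\sigma}^{*}}$ is positive and self-adjoint; by (iii) the positive square root of the left side is $\abs{Y_{\sigma}^{*}}=Y_{\sigma}U$. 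Therefore, using that $\begin{pmatrix} X_{\sigma}U \cr Y_{\sigma}U\end{pmatrix}=F_{\sigma}U$ and (iv),
$$ \kappa_{\sigma}(\psi_{\sigma}(W))=U_{\sigma}\,\pi_{2}\begin{pmatrix} X_{\sigma}U \cr Y_{\sigma}U\end{pmatrix}=U_{\sigma}\,\pi_{2}(F_{\sigma})=\pi_{2}(U_{\sigma}F_{\sigma})=\pi_{2}(F)=W. $$

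I do not expect a genuine obstacle here. The only points that need care are the bookkeeping in fact (ii) — making sure the "bottom block'' used in the definition of $\psi_{\sigma}$ really is $M(F)_{\sigma(n-k+1),\ldots,\sigma(n)}$ and not some other arrangement of rows — and keeping the polar factor consistently on the right throughout, so that the two identities $Y_{\sigma}U=\abs{Y_{\sigma}^{*}}$ and $U^{*}Y_{\sigma}^{*}=\abs{Y_{\sigma}^{*}}$ are invoked in the correct places. The well-definedness of $\psi_{\sigma}$ on $\Omega_{\sigma}$, already in hand, is what makes the identity $\psi_{\sigma}\circ\kappa_{\sigma}=\mathrm{id}$ immediate, since it lets me pick the representative frame that trivializes the polar decomposition.
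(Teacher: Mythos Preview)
Your proof is correct and is essentially the detailed verification that the paper omits: the paper's own proof reads in its entirety ``This is obvious.'' Your unwinding of the definitions---choosing the canonical representative frame to trivialize the polar factor for $\psi_{\sigma}\circ\kappa_{\sigma}$, and using orthonormality plus uniqueness of positive square roots to identify $(I_k-(X_{\sigma}U)^*(X_{\sigma}U))^{1/2}$ with $Y_{\sigma}U$ for $\kappa_{\sigma}\circ\psi_{\sigma}$---is exactly the computation one has to do, and there is no alternative route here.
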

\begin{proof}  This is obvious.
\end{proof}

It is easily seen that for any $F, F^{\prime } \in  S(k, \mathbb C^{n})$ there exists $U \in U(n)$ such that $F = UF^{\prime}$, that is $U(n)$ acts transitively on $S(k, \mathbb C^{n})$.  Let
$x = \begin{pmatrix} e_{n-k+1}, & e_{n-k+2}, & \ldots , & e_{n} \end{pmatrix} \in  S(k, \mathbb C^{n})$.
Then the isotropy subgroup of $U(n)$ at $x$ is $U(n-k) \times \{I_k\}$.

Let $F, F^{\prime } \in  S(k, \mathbb C^{n})$, and suppose $\pi _{2}(F) = \pi _{2}(F^{\prime })$.  Then there
exists $Q \in  U(k)$ such that $F = F^{\prime }Q$.  Let $U \in  U(n)$.  Then $UF =
UF^{\prime }Q$, i.e.,  $\pi _{2}(UF) = \pi _{2}(UF^{\prime })$.  This shows that $U(n)$ acts on
$G(k, \mathbb C^{n}) = S(k, \mathbb C^{n})/U(k)$ by
$$      U \pi _{2}(F) = \pi _{2}(UF).$$
Since $U(n)$ acts on $S(k, \mathbb C^{n})$ transitively, $U(n)$ acts on $G(k, \mathbb C^{n}) = \pi _2 (S(k, \mathbb C^{n}))$ transitively.  {This shows that $G(k, \mathbb C^{n})$ is a homogeneous space of $U(n)$ (see \cite{Ma72}).}
Let $y \in  G(k, \mathbb C^{n})$ be a $k$ dimensional subspace of $\mathbb C^{n}$
spanned by the vectors $(e_{n-k+1}, e_{n-k+2}, \ldots , e_{n})$.  The isotropy subgroup of $U(n)$ at 
$y$ is $U(n-k) \times  U(k)$ and thus we have
$$      U(n)/(U(n-k) \times  U(k)) \cong  G(k, \mathbb C^{n}).$$
For
$$    x = \begin{pmatrix} e_{n-k+1}, & e_{n-k+2}, & \ldots , & e_{n} \end{pmatrix} = \begin{pmatrix} O \cr I_{k} \end{pmatrix} \in  S(k, \mathbb C^{n})$$ 
and
$$    g = \begin{pmatrix} W & X \cr V & Y \end{pmatrix} \in  U(n),$$ 
denote by $F = \begin{pmatrix} X \cr Y \end{pmatrix}$  the last $k$ colums of $g$ with   a $k \times  k$ matrix $Y$.
Now define the mapping $\pi _{1}: U(n) \rightarrow  S(k, \mathbb C^{n})$ by
\beq \label{gr:pi1}
\pi _{1}(g) = gx = \begin{pmatrix} W & X \cr V & Y \end{pmatrix}
\begin{pmatrix} O \cr I_{k} \end{pmatrix}
= \begin{pmatrix} X \cr Y \end{pmatrix} \in  S(k, \mathbb C^{n}). 
\eeq

Suppose that $Y$ is a regular $k \times  k$ matrix, i.e., $F \in  \tilde{\Omega } _{e}$ (= $\tilde{\Omega } _{\sigma }$ for $\sigma 
= e$ the identity permutation).  Then there is a unique $Q \in  U(k)$
such that $YQ = \vert Y^{*}\vert $.  Put 
$$    \begin{pmatrix} X^{\prime } \cr Y^{\prime } \end{pmatrix} 
= \begin{pmatrix} XQ \cr \vert Y^{*}\vert  \end{pmatrix}. $$ 
Then we have
$$    \pi _{2} \begin{pmatrix} X \cr Y \end{pmatrix} = \pi _{2} \begin{pmatrix} X^{\prime } \cr Y^{\prime } \end{pmatrix}.$$ 
Let
\beq \label{gr:gWX}
g^{\prime } = W(X^{\prime })
= \begin{pmatrix} (I - X^{\prime }X^{\prime *})^{1/2} & X^{\prime } \cr -X^{\prime *} & Y^{\prime } 
\end{pmatrix}. 
\eeq   
Then $g^{\prime } \in U(n)$ by Proposition \ref{li:WX} and we have
$$    \pi _{1}(g^{\prime }) = \begin{pmatrix} (I - X^{\prime }X^{\prime *})^{1/2} & X^{\prime } \cr -X^{\prime *} & Y^{\prime }
\end{pmatrix} x 
= \begin{pmatrix} X^{\prime } \cr Y^{\prime } \end{pmatrix} .$$ 

For 
\beq  \label{gr:pi}
\pi = \pi _{2}\circ \pi _{1}.
\eeq 
we have
$$      \pi (g) = \pi (g^{\prime }).$$ 
\begin{defn}
	A mappping $\iota : G(k,\mathbb C^{n}) \rightarrow U(n)$ is a section of $U(n)$ on $G(k,\mathbb C^{n})$ for $\pi : U(n) \rightarrow G(k,\mathbb C^{n})$ if it satisfies
	\beq \label{gr:section}
	\pi (\iota (x)) = x  \quad \textrm{for all}\; x \in G(k,\C^n). 
	\eeq
	If $\iota $ is defined only on a subset $\Omega \subset G(k,\mathbb C^{n})$ and satisfies (\ref{gr:section}) there, $\iota $ is called a local section. 
\end{defn}
Thus, by (\ref{gr:gWX}), we have constructed a local section:
\begin{prop} \label{gr:lambda}
	Let
	\beq  \label{gr:WX}
	W(X) = \begin{pmatrix} (I_{n-k} - XX^{*})^{1/2} & X \cr -X^{*} & (I_{k} -
		X^{*}X)^{1/2} \end{pmatrix} 
	\eeq
	for $X \in B(n-k, k)$.
	Then the mapping
	\beq \label{gr:local section} 
	\iota _{e} = W \circ \psi _{e}: \Omega _{e} \ni  \pi (g) \rightarrow  g^{\prime } \in  U(n).
	\eeq  	
	gives a local section of $U(n)$ on $\Omega _{e}$ for $\pi : U(n) \rightarrow G(k, \C^n)$.
\end{prop}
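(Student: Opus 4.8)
The plan is to verify the two assertions packaged in the statement: that $\iota_{e}=W\circ\psi_{e}$ is a well-defined map $\Omega_{e}\to U(n)$ sending a coset $\pi(g)$ to the matrix $g^{\prime}$ of (\ref{gr:gWX}), and that it satisfies (\ref{gr:section}) on $\Omega_{e}$. Almost everything needed has been assembled in the preceding paragraphs, so the argument is mostly a matter of composing the pieces in the right order.

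First I would record that $W(X)\in U(n)$ for every $X\in B(n-k,k)$ by Proposition \ref{li:WX}, so $W$ is a genuine map $B(n-k,k)\to U(n)$; since $\psi_{e}:\Omega_{e}\to B(n-k,k)$ is already known to be well-defined (it was shown above that $\pi_{2}(F)=\pi_{2}(F^{\prime})$ forces $\psi_{e}(F)=\psi_{e}(F^{\prime})$), the composite $\iota_{e}$ is a well-defined map $\Omega_{e}\to U(n)$. Next I would identify its value: take $g\in U(n)$ with $\pi(g)\in\Omega_{e}$ and write $\pi_{1}(g)=\begin{pmatrix} X\cr Y\end{pmatrix}$ for the last $k$ columns of $g$, so that membership in $\Omega_{e}$ forces $Y$ regular. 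With $Q\in U(k)$ the unique unitary with $YQ=\abs{Y^{*}}$, and $X^{\prime}=XQ$, $Y^{\prime}=\abs{Y^{*}}$, one has $\psi_{e}(\pi(g))=X^{\prime}$, hence $\iota_{e}(\pi(g))=W(X^{\prime})$. Here one should also check $X^{\prime}\in B(n-k,k)$: orthonormality of the columns of $\begin{pmatrix} X\cr Y\end{pmatrix}$ gives $X^{*}X+Y^{*}Y=I_{k}$, so $X^{\prime *}X^{\prime}=Q^{*}X^{*}XQ=I_{k}-Q^{*}Y^{*}YQ=I_{k}-\abs{Y^{*}}^{2}=I_{k}-YY^{*}<I_{k}$ because $Y$ is invertible; moreover $X^{\prime}X^{\prime *}=XX^{*}$ and $(I_{k}-X^{\prime *}X^{\prime})^{1/2}=\abs{Y^{*}}=Y^{\prime}$, so $W(X^{\prime})$ as defined in (\ref{gr:WX}) is literally the matrix $g^{\prime}$ of (\ref{gr:gWX}).

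It then remains to verify $\pi(\iota_{e}(\pi(g)))=\pi(g)$, i.e.\ $\pi(g^{\prime})=\pi(g)$. Now $\pi_{1}(g^{\prime})=g^{\prime}x$ is the block of the last $k$ columns of $W(X^{\prime})$, namely $\begin{pmatrix} X^{\prime}\cr Y^{\prime}\end{pmatrix}=\begin{pmatrix} X\cr Y\end{pmatrix}Q$; since $Q\in U(k)$ the two frames span the same $k$-plane, so $\pi_{2}\begin{pmatrix} X^{\prime}\cr Y^{\prime}\end{pmatrix}=\pi_{2}\begin{pmatrix} X\cr Y\end{pmatrix}$. Therefore
$$\pi(g^{\prime})=\pi_{2}(\pi_{1}(g^{\prime}))=\pi_{2}\begin{pmatrix} X^{\prime}\cr Y^{\prime}\end{pmatrix}=\pi_{2}\begin{pmatrix} X\cr Y\end{pmatrix}=\pi_{2}(\pi_{1}(g))=\pi(g),$$
which is exactly (\ref{gr:section}) restricted to $\Omega_{e}$. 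Finally, every point of $\Omega_{e}=\pi_{2}(\tilde{\Omega}_{e})$ does arise as $\pi(g)$ with $Y$ regular: given a frame $F\in\tilde{\Omega}_{e}$, extend it to an orthonormal basis of $\C^{n}$ to produce $g\in U(n)$ with $\pi_{1}(g)=F$; hence $\iota_{e}$ is genuinely defined on all of $\Omega_{e}$.

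I do not anticipate a real obstacle; there is no deep step, only bookkeeping. The two points that call for a little care are getting straight which polar factor appears where (the unique $Q$ with $YQ=\abs{Y^{*}}$, matching (\ref{gr:polar})) and the verification that $X^{\prime}\in B(n-k,k)$, which is what makes $W(X^{\prime})$ meaningful via Proposition \ref{li:WX}; both are immediate from orthonormality of the frame and uniqueness of the polar decomposition already invoked above.
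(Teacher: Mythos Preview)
Your proof is correct and follows the same approach as the paper: the paper simply cites Proposition~\ref{li:WX} for the unitarity of $W(X)$ and declares the remainder ``straightforward,'' while you have carefully written out exactly those straightforward verifications (well-definedness of $\psi_{e}$, the identification $\iota_{e}(\pi(g))=g^{\prime}$ via the polar decomposition, the check $X^{\prime}\in B(n-k,k)$, and the section identity $\pi(g^{\prime})=\pi(g)$). There is no difference in strategy, only in level of detail.
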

\begin{proof} In Proposition \ref{li:WX} it is shown in detail that the matrix $W(X)$ in (\ref{gr:WX}) is unitary. The proof of the remaining part of the statement is straight forward.	
\end{proof}

From now on, we use the notation $\tilde{\Omega } $ for $\tilde{\Omega } _{e}$, and $\Omega $ for
$\Omega _{e}$, where $e$ is the identity permutation.
\begin{prop} \label{gr:Omegasigma}
	\beq \label{gr:Usigma}
	\tilde{\Omega } _{\sigma } = U_{\sigma }\tilde{\Omega }  , \  {\rm and \, }\,  \Omega _{\sigma } = U_{\sigma }\Omega .
	\eeq
\end{prop}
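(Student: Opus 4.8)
The plan is to reduce both identities to bookkeeping with the row-reordering operator $U_\sigma$, using two facts already established: first, that for $F \in S(k,\mathbb C^{n})$ one has $M(F)_\sigma = U_\sigma^{-1}M(F)$, so that, under the identification of $F$ with $M(F)$, $F_\sigma = U_\sigma^{-1}F$; and second, that $U(n)$ acts on $G(k,\mathbb C^{n}) = S(k,\mathbb C^{n})/U(k)$ by $U\pi_2(F) = \pi_2(UF)$. Since $U_\sigma$ is unitary it carries orthonormal $k$-frames to orthonormal $k$-frames, so $U_\sigma$ and $U_\sigma^{-1}$ restrict to bijections of $S(k,\mathbb C^{n})$; this is what makes the asserted set equalities meaningful, i.e.\ guarantees that $U_\sigma\tilde{\Omega}$ and $U_\sigma\Omega$ are genuinely subsets of $S(k,\mathbb C^{n})$ and $G(k,\mathbb C^{n})$.

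For the first identity I would fix $F \in S(k,\mathbb C^{n})$. By the definition of the reordered frame, the $i$-th row of $M(F_\sigma) = M(F)_\sigma$ is the $\sigma(i)$-th row of $M(F)$; in particular the $k\times k$ submatrix $M(F_\sigma)_{n-k+1,\dots,n}$ formed by the last $k$ rows of $M(F_\sigma)$ coincides with $M(F)_{\sigma(n-k+1),\dots,\sigma(n)}$. Hence
\begin{align*}
F \in \tilde{\Omega}_\sigma
&\iff \det M(F)_{\sigma(n-k+1),\dots,\sigma(n)} \neq 0 \\
&\iff \det M(F_\sigma)_{n-k+1,\dots,n} \neq 0
\iff F_\sigma \in \tilde{\Omega}.
\end{align*}
Combining this with $F_\sigma = U_\sigma^{-1}F$ yields $F \in \tilde{\Omega}_\sigma \iff U_\sigma^{-1}F \in \tilde{\Omega} \iff F \in U_\sigma\tilde{\Omega}$, i.e.\ $\tilde{\Omega}_\sigma = U_\sigma\tilde{\Omega}$.

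For the second identity I would invoke (\ref{gr:piOmegasigma}), which gives $\Omega_\sigma = \pi_2(\tilde{\Omega}_\sigma)$ and, for $\sigma = e$, $\Omega = \pi_2(\tilde{\Omega})$. Using the first identity together with the action formula $U\pi_2(F)=\pi_2(UF)$ applied to $U_\sigma \in U(n)$,
\begin{align*}
\Omega_\sigma = \pi_2(\tilde{\Omega}_\sigma) = \pi_2(U_\sigma\tilde{\Omega})
&= \{\pi_2(U_\sigma F) : F \in \tilde{\Omega}\} \\
&= U_\sigma\,\pi_2(\tilde{\Omega}) = U_\sigma\Omega .
\end{align*}

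I do not expect any genuine obstacle here, since the statement carries no analytic content. The only point that needs care is the indexing in the first step: one must check that selecting rows $\sigma(n-k+1),\dots,\sigma(n)$ of $M(F)$ is the same as selecting the last $k$ rows of the reordered matrix $M(F)_\sigma$, and this is immediate from the definition of $M(F)_\sigma$ and does not rely on any finer compatibility of $\sigma$ with the block structure of $S_{k,n}$. The harmless observation that $U_\sigma^{\pm 1}$ preserve $S(k,\mathbb C^{n})$ should be recorded explicitly so that the two set-theoretic equalities are well posed.
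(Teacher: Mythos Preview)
Your proof is correct and follows essentially the same route as the paper: both arguments use $F_\sigma = U_\sigma^{-1}F$ to translate the defining determinant condition for $\tilde{\Omega}_\sigma$ into that for $\tilde{\Omega}$, and then push the first identity through $\pi_2$ via the action $U\pi_2(F)=\pi_2(UF)$ together with (\ref{gr:piOmegasigma}). The paper checks the two inclusions separately while you package them into a single chain of equivalences, but the content is identical.
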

\begin{proof}  Let $F \in  \tilde{\Omega } _{\sigma }$.  Then $U_{\sigma }^{-1}F = F_{\sigma } = \begin{pmatrix} X_{\sigma } \cr Y_{\sigma }
	\end{pmatrix}$ belongs to $\tilde{\Omega } $ since $Y_{\sigma }$ is regular.  Let $F \in  \tilde{\Omega } $
	and $F^{\prime } = U_{\sigma }F$.  Then $F^{\prime } \in  \tilde{\Omega } _{\sigma }$.  In fact, $F^{\prime }_{\sigma } = U_{\sigma }^{-1}F^{\prime } = F$ and $Y^{\prime }_{\sigma }
	= Y$ is regular.  Thus we have $\tilde{\Omega } _{\sigma } = U_{\sigma }\tilde{\Omega } $, and by (\ref{gr:piOmegasigma})
	$$      \Omega _{\sigma } = \pi _{2}(\tilde{\Omega } _{\sigma }) = \pi _{2}(U_{\sigma }\tilde{\Omega } ) = U_{\sigma } \pi _{2}(\tilde{\Omega } ) = U_{\sigma }\Omega .$$
\end{proof}
In Section \ref{liealg} it is explained why we consider $W(X)$ of (\ref{gr:WX}).
Now we extend Proposition \ref{gr:lambda} to $\Omega _{\sigma }$.
\begin{prop} \label{gr:lambdasigma}
	The mapping\newline
	$$    W_{\sigma } = U_{\sigma } W : B(n-k, k) \ni  X \rightarrow  U_{\sigma }W(X)$$ 
	satisfies $\psi _{\sigma } \circ \pi \circ W_{\sigma }(X) = X$.
\end{prop}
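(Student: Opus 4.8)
The plan is to unwind the composition $\psi_\sigma\circ\pi\circ W_\sigma$ step by step, using the explicit form (\ref{gr:WX}) of $W(X)$, Proposition \ref{gr:Omegasigma}, and the uniqueness of the polar decomposition.

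First I would compute $\pi_1(W_\sigma(X))$. By (\ref{gr:pi1}) one has $\pi_1(g)=gx$ with $x=\begin{pmatrix} O \cr I_k\end{pmatrix}$, so $\pi_1(g)$ is simply the matrix built from the last $k$ columns of $g$. Since $W_\sigma(X)=U_\sigma W(X)$ and, by (\ref{gr:WX}), the last $k$ columns of $W(X)$ form the matrix $\begin{pmatrix} X \cr (I_k-X^*X)^{1/2}\end{pmatrix}$, we obtain $\pi_1(W_\sigma(X))=U_\sigma\begin{pmatrix} X \cr (I_k-X^*X)^{1/2}\end{pmatrix}=:F$.

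Next I would check that $F\in\tilde\Omega_\sigma$, so that $\pi(W_\sigma(X))=\pi_2(F)$ lies in $\Omega_\sigma$, the domain of $\psi_\sigma$. Because $X\in B(n-k,k)$ means $X^*X<I_k$, the matrix $I_k-X^*X$ is positive definite, hence $(I_k-X^*X)^{1/2}$ is positive definite and in particular regular. Thus $F_e:=\begin{pmatrix} X \cr (I_k-X^*X)^{1/2}\end{pmatrix}\in\tilde\Omega=\tilde\Omega_e$, and by Proposition \ref{gr:Omegasigma} we get $F=U_\sigma F_e\in U_\sigma\tilde\Omega=\tilde\Omega_\sigma$. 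Consequently $\pi(W_\sigma(X))=\pi_2(F)\in\pi_2(\tilde\Omega_\sigma)=\Omega_\sigma$, so $\psi_\sigma$ is applicable to this point.

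Finally I would apply $\psi_\sigma$. In the notation of the definition of $\psi_\sigma$, one has $F_\sigma=U_\sigma^{-1}F=F_e=\begin{pmatrix} X_\sigma \cr Y_\sigma\end{pmatrix}$, i.e.\ $X_\sigma=X$ and $Y_\sigma=(I_k-X^*X)^{1/2}$. Since $Y_\sigma$ is positive and regular, $Y_\sigma^*=Y_\sigma$ and $\abs{Y_\sigma^*}=Y_\sigma$; the uniqueness in the polar decomposition (\ref{gr:polar}) then forces the unitary $U\in U(k)$ with $Y_\sigma U=\abs{Y_\sigma^*}$ to be $U=I_k$. Hence $\psi_\sigma(\pi(W_\sigma(X)))=X_\sigma U=X I_k=X$, which is the assertion. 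The argument is essentially bookkeeping: the only substantive ingredients are the explicit shape of $W(X)$ (which yields $\pi_1(W_\sigma(X))$ at once), the fact that $X\in B(n-k,k)$ makes $(I_k-X^*X)^{1/2}$ regular—precisely what is needed for $F$ to lie in $\tilde\Omega_\sigma$ so that $\psi_\sigma$ is defined there—and the observation that the unitary factor in the polar decomposition of a positive regular operator is the identity. The only point requiring attention is keeping the $U_\sigma$-conjugation conventions straight, which is exactly what Proposition \ref{gr:Omegasigma} settles, so I do not anticipate a genuine obstacle.
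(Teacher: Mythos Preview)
Your proof is correct and follows essentially the same route as the paper: compute $\pi_1(W_\sigma(X))=U_\sigma\begin{pmatrix} X \cr (I_k-X^*X)^{1/2}\end{pmatrix}$, then undo the $U_\sigma$ to reduce to the $\sigma=e$ situation and read off $X$. You simply make explicit two points the paper leaves implicit---that $F\in\tilde\Omega_\sigma$ so $\psi_\sigma$ is defined, and that the polar-decomposition unitary for the positive matrix $(I_k-X^*X)^{1/2}$ is $I_k$---but the argument is the same.
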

\begin{proof}  For $X \in  B(n-k, k)$  we find
	$$    \pi _{1}\circ W _{\sigma }(X) = \pi _{1}(U_{\sigma }W(X)) = U_{\sigma } \begin{pmatrix} X \cr (I_{k} -
	X^{*}X)^{1/2} \end{pmatrix} = U_{\sigma }F = F^{\prime },$$ 
	and $\psi _{\sigma }\circ \pi _{2}(F^{\prime }) = \psi _{e} U_{\sigma }^{-1}\{ F^{\prime }U; U\in U(k)\}  = \psi _{e} \{ FU; U\in U(k)\}  = X$.  Thus we
	have $\psi _{\sigma } \circ \pi \circ W _{\sigma }(X) = \psi _{\sigma } \circ \pi _{2}\circ \pi _{1}\circ W _{\sigma }(X) = X$. \end{proof}
\begin{cor}
	The mapping 
	$$      \iota _{\sigma } = U_{\sigma } W \circ \psi _{\sigma }: \Omega _{\sigma } \ni x  \rightarrow  W(\psi _{\sigma }(x)) \in  U(n)$$
	gives a local section of $U(n)$ on $\Omega _{\sigma } \subset  G(k, \mathbb C^{n})$ for $\pi : U(n) \rightarrow  G(k, \mathbb C^{n})$.
\end{cor}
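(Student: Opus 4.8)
The plan is to deduce the corollary formally from Propositions \ref{gr:kappa}, \ref{gr:lambda}, \ref{gr:Omegasigma} and \ref{gr:lambdasigma}, all of which already carry the analytic content; what remains is bookkeeping. By its defining formula $\iota_\sigma = U_\sigma W \circ \psi_\sigma = W_\sigma \circ \psi_\sigma$, where $W_\sigma = U_\sigma W$ as in Proposition \ref{gr:lambdasigma}, and by Proposition \ref{gr:kappa} the map $\psi_\sigma : \Omega_\sigma \to B(n-k,k)$ is a bijection with two-sided inverse $\kappa_\sigma$. Hence it suffices to fix $x \in \Omega_\sigma$, set $X := \psi_\sigma(x) \in B(n-k,k)$ (so that $x = \kappa_\sigma(X)$ and $\iota_\sigma(x) = W_\sigma(X)$), and verify $\pi(W_\sigma(X)) = x$, which is precisely (\ref{gr:section}) restricted to $\Omega_\sigma$.

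First I would check that $\pi(W_\sigma(X))$ lands back in the chart domain $\Omega_\sigma$, so that the inversion of $\psi_\sigma$ may legitimately be applied. This is read off from the computation in the proof of Proposition \ref{gr:lambdasigma}: there $\pi_1(W_\sigma(X)) = U_\sigma F$ with $F = \begin{pmatrix} X \cr (I_k - X^*X)^{1/2} \end{pmatrix}$, and since $X \in B(n-k,k)$ forces $X^*X < I_k$, the lower $k\times k$ block $(I_k - X^*X)^{1/2}$ is regular; therefore $U_\sigma F \in \tilde{\Omega}_\sigma$ and $\pi(W_\sigma(X)) = \pi_2(U_\sigma F) \in \pi_2(\tilde{\Omega}_\sigma) = \Omega_\sigma$ by (\ref{gr:piOmegasigma}).

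Now apply $\psi_\sigma$. Proposition \ref{gr:lambdasigma} gives $\psi_\sigma(\pi(W_\sigma(X))) = X$, and since $\pi(W_\sigma(X)) \in \Omega_\sigma$, where $\psi_\sigma$ is invertible with inverse $\kappa_\sigma$, applying $\kappa_\sigma$ to both sides yields $\pi(W_\sigma(X)) = \kappa_\sigma(X) = \kappa_\sigma(\psi_\sigma(x)) = x$. Thus $\pi(\iota_\sigma(x)) = \pi(W_\sigma(X)) = x$ for all $x \in \Omega_\sigma$, i.e., $\iota_\sigma$ satisfies (\ref{gr:section}) on $\Omega_\sigma$; moreover $\iota_\sigma$ indeed maps into $U(n)$ because $W(X) \in U(n)$ by Proposition \ref{li:WX} and $U_\sigma \in U(n)$, so $W_\sigma(X) = U_\sigma W(X) \in U(n)$. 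Hence $\iota_\sigma$ is a local section of $U(n)$ on $\Omega_\sigma$, which is the assertion.

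I expect essentially no genuine obstacle: all the hard points — unitarity of $W(X)$, the polar-decomposition bookkeeping underlying $\psi_\sigma$ and $\kappa_\sigma$, and the identification $G(k,\C^n) \cong S(k,\C^n)/U(k)$ — have already been disposed of. The only step that needs a moment's care is the one above confirming $\pi(W_\sigma(X)) \in \Omega_\sigma$, so that Proposition \ref{gr:kappa} applies. One can even bypass that by the equivariance route: from $\pi_1(Ug) = U\pi_1(g)$ and $\pi_2(UF) = U\pi_2(F)$ one gets $\pi(Ug) = U\pi(g)$, while $\Omega_\sigma = U_\sigma\Omega$ (Proposition \ref{gr:Omegasigma}) together with the identity $\psi_\sigma = \psi_e \circ U_\sigma^{-1}$ on $\Omega_\sigma$ give $\iota_\sigma(x) = U_\sigma\,\iota_e(U_\sigma^{-1}x)$ for $x \in \Omega_\sigma$; then $\pi(\iota_\sigma(x)) = U_\sigma\,\pi(\iota_e(U_\sigma^{-1}x)) = U_\sigma U_\sigma^{-1}x = x$ by Proposition \ref{gr:lambda}, since $U_\sigma^{-1}x \in \Omega_e$.
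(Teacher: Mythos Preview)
Your argument is correct and follows exactly the paper's approach: the paper's proof is the one-liner ``Proposition \ref{gr:kappa} shows that $\psi_\sigma$ is bijective; by Proposition \ref{gr:lambdasigma} we conclude,'' and your write-up is a careful unpacking of precisely this, including the check that $\pi(W_\sigma(X))\in\Omega_\sigma$ so that the bijectivity of $\psi_\sigma$ can be invoked. The alternative equivariance route you sketch at the end is also valid and essentially equivalent.
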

\begin{proof}  Proposition \ref{gr:kappa}  shows that $\psi _{\sigma }$ is bijective; by  Proposition \ref{gr:lambdasigma} we conclude. \end{proof}
\vskip 12pt \noindent
Let $S_{n}$ be the set of all permutations of $\{ 1, 2, \ldots , n\} $, and
define an order on $S_{n}$ as follows (lexicographic ordering).  Let
$\sigma , \sigma ^{\prime } \in  S_{n}$ then $\sigma  < \sigma ^{\prime }$ if there exists $s \in  \{ 1, 2, \ldots , n\} $
such that $\sigma (j) = \sigma ^{\prime }(j)$ $(j = 1, \ldots , s-1)$ and $\sigma (s) < \sigma ^{\prime }(s)$.
Then $S_{k, n} \subset  S_{n}$ is a well-ordered set, and $S_{k, n} = \{ \sigma _{1} < \ldots <
\sigma _{m}\} $ for $m = {n \choose k}$.
\vskip 12pt \noindent
For 
\beq  \label{gr:Vj}
V_{j} = \Omega _{\sigma _{j}} \backslash  \cup _{i=1}^{j-1} \Omega _{\sigma _{i}}
\eeq
one finds that
$$      G(k, \mathbb C^{n}) = \cup _{j=1}^{m} V_{j}, \  m = {n \choose k}$$
is a disjoint union.  We can construct a section $\iota : G(k, \mathbb C^{n}) \rightarrow  
U(n)$ for $\pi : U(n) \rightarrow  G(k, \mathbb C^{n})$ as follows.  
\begin{defn}
	Let $x \in  G(k, \mathbb C^{n})$.  Define the section $\iota (x)$ by
	$$      \iota (x) = \iota _{\sigma _{j}}(x) \  {\rm if \, }\,  x \in  V_{j}.$$
\end{defn}

\begin{prop}  \label{gr:uniqueexpression}
	Let $\iota $ be a section of $U(n)$ on $G(k, \mathbb C^{n})$ for $\pi $.  Then for any $g \in  U(n)$, there is a unique $h$ such that   
	$$      g = \iota (\pi (g)) h, \  h \in  U(n-k)\times U(k) .$$
\end{prop}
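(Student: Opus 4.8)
The plan is to establish existence and uniqueness of the decomposition $g = \iota(\pi(g))\,h$ with $h \in U(n-k)\times U(k)$ separately, using the fact that $U(n-k)\times U(k)$ is precisely the isotropy subgroup of $\pi$ at the base point $y$, as recorded earlier in the excerpt. First I would fix $g \in U(n)$ and set $x = \pi(g) \in G(k,\mathbb{C}^n)$. Since $\iota$ is a section for $\pi$, we have $\pi(\iota(x)) = x = \pi(g)$. The key observation is that two elements $g_1, g_2 \in U(n)$ satisfy $\pi(g_1) = \pi(g_2)$ if and only if $g_1^{-1}g_2$ lies in the isotropy subgroup of $U(n)$ at the base point $y$, which was identified in the excerpt as $U(n-k)\times U(k)$. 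Indeed, $\pi(g) = g\cdot y$ under the transitive action, so $\pi(g_1) = \pi(g_2)$ means $g_1 y = g_2 y$, hence $g_2^{-1}g_1 y = y$, i.e. $g_2^{-1}g_1 \in U(n-k)\times U(k)$.

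Applying this to $g_1 = \iota(\pi(g))$ and $g_2 = g$: since $\pi(\iota(\pi(g))) = \pi(g)$, we get $h := \iota(\pi(g))^{-1} g \in U(n-k)\times U(k)$, and then $g = \iota(\pi(g))\,h$ by construction. This proves existence. For uniqueness, suppose $g = \iota(\pi(g))\,h = \iota(\pi(g))\,h'$ with $h, h' \in U(n-k)\times U(k)$. Then $\iota(\pi(g))(h - h')$, more precisely $h'h^{-1}$, equals the identity after cancelling the invertible factor $\iota(\pi(g))$; so $h = h'$. Thus $h$ is uniquely determined, and explicitly $h = \iota(\pi(g))^{-1}g$.

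The only genuine point requiring care — and the step I expect to be the main obstacle to making fully rigorous — is the identification of "$\pi(g_1)=\pi(g_2)$'' with "$g_2^{-1}g_1$ in the isotropy group,'' which rests on the transitivity of the $U(n)$-action on $G(k,\mathbb{C}^n)$ and the computation of the isotropy subgroup at $y$; both of these are established in the discussion preceding the proposition (the action $U\pi_2(F) = \pi_2(UF)$ is transitive because $U(n)$ acts transitively on $S(k,\mathbb{C}^n)$, and the stabilizer of the span of $(e_{n-k+1},\dots,e_n)$ is $U(n-k)\times U(k)$). Once that structural fact is in hand, the argument is purely formal group-theoretic bookkeeping and needs no computation with the explicit charts or the matrix $W(X)$; in particular it is independent of which section $\iota$ (glued from the $\iota_{\sigma_j}$ on the pieces $V_j$) one has chosen.
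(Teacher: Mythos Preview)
Your proof is correct and follows essentially the same approach as the paper: both arguments set $g' = \iota(\pi(g))$, observe that $\pi(g) = \pi(g')$ forces $g$ and $g'$ to lie in the same $U(n-k)\times U(k)$-coset (the paper writes this directly as $\pi(g) = \{gh; h \in U(n-k)\times U(k)\}$, you phrase it via the isotropy subgroup at $y$), and then obtain uniqueness by cancelling the invertible factor $g'$. Your version is simply more explicit about why the fibers of $\pi$ are the cosets, but the logical skeleton is identical.
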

Proof.  Let $g^{\prime } = \iota (\pi (g))$.  Since $\pi (g) = \pi (g^{\prime })$ and
$$      \pi (g) = \{ gh; h \in  U(n-k)\times U(k)\} ,$$
there exists $h \in  U(n-k)\times U(k)$ such that $g = g^{\prime }h$.  If $g$ has two
expression $g = g^{\prime }h = g^{\prime }h^{\prime }$.  Then we have $h = g^{\prime -1}g = h^{\prime }$. $\Box $

Now we come to the parametrization of unitary matrices by the canonical coset decomposition.  
Symbolically the canonical coset decomposition is:
$$      U(n) = U(n)/(U(n-k_m) \times  U(k_m)) \cdot  U(n-k_m)/(U(n-k_{m-1} - k_{m}) \times  U(k_{m-1}))$$
$$      \cdot  \cdots \cdot  U(k_{1} + k_{2})/(U(k_{1}) \times  U(k_{2})) \cdot  (U(k_1) \times  \cdots \times  U(k_m)).$$
The following Proposition shows the precise meaning of the above formula.
\begin{prop}  \label{gr:canonicaldec}
	For any section $\iota _{j}$ of $U(n - k_{j+1} - \cdots - k_{m})$
	on $G(k_{j}, \mathbb C^{n-k_{j+1}- \cdots - k_{m}})$ for $\pi _{j}: U(n - k_{j+1} - \cdots -k_{m})
	\rightarrow  G(k_{j}, \mathbb C^{n-k_{j+1}- \cdots - k_{m}})$ there is a unique surjection 
	$$  f: U(n) \ni g \rightarrow  (z_{m}, z_{m-1}, \ldots , z_{2}) \in  G(k_{m},  \mathbb C^{n}) \times  G(k_{m-1}, \mathbb C^{n-k_{m}}) \times $$
	$$   \cdots \times  G(k_{2}, \mathbb C^{n-k_{3}- \cdots - k_{m}})$$
	and a unique $h \in  U(k_{1}) \times  \cdots \times 
	U(k_{m})$ such that
	$$  g = \iota _m (z_m) \begin{pmatrix} \iota _{m-1}(z_{m-1}) & 0 \cr 0 & I_{k_m} \end{pmatrix} 
	\cdots \begin{pmatrix} \iota _{2}(z_{2}) & 0 \cr 0 & I_{k_3 + \ldots + k_{m}}
	\end{pmatrix} h
	. $$ 
\end{prop}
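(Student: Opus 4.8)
The plan is induction on $m$, peeling off one Grassmannian factor at a time via Proposition \ref{gr:uniqueexpression}. For $m=1$ the assertion is the tautology $g=h\in U(k_1)=U(n)$; for $m=2$ it is Proposition \ref{gr:uniqueexpression} itself, since with $k=k_2$ and $n=k_1+k_2$ the section $\iota_2$ of $U(n)$ on $G(k_2,\mathbb C^n)$ gives, for every $g\in U(n)$, a unique $z_2:=\pi_2(g)$ and a unique $h\in U(n-k_2)\times U(k_2)=U(k_1)\times U(k_2)$ with $g=\iota_2(z_2)h$; put $f(g)=z_2$.

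Assume the statement for $m-1$ factors and let $g\in U(n)$ with $n=k_1+\cdots+k_m$. Applying Proposition \ref{gr:uniqueexpression} with $k=k_m$ to the section $\iota_m$ of $U(n)$ on $G(k_m,\mathbb C^n)$ gives a unique $z_m:=\pi_m(g)$ and a unique $h_m\in U(n-k_m)\times U(k_m)$ with $g=\iota_m(z_m)h_m$; write $h_m=\begin{pmatrix} g' & 0\\ 0 & u_m\end{pmatrix}$ with $g'\in U(n-k_m)$ and $u_m\in U(k_m)$. The crucial bookkeeping point is that $n-k_m=k_1+\cdots+k_{m-1}$ and $(n-k_m)-(k_{j+1}+\cdots+k_{m-1})=n-(k_{j+1}+\cdots+k_m)$ for $2\le j\le m-1$, so the given sections $\iota_2,\ldots,\iota_{m-1}$ are exactly the data needed to apply the inductive hypothesis to $g'\in U(n-k_m)$. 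This produces a unique tuple $(z_{m-1},\ldots,z_2)$ and a unique $h'=\mathrm{diag}(h_1,\ldots,h_{m-1})\in U(k_1)\times\cdots\times U(k_{m-1})$ with
$$g'=\iota_{m-1}(z_{m-1})\begin{pmatrix}\iota_{m-2}(z_{m-2}) & 0\\ 0 & I_{k_{m-1}}\end{pmatrix}\cdots\begin{pmatrix}\iota_2(z_2) & 0\\ 0 & I_{k_3+\cdots+k_{m-1}}\end{pmatrix}h'.$$
Substituting this into $g=\iota_m(z_m)\,\mathrm{diag}(g',u_m)$ and using $\mathrm{diag}(AB,I_{k_m})=\mathrm{diag}(A,I_{k_m})\,\mathrm{diag}(B,I_{k_m})$, $\mathrm{diag}(A,I_{k_m})\,\mathrm{diag}(I_{n-k_m},u_m)=\mathrm{diag}(A,u_m)$, and $\mathrm{diag}(\mathrm{diag}(\iota_j(z_j),I_{k_{j+1}+\cdots+k_{m-1}}),I_{k_m})=\mathrm{diag}(\iota_j(z_j),I_{k_{j+1}+\cdots+k_m})$, one obtains exactly the decomposition in the statement, with $h:=\mathrm{diag}(h',u_m)\in U(k_1)\times\cdots\times U(k_m)$. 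Set $f(g)=(z_m,z_{m-1},\ldots,z_2)$; then $f$ is surjective because, for any tuple, taking $h=I_n$ in the decomposition exhibits a preimage.

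For uniqueness, suppose $g$ is written as in the statement. Relative to the splitting $n=(n-k_m)+k_m$, each factor to the right of $\iota_m(z_m)$ — every $\mathrm{diag}(\iota_j(z_j),I_{k_{j+1}+\cdots+k_m})$ with $j\le m-1$ (note $k_1+\cdots+k_j\le n-k_m$, so this has the form $\mathrm{diag}(\,\cdot\,,I_{k_m})$), as well as $h=\mathrm{diag}(\,\cdot\,,u_m)$ — is block diagonal with unitary $k_m\times k_m$ bottom-right block, so their product $h_m$ lies in $U(n-k_m)\times U(k_m)$ and $g=\iota_m(z_m)h_m$. By Proposition \ref{gr:uniqueexpression}, $z_m=\pi_m(g)$ and $h_m$ are then the unique ones constructed above; hence $g'$ and $u_m$ are determined, the inductive hypothesis applied to $g'$ determines $(z_{m-1},\ldots,z_2)$ and $h'$, and so $h$ is determined. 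This closes the induction.

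The only step that is not purely formal is the index bookkeeping in the inductive step: verifying that the identity-block sizes $k_{j+1}+\cdots+k_{m-1}$ produced by the inductive hypothesis, augmented by the new $I_{k_m}$, are exactly the sizes $k_{j+1}+\cdots+k_m$ required in the statement, and that the sections supplied on $U(n)$ restrict to precisely the sections the hypothesis demands on $U(n-k_m)$. The rest is a routine blend of Proposition \ref{gr:uniqueexpression} with elementary block-matrix algebra.
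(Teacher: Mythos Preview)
Your proof is correct and follows essentially the same route as the paper: both peel off one Grassmannian factor at a time by repeatedly invoking Proposition~\ref{gr:uniqueexpression}. The paper unrolls the first two steps explicitly and then says ``continuing this procedure,'' while you package the same iteration as a formal induction on $m$; your uniqueness paragraph is also more explicit than the paper's, which simply asserts uniqueness at each step without separately arguing that any decomposition of the stated form must arise from the construction.
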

\begin{proof} According to Proposition \ref{gr:uniqueexpression} there exists a unique $H_{m} = (g_{m}, h_{m})
	\in  U(n-k_{m}) \times  U(k_{m})$ such that 
	$$   g = \iota _{m}(z_{m})H_{m} = \iota _{m}(z_{m})(g_{m}, h_{m}) = \iota _{m}(z_{m})
	\begin{pmatrix} g_{m} & 0 \cr 0 & h_{m} \end{pmatrix},  $$ 
	where $z_{m} = \pi _{m}(g)$. \newpage
	In the same way, we have a unique element $H_{m-1} = (g_{m-1}, h_{m-1}) \in  U(n-k_{m}-k_{m-1}) \times  U(k_{m-1})$
	such that 
	$$   g_{m} = \iota _{m-1}(z_{m-1})H_{m-1} = \iota _{m-1}(z_{m-1})(g_{m-1}, h_{m-1}) = \iota _{m-1}(z_{m-1})
	\begin{pmatrix} g_{m-1} & 0 \cr 0 & h_{m-1} \end{pmatrix},  $$ 
	for $z_{m-1} = \pi _{m-1}(g_{m})$, and
	$$    g = \iota _m (z_m) \begin{pmatrix} \iota _{m-1}(z_{m-1}) \begin{pmatrix} g_{m-1} & 0 \cr 0 & h _{m-1} \end{pmatrix} 
	& 0 \cr 0 & h_{m} \end{pmatrix} $$ 
	$$      = \iota _m (z_m) \begin{pmatrix} \iota _{m-1}(z_{m-1})
	& 0 \cr 0 & I_{k_m} \end{pmatrix}
	\begin{pmatrix} g_{m-1} & 0 & 0 \cr 0 & h _{m-1} & 0 \cr 0 & 0 & h _{m} \end{pmatrix} . $$ 
	Continuing this procedure, we arrive at
	$$  g = \iota _m (z_m) \begin{pmatrix} \iota _{m-1}(z_{m-1}) & 0 \cr 0 & I_{k_m} \end{pmatrix} 
	\cdots \begin{pmatrix} \iota _{2}(z_{2}) & 0 \cr 0 & I_{k_3 + \ldots + k_{m}}
	\end{pmatrix} 
	\begin{pmatrix} h_{1} & \cdots & 0 \cr \vdots & \ddots & \vdots \cr 0 & \cdots & h _{m} \end{pmatrix} . $$ 
	The relation
	$$   f(g^{\prime }) = (z_m, z_{m-1}, \ldots , z_2)$$
	for
	$$  (z_{m}, z_{m-1}, \ldots , z_{2}) \in  G(k_{m},  \mathbb C^{n}) \times  G(k_{m-1}, \mathbb C^{n-k_{m}}) \times \ldots \times  G(k_{2}, \mathbb C^{n-k_{3}- \ldots - k_{m}})$$
	and
	$$  g^{\prime } = \iota _m (z_m) \begin{pmatrix} \iota _{m-1}(z_{m-1}) & 0 \cr 0 & I_{k_m} \end{pmatrix} 
	\cdots \begin{pmatrix} \iota _{2}(z_{2}) & 0 \cr 0 & I_{k_3 + \ldots + k_{m}}
	\end{pmatrix}$$
	shows the surjectivity of $f$. 
\end{proof}

Degenerate density matrices with a diagonal matrix of eigenvalues of the form
$$
D_n(\bs{\lambda}) = {\rm diag }_n(
\lambda_1 I_{k_1},\lambda_2 I_{k_2},\ldots,
\lambda_m I_{k_m} ),  \ I_{k_j}: \  k_j \times k_j {\rm \ identity \ matrix}.
$$
are parametrized by
\beq \label{gr:degenerateden}
\Lambda_m^{\neq} \times G(k_{m},  \mathbb C^{n}) \times  G(k_{m-1}, \mathbb C^{n-k_{m}}) \times \cdots \times  G(k_{2}, \mathbb C^{n-k_{3}- \cdots - k_{m}}).
\eeq

We identify $U(k_{1} + \cdots + k_{j-1})$ and $U(k_{1} + \cdots + k_{j-1}) \times \{I_{k_j + \cdots + k_m}\}$ in $U(n)$.
Define the equivalence relation $\sim$ in $U(n)$ by
$$      g \sim  g^{\prime } \Leftrightarrow  g^{-1}g^{\prime } \in  U(k_1)\times \cdots
\times	U(k_m).$$
Accordingly the elements of 
$$    U(n)/(U(k_1) \times \cdots \times	U(k_m)) = U(n)/\sim  $$ 
are the equivalence classes
$$      [g] = \{ gv; v \in  U(k_1)\times \cdots \times	U(k_m)\} $$
\begin{prop}
	For $g_j \in U(k_{1} + \cdots + k_{j-1} + k_{j})$  there is a $g_{j-1} \in U(k_{1} + \cdots + k_{j-1})$ such that
	$$    [g_j] = [\iota _j(\pi _j(g_j))g_{j-1}] ,$$
	where $\pi _j(g_j)$ and $[g_{j-1}]$ are uniquely determined by $[g_j]$, for any section $\iota _j$ of $U(k_{1} + \cdots + k_{j-1} + k_{j})$ on $U(k_{1} + \cdots + k_{j-1} + k_{j})/(U(k_{1} + \cdots + k_{j-1}) \times U(k_{j}))$.
\end{prop}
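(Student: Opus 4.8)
The plan is to reduce the statement to Proposition \ref{gr:uniqueexpression} applied to the unitary group $U(k_1+\cdots+k_{j-1}+k_j)$, and then to check carefully how the various block identifications interact with the equivalence relation $\sim$ on $U(n)$. For brevity write $K = k_1+\cdots+k_{j-1}$ and $k = k_j$, so that $g_j \in U(K+k)$ and $\iota_j$ is a section for $\pi_j\colon U(K+k) \to U(K+k)/(U(K)\times U(k)) \cong G(k,\C^{K+k})$. Applying Proposition \ref{gr:uniqueexpression} (with $n$ replaced by $K+k$, and $k$ unchanged, so that $n-k$ becomes $K$) yields a unique $h \in U(K)\times U(k)$ with $g_j = \iota_j(\pi_j(g_j))\,h$; writing $h$ in block form as $h = \operatorname{diag}(g_{j-1},h_j)$ with $g_{j-1}\in U(K)=U(k_1+\cdots+k_{j-1})$ and $h_j\in U(k_j)$ produces the candidate $g_{j-1}$.

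Next I would verify the displayed identity. Using the identification of $g_{j-1}\in U(K)$ with $\operatorname{diag}(g_{j-1},I_k)\in U(K+k)$, a direct block computation gives
\[
\bigl(\iota_j(\pi_j(g_j))\,g_{j-1}\bigr)^{-1} g_j = \operatorname{diag}(g_{j-1}^{-1},I_k)\,\operatorname{diag}(g_{j-1},h_j) = \operatorname{diag}(I_K,h_j).
\]
Viewed inside $U(n)$ via the standard embedding $U(K+k) = U(K+k)\times\{I_{k_{j+1}+\cdots+k_m}\}$, the matrix $\operatorname{diag}(I_K,h_j)$ has its $i$-th diagonal block equal to $I_{k_i}$ for $i\ne j$ and equal to $h_j\in U(k_j)$ for $i=j$; hence it lies in $U(k_1)\times\cdots\times U(k_m)$, so $[g_j] = [\iota_j(\pi_j(g_j))\,g_{j-1}]$, which is the asserted relation.

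It remains to establish that $\pi_j(g_j)$ and $[g_{j-1}]$ depend only on $[g_j]$. The key observation is that the subgroup $U(K+k) = U(K+k)\times\{I_{k_{j+1}+\cdots+k_m}\}$ of $U(n)$ meets $U(k_1)\times\cdots\times U(k_m)$ exactly in $U(k_1)\times\cdots\times U(k_j)$, and the latter is contained in $U(K)\times U(k)$. Thus if $g_j' \in U(K+k)$ is another representative of $[g_j]$, then $g_j^{-1}g_j' \in U(k_1)\times\cdots\times U(k_j) \subset U(K)\times U(k)$, whence $\pi_j(g_j') = \pi_j(g_j)$; this proves that $\pi_j(g_j)$ is well defined on the class $[g_j]$. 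Writing $g_j^{-1}g_j' = \operatorname{diag}(v',v'')$ with $v'\in U(k_1)\times\cdots\times U(k_{j-1})\subset U(K)$ and $v''\in U(k_j)$, the decomposition of $g_j'$ furnished by Proposition \ref{gr:uniqueexpression} has $U(K)$-component $g_{j-1}v'$, so the class of $g_{j-1}$ in $U(K)/(U(k_1)\times\cdots\times U(k_{j-1}))$ is unchanged; this proves that $[g_{j-1}]$ is well defined on $[g_j]$.

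I expect the only real difficulty to be notational rather than conceptual: one must keep straight the three quotients in play --- $[g_j]$ in $U(n)/(U(k_1)\times\cdots\times U(k_m))$, $\pi_j(g_j)$ in $U(K+k)/(U(K)\times U(k))$, and $[g_{j-1}]$ in $U(K)/(U(k_1)\times\cdots\times U(k_{j-1}))$ --- and check at each step that the nested block identifications $U(k_1+\cdots+k_{j-1}) \hookrightarrow U(k_1+\cdots+k_j) \hookrightarrow U(n)$ are compatible with the relevant equivalence relations; once that bookkeeping is set up, every step above is a one-line verification.
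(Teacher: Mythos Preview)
Your proof is correct and follows essentially the same approach as the paper: both invoke Proposition~\ref{gr:uniqueexpression} to produce the block decomposition $g_j=\iota_j(\pi_j(g_j))\operatorname{diag}(g_{j-1},h_j)$, and both establish well-definedness via the inclusion $U(k_1)\times\cdots\times U(k_j)\subset U(K)\times U(k_j)$. Your version is slightly more explicit about the block bookkeeping and the embeddings into $U(n)$, but the logical skeleton is identical.
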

\begin{proof}
	Since $(U(k_1) \times \cdots \times	U(k_{j-1}) \times U(k_j)) \subset U(k_{1} + \cdots + k_{j-1}) \times U(k_{j})$, $[g_j] = [g_j^{\prime }]$ implies $\pi _j(g_j) = \pi _j(g_j^{\prime })$ for $g_j, g_j^{\prime } \in U(k_{1} + \cdots + k_{j-1} + k_{j})$.  
	
	It follows from Proposition \ref{gr:uniqueexpression} that
	$g_j = \iota (\pi (g_j)) H$ holds for $H = (g_{j-1}, h) \in  U(k_{1} + \cdots + k_{j-1})\times U(k_j)$.  Thus we have
	$[g_j] = [\iota (\pi (g_j)) H] = [\iota (\pi (g_j)) g_{j-1}]$.  
	
	Let $[g_j] = [g_j^{\prime }]$ then there is a $g_{j-1}^{\prime } \in U(k_{1} + \cdots + k_{j-1})$ such that $[g^{\prime }] = [\iota _j(\pi _j(g))g_{j-1}^{\prime }]$.  Therefore there exists $v \in (U(k_1) \times \cdots \times U(k_{j-1}) \times U(k_j))$ such that $\iota _j(\pi _j(g_j))g_{j-1}^{\prime } = \iota (\pi (g_j)) g_{j-1} v$, and $g_{j-1}^{\prime } = g_{j-1} v$.  This shows $[g_{j-1}] = [g_{j-1}^{\prime }]$.
\end{proof}
In the same way as  Proposition \ref{gr:canonicaldec} one proves the following result.
\begin{prop}  
	For any section $\iota _{j}$ of $U(n - k_{j+1} - \cdots - k_{m})$
	on $G(k_{j}, \mathbb C^{n-k_{j+1}- \cdots - k_{m}})$ for $\pi _{j}: U(n - k_{j+1} - \cdots -k_{m})
	\rightarrow  G(k_{j}, \mathbb C^{n-k_{j+1}- \cdots - k_{m}})$,
	there is a unique mapping
	$$  \phi : U(n)/(U(k_1) \times \cdots \times U(k_m)) \ni [g] \rightarrow  $$
	$$  (z_{m}, z_{m-1}, \ldots , z_{2}) \in  G(k_{m},  \mathbb C^{n}) \times  G(k_{m-1}, \mathbb C^{n-k_{m}}) \times 
	\cdots \times  G(k_{2}, \mathbb C^{n-k_{3}- \cdots - k_{m}})$$
	such that
	$$  [g] = [g^{\prime }] \ {\rm for} \ g^{\prime } = \psi (\phi ([g])),$$
	where
	$$  \psi (z_{m}, z_{m-1}, \ldots , z_{2}) = \iota _m (z_m) \begin{pmatrix} \iota _{m-1}(z_{m-1}) & 0 \cr 0 & I_{k_m} \end{pmatrix} 
	\cdots \begin{pmatrix} \iota _{2}(z_{2}) & 0 \cr 0 & I_{k_3 + \ldots + k_{m}}
	\end{pmatrix} 
	. $$ 
\end{prop}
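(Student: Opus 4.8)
The plan is to read off $\phi$ from the canonical coset decomposition of Proposition~\ref{gr:canonicaldec}. For the chosen sections $\iota_j$ that proposition already produces a unique surjection $f\colon U(n) \to G(k_m,\C^n) \times G(k_{m-1},\C^{n-k_m}) \times \cdots \times G(k_2,\C^{n-k_3-\cdots-k_m})$ and a unique $h \in U(k_1)\times\cdots\times U(k_m)$ with $g = \psi(f(g))\,h$, where $\psi$ is exactly the product of section matrices displayed in the statement. I would define $\phi([g]) := f(g)$ and then check three points: that the value does not depend on the representative $g$, that the stated identity holds, and that $\phi$ is the unique map with that property.

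The first point is the only one requiring attention. Let $g' = gv$ with $v \in U(k_1)\times\cdots\times U(k_m)$, and split $v = (v',v_m)$ with $v' \in U(k_1)\times\cdots\times U(k_{m-1}) = U(n-k_m)$ and $v_m \in U(k_m)$, using the identification of $U(k_1+\cdots+k_{m-1})$ with $U(n-k_m)\times\{I_{k_m}\}$ fixed in the text. Since $v \in U(n-k_m)\times U(k_m)$ and $\pi_m$ is constant on the cosets of $U(n-k_m)\times U(k_m)$, we get $\pi_m(g') = \pi_m(g)$, so the coordinate $z_m$ of $f$ is unchanged. Writing $g = \iota_m(\pi_m(g))(g_m,h_m)$ as in Proposition~\ref{gr:uniqueexpression} gives $g' = \iota_m(\pi_m(g))(g_m v',\,h_m v_m)$, so the element fed into the next stage of the recursion passes from $g_m$ to $g_m v'$ with $v' \in U(k_1)\times\cdots\times U(k_{m-1})$; that is, $[g_m]$ is unchanged in $U(n-k_m)/(U(k_1)\times\cdots\times U(k_{m-1}))$. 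The preceding Proposition is precisely the statement that, at each stage, the Grassmann coordinate and the residual coset depend only on the incoming coset, so iterating it down the chain shows that every $z_j$ depends only on $[g]$. Hence $\phi$ is well defined, and $g = \psi(f(g))\,h$ with $h \in U(k_1)\times\cdots\times U(k_m)$ immediately gives $g \sim \psi(\phi([g]))$, i.e.\ $[g] = [g']$ for $g' = \psi(\phi([g]))$.

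For uniqueness I would first note that $f\circ\psi = \mathrm{id}$: this is the surjectivity computation at the end of the proof of Proposition~\ref{gr:canonicaldec}, where one checks $\pi_m(\psi(z_m,\ldots,z_2)) = z_m$ from the section property of $\iota_m$ and then recovers $z_{m-1},\ldots,z_2$ in turn from the lower blocks. Now suppose $\phi'$ is any map with $[\psi(\phi'([g]))] = [g]$ for all $[g]$. Then $\psi(\phi'([g])) \sim \psi(\phi([g]))$, and since $f$ is constant on $\sim$-classes by the previous paragraph, $\phi'([g]) = f(\psi(\phi'([g]))) = f(\psi(\phi([g]))) = \phi([g])$. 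No genuine obstacle remains: the whole argument is the transcription of Proposition~\ref{gr:canonicaldec} to the quotient $U(n)/(U(k_1)\times\cdots\times U(k_m))$, the one subtlety being the use of the inclusions $U(k_1)\times\cdots\times U(k_{j-1}) \subset U(k_1+\cdots+k_{j-1})$ at each of the $m-1$ stages of the recursion together with the preceding Proposition.
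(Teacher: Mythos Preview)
Your argument is correct and matches the paper's intent: the paper does not spell out a proof here but simply says it is proved ``in the same way as Proposition~\ref{gr:canonicaldec}'', and what you have written is precisely that transcription to the quotient, with the well-definedness step supplied by the preceding Proposition. One small slip: you write $v' \in U(k_1)\times\cdots\times U(k_{m-1}) = U(n-k_m)$, but of course the block-diagonal subgroup is only contained in $U(n-k_m)$, not equal to it; the inclusion is all you use, so the argument is unaffected.
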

Let 
$$  \pi : U(n) \ni g \rightarrow [g] \in U(n)/(U(k_1) \times \cdots \times U(k_m)). $$
The above proposition shows that the mapping $\pi \circ \psi $ is surjective.  Since $\iota _{j}$ is a section of $U(n - k_{j+1} - \cdots - k_{m})$ on $U(n - k_{j+1} - \cdots - k_{m})/(U(n - k_{j} - \cdots - k_{m}) \times U(k_j)) $, and 
$U(k_1) \times \cdots \times U(k_m) \subset U(n - k_{j} - \cdots - k_{m}) \times U(k_j)$, the mapping $\pi \circ \psi $ is injective.  Consequently the mappings $\phi $ and $\pi \circ \psi $ are inverses to each other, and thus we get
\begin{prop}  \label{gr:flagsection}
	There is a bijection $\phi = (\pi \circ \psi )^{-1}$ between the flag manifold $U(n)/(U(k_1) \times \cdots \times U(k_m))$ and the direcat product $G(k_{m},  \mathbb C^{n}) \times  G(k_{m-1}, \mathbb C^{n-k_{m}}) \times 
	\cdots \times  G(k_{2}, \mathbb C^{n-k_{3}- \cdots - k_{m}})$ of Grassmann manifolds.  The mapping $\psi \circ \phi $ is a section of $U(n)$ on $U(n)/(U(k_1) \times \cdots \times U(k_m))$ with respect to $\pi $.
\end{prop}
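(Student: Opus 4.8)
The plan is to deduce the statement from two facts about the composite
$\pi\circ\psi\colon G(k_m,\C^n)\times\cdots\times G(k_2,\C^{n-k_3-\cdots-k_m})\to U(n)/(U(k_1)\times\cdots\times U(k_m))$
that are already isolated in the discussion preceding the proposition, namely that it is surjective and injective. Surjectivity is exactly the content of the preceding Proposition (the flag-manifold analogue of Proposition~\ref{gr:canonicaldec}): every class $[g]$ equals $[\psi(z_m,\ldots,z_2)]$ for a suitable tuple, so $\pi\circ\psi$ hits every class. For injectivity, suppose $[\psi(z)]=[\psi(z')]$, i.e. $\psi(z)^{-1}\psi(z')\in U(k_1)\times\cdots\times U(k_m)$. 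By construction the block factors following $\iota_m(z_m)$ in the definition of $\psi$ have their last $k_m$ columns equal to those of the identity, so $\psi(z)=\iota_m(z_m)\,h'$ with $h'\in U(n-k_m)\times\{I_{k_m}\}$, and therefore $\pi_m(\psi(z))=\pi_m(\iota_m(z_m))=z_m$ because $\iota_m$ is a section for $\pi_m$. Using $U(k_1)\times\cdots\times U(k_m)\subset U(n-k_m)\times U(k_m)$ one gets $\pi_m(\psi(z))=\pi_m(\psi(z'))$, hence $z_m=z_m'$; cancelling $\iota_m(z_m)$ on the left reduces the identity to the same situation inside $U(n-k_m)$, now with the inclusion $U(k_1)\times\cdots\times U(k_{m-1})\subset U(n-k_m-k_{m-1})\times U(k_{m-1})$, and iterating gives $z_j=z_j'$ for all $j$, i.e. $z=z'$.

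Granting this, $\pi\circ\psi$ is a bijection, so $\phi:=(\pi\circ\psi)^{-1}$ is the asserted bijection between the flag manifold and the product of Grassmannians; its uniqueness as a map with the stated property is immediate from injectivity of $\pi\circ\psi$. For the last clause, $\psi\circ\phi$ maps the flag manifold into $U(n)$, and $\pi\circ(\psi\circ\phi)=(\pi\circ\psi)\circ(\pi\circ\psi)^{-1}=\mathrm{id}$, so $\psi\circ\phi$ is a section of $U(n)$ on $U(n)/(U(k_1)\times\cdots\times U(k_m))$ for $\pi$.

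The only part requiring care is the inductive injectivity argument: at stage $j$ one must check that the block-diagonal tail of $\psi$ beyond $\iota_j(z_j)$ lies in $U(n-k_{j+1}-\cdots-k_m-k_j)\times\{I_{k_j+\cdots+k_m}\}$, so that applying $\pi_j$ to the reduced word recovers precisely the coordinate $z_j$, together with tracking which product subgroup the leftover factor lands in so that the nesting $U(k_1)\times\cdots\times U(k_j)\subset U(n-k_{j+1}-\cdots-k_m-k_j)\times U(k_j)$ can be invoked. Everything else — surjectivity, the inverse bookkeeping, and the section identity — is formal once Proposition~\ref{gr:uniqueexpression} and the Proposition preceding this one are taken as given.
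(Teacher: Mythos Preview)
Your proposal is correct and follows essentially the same approach as the paper: surjectivity of $\pi\circ\psi$ is read off from the preceding proposition, and injectivity is obtained from the section property of each $\iota_j$ together with the inclusion $U(k_1)\times\cdots\times U(k_m)\subset U(n-k_j-\cdots-k_m)\times U(k_j)$. The paper compresses the injectivity step into a single sentence, whereas you unwind it as an explicit induction peeling off one factor at a time---but the underlying mechanism (apply $\pi_j$, use that $\iota_j$ is a section, use the subgroup nesting, cancel, repeat) is identical.
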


\section{Projective space}  \label{projectsp}

In this section, we summarize the results in Section \ref{grassmann} for $k_j=1$.  The Grassmann manifold $G(1, \mathbb C^{n})$, the set of all complex
$1$-dimensional subspaces of $\mathbb C^{n}$, is called the projective space and denoted by $CP^{n-1}$.

$$    CP^{n-1} = G(1, \mathbb C^{n}) = \{z \in \mathbb C^{n}; z \neq 0 \} /(\mathbb C \backslash \{0\}).$$
$S_{1, n}$ is a set of permutation $\sigma $ such that $1 \leq  \sigma (1) < \sigma (2) <
\cdots < \sigma (n-1) \leq  n$ and $1 \leq \sigma (n) \leq n$.  
\begin{rem}
	Let $\sigma (n) = j$.  Then we have
	$\sigma (k) = k$ if $1 \leq k \leq j-1$ and $\sigma (k) = k+1$ if $j \leq k \leq n-1$.
	Thus $\sigma \in S_{1, n}$ is characterized by $j = \sigma (n) \in \{1, 2, \ldots , n\}$.
\end{rem}
Let $j = \sigma (n)$ and $z_j = z_{\sigma (n)}$ is the $j$-th component of $z \in \mathbb C^{n}$.
Define
$$    \Omega _{j } = \{ z \in \mathbb C^{n}; z_j  \neq 0\}/(\mathbb C \backslash \{0\}) \subset CP^{n-1}.$$
Since $z \neq 0$, there is a $j \in \{1, 2, \ldots , n\}$ such that $z_j \neq 0$.  Thus
we have
$$    CP^{n-1} = \cup _{j \in \{1, 2, \ldots , n\}} \Omega _{j }.$$

Define the mapping
\beq \label{gr:chart}
\phi _{j } = \phi _{\sigma }: \Omega _{\sigma } = \Omega _{j } \ni  w \rightarrow  (z_{\sigma (1)}, \ldots , z_{  \sigma (n-1)})/z_{\sigma (n)}
= (z_1 \ldots z_{j-1}, z_{j+1}, \ldots , z_n)/z_j
\in  \mathbb C^{n-1}, 
\eeq     
where $\sigma (n) = j$.
This map $\phi _{j }$ gives the homeomorphism
$$      \Omega _{j } \cong  \mathbb C^{(n-1)}$$
and the set $\{ (\Omega _{j }, \phi _{j }); j  \in  \{1, 2, \ldots , n\}\} $  an atlas of $CP^{n-1}$.

There is another parametrization of $\Omega _{j }$ which is more
convenient for our purpose.

Introduce
$$      B(n-1) = \{ x \in  \mathbb C^{(n-1)}; x^{*}x < 1\};$$
the set $\Omega _{j }$ can be parametrized by the set $B(n-1)$.  We will show this in the following.

Note that $S(1, \mathbb C^{n}) = S(\mathbb C^{n}) = \{z \in \mathbb C^{n}; \norm{z} = 1\}$. 
With 
$$    \tilde{\Omega } _{j } = \{ z \in  S(\mathbb C^{n}); z_j \neq 0\}  \subset  S(\mathbb C^{n}).$$ 
one has
$$      S(\mathbb C^{n}) = \cup _{j \in \{1, 2, \ldots , n\}} \tilde{\Omega } _{j }.$$
\begin{rem}  \label{pr:usigma}
	Let $\sigma \in S_{1, n}$ such that $\sigma (n) = j$ and denote $U_{\sigma }$ by $U_j$.  Then we have
	$U_{j }e_{k} = e_{k}$ if $1 \leq k \leq j-1$, $U_{j }e_{k} = e_{k+1}$ if $j \leq k \leq n-1$ and $U_j e_n = e_j$.  
\end{rem}
Let $\pi _{2}$ be the surjective mapping
$$      \pi _{2}: S(\mathbb C^{n}) \ni  z \rightarrow  w = {\rm span \, }z \in  CP^{n-1},$$
where ${\rm span \, }z$ is the complex line spanned by $z$.  If $z, z^{\prime } \in  S(\mathbb C^{n})$ define the same line, then $z^{\prime } =
z e^{i\theta }$ for some $e^{i\theta } \in  U(1)$.  Thus we have 
\beq \label{pr:cosets}
CP^{n-1} \cong  S(\mathbb C^{n})/U(1). 
\eeq
In order to parametrize $CP^{n-1}$, we must choose a unique
representative $z \in  S(\mathbb C^{n})$ from (\ref{pr:cosets}).
Note that
$$      CP^{n-1} \supset  \Omega _{j } = \{ z \in  S(\mathbb C^{n}); z_j \neq 0\}/U(1)
= \pi _{2}(\tilde{\Omega } _{j }) .$$

Let $\tilde{\psi }_j$ be the mapping
$$
\tilde{\psi }_j: \tilde{\Omega }_j \ni z \rightarrow (z_1 \ldots z_{j-1}, z_{j+1} \ldots , z_n)^T/e^{i\theta } \in B(n-1), \ z_j = \abs{z_j} e^{i\theta }.
$$
Then $\tilde{\psi }_j$ induces the mapping
$$
\psi _j: \Omega _j = \pi _2(\tilde{\Omega }_j) \ni \pi _2(z) \rightarrow (z_1 \ldots z_{j-1}, z_{j+1} \ldots , z_n)^T/e^{i\theta } \in B(n-1), \ z_j = \abs{z_j} e^{i\theta }
$$
because $\pi _2(z) = \pi _2(z^{\prime })$ implies $\tilde{\psi }_j(z) = \tilde{\psi }_j(z^{\prime})$.
\begin{prop} \label{pr:kappa}
	The mapping 
	$$    \kappa _j: B(n-1) \ni  x \rightarrow  \pi _{2}( 
	(z_1 \ldots z_{j-1}, (1 - x^{*}x)^{1/2}, z_{j} \ldots , z_{n-1}))^T
	\in  \Omega _j.$$ 
	satisfies $\psi _j \circ \kappa _j = {\rm id}$ and $\kappa _j \circ \psi _j = {\rm id}$.	
\end{prop}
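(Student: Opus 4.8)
The plan is to verify directly that $\kappa_j$ and $\psi_j$ are mutually inverse; this is just the specialization of Proposition~\ref{gr:kappa} to the case $k=1$, in which the polar decomposition $Y_\sigma^{*}=U\abs{Y_\sigma^{*}}$ used there collapses to the scalar polar decomposition $z_j=\abs{z_j}e^{i\theta}$ of a nonzero complex number, so no new idea is needed beyond bookkeeping with the phase of the $j$-th coordinate.

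First I would check that $\kappa_j$ really maps into $\Omega_j$. For $x=(x_1,\ldots,x_{n-1})^{T}\in B(n-1)$ write $v(x)=(x_1,\ldots,x_{j-1},(1-x^{*}x)^{1/2},x_j,\ldots,x_{n-1})^{T}$ for the vector obtained by inserting the real number $(1-x^{*}x)^{1/2}$ in the $j$-th slot. Then $\norm{v(x)}^{2}=x^{*}x+(1-x^{*}x)=1$, so $v(x)\in S(\C^{n})$, and since $x^{*}x<1$ its $j$-th entry is strictly positive; hence $\pi_2(v(x))\in\pi_2(\tilde{\Omega}_j)=\Omega_j$ and $\kappa_j$ is well defined.

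For $\psi_j\circ\kappa_j=\mathrm{id}$: the $j$-th entry of $v(x)$ is a positive real, so in the notation $z_j=\abs{z_j}e^{i\theta}$ of the definition of $\psi_j$ we have $e^{i\theta}=1$; thus $\psi_j(\pi_2(v(x)))$ is obtained by deleting that entry and dividing by $1$, which returns $x$. For $\kappa_j\circ\psi_j=\mathrm{id}$: given $w\in\Omega_j$, pick a unit representative $z$ with $z_j=\abs{z_j}e^{i\theta}\neq 0$; then $x:=\psi_j(w)=e^{-i\theta}(z_1,\ldots,z_{j-1},z_{j+1},\ldots,z_n)^{T}$ satisfies $x^{*}x=1-\abs{z_j}^{2}<1$, so $(1-x^{*}x)^{1/2}=\abs{z_j}$ and therefore $v(x)=e^{-i\theta}z$; consequently $\kappa_j(x)=\pi_2(v(x))=\pi_2(z)=w$.

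The only two points that need a word are the well-definedness of $\psi_j$ on the quotient $S(\C^{n})/U(1)$ --- but this was already recorded, since replacing $z$ by $ze^{i\theta'}$ multiplies both the listed components and the normalizing phase by the same $e^{i\theta'}$ --- and the index shift $U_j e_k=e_k$ ($1\le k\le j-1$), $U_j e_k=e_{k+1}$ ($j\le k\le n-1$), $U_j e_n=e_j$ from Remark~\ref{pr:usigma}, which is exactly what makes the insertion of the extra coordinate in slot $j$ correspond to applying $U_\sigma$. Neither presents a genuine obstacle, so the proof is indeed routine, as the parallel ``This is obvious'' for Proposition~\ref{gr:kappa} indicates.
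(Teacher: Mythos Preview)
Your proposal is correct and follows exactly the route the paper indicates: the paper states that Proposition~\ref{pr:kappa} is the special case $k=1$ of Proposition~\ref{gr:kappa} and omits the proof, while Proposition~\ref{gr:kappa} itself is dispatched with ``This is obvious.'' You have simply written out the routine verification---well-definedness of $\kappa_j$, then $\psi_j\circ\kappa_j=\mathrm{id}$ using that the inserted $j$-th coordinate is a positive real so the phase factor is $1$, then $\kappa_j\circ\psi_j=\mathrm{id}$ using $(1-x^{*}x)^{1/2}=\abs{z_j}$ so that $v(x)=e^{-i\theta}z$---and correctly identified the scalar polar decomposition $z_j=\abs{z_j}e^{i\theta}$ as the $k=1$ collapse of the matrix polar decomposition in the Grassmannian argument.
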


Let $x = e_{n}$.  Since $U(n)$ acts on $S(\mathbb C^{n})$ transitively, and
the isotropy group of $x$ is $U(n-1) \times \{1\}$,
$U(n)$ acts on $CP^{n-1}= S(\mathbb C^{n})/U(1)$ transitively by
$$      U \pi _{2}(z) = \pi _{2}(Uz).$$
Let $y \in  CP^{n-1}$ be a complex line of $\mathbb C^{n}$
spanned by the veactor $e_{n}$.  The isotropy group of 
$y$ is $U(n-1) \times  U(1)$, and we have
$$      U(n)/(U(n-1) \times  U(1)) \cong  CP^{n-1}.$$
Let
$$    e = e_{n} = \begin{pmatrix} 0 \cr \vdots \cr 0 \cr 1 \end{pmatrix} = \begin{pmatrix} O \cr 1 \end{pmatrix} \in  S(\mathbb C^{n})$$ 
and
$$    g = \begin{pmatrix} W & x \cr v & y \end{pmatrix} \in  U(n),$$ 
where $z = \begin{pmatrix} x \cr y \end{pmatrix}$ is the last colum
of $g$ and $y \in \mathbb C^{1}$.

Define the mapping $\pi _{1}: U(n) \rightarrow  S(\mathbb C^{n})$ by
$$    \pi _{1}(g) = ge = \begin{pmatrix} W & x \cr v & y \end{pmatrix}
\begin{pmatrix} O \cr 1 \end{pmatrix}
= \begin{pmatrix} x \cr y \end{pmatrix} \in  S(\mathbb C^{n}). $$ 
Suppose that $y \neq 0$.  Then there is a unique $e^{i \theta } \in  U(1)$
such that $ye^{i \theta } = \vert y\vert $.  For 
$$    \begin{pmatrix} x^{\prime } \cr y^{\prime } \end{pmatrix} 
= \begin{pmatrix} xe^{i \theta } \cr \vert y\vert  \end{pmatrix}. $$ 
we have\newline
$$    \pi _{2} \begin{pmatrix} x \cr y \end{pmatrix} = \pi _{2} \begin{pmatrix} x^{\prime } \cr y^{\prime } \end{pmatrix}.$$ 
And with
\beq \label{pr:gWx}
g^{\prime } = W(x^{\prime })
= \begin{pmatrix} (I_{n-1} - x^{\prime }x^{\prime *})^{1/2} & x^{\prime } \cr -x^{\prime *} & y^{\prime }
\end{pmatrix}. 
\eeq   
we get 
$$    \pi _{1}(g^{\prime }) = \begin{pmatrix} (I_{n-1} - x^{\prime }x^{\prime *})^{1/2} & x^{\prime } \cr -x^{\prime *} & y^{\prime }
\end{pmatrix} e  
= \begin{pmatrix} x^{\prime } \cr y^{\prime } \end{pmatrix} .$$ 
Finally introduce $\pi  = \pi _{2}\circ \pi _{1}$ and $\Omega _{\sigma } = \pi _{2}(\tilde{\Omega } _{\sigma })$.  Then we have
$$      \pi (g) = \pi (g^{\prime }).$$
Thus we have constructed a local section
\beq \label{pr:local section} 
\iota : \Omega _{n} \ni  \pi (g) \rightarrow  g^{\prime } \in  U(n) 
\eeq   
by (\ref{pr:gWx}).\newline
\begin{prop} \label{pr:lambda}
	For $x \in B(n-1)$ define
	\beq  \label{pr:Wofx}  
	W(x) = \begin{pmatrix} (I_{n-1} - xx^{*})^{1/2} & x \cr -x^{*} & (1 -
		x^{*}x)^{1/2} \end{pmatrix} . 
	\eeq
	Then the mapping
	$$    \iota _{n} = W \circ \psi _{n}: \Omega _{n} \rightarrow   U(n)$$ 
	gives a local section of $U(n)$ on $CP^{n-1}$ for $\pi : U(n) \rightarrow CP^{n-1}$.
\end{prop}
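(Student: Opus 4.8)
The plan is to read this off as the $k=1$ specialization of Proposition~\ref{gr:lambda} and its proof. With $X=x$ a single column vector the matrix $W(x)$ of~(\ref{pr:Wofx}) is exactly the matrix $W(X)$ of~(\ref{gr:WX}), and $\psi_n$, $\kappa_n$, $\pi=\pi_2\circ\pi_1$ are the $k=1$ instances of the corresponding maps of Section~\ref{grassmann}. So only two things have to be checked: (i) $W(x)\in U(n)$ for every $x\in B(n-1)$; and (ii) $\pi\circ\iota_n=\mathrm{id}$ on $\Omega_n$. Granting (i), (ii), and Proposition~\ref{pr:kappa} (which gives that $\psi_n\colon\Omega_n\to B(n-1)$ is a bijection, hence $\iota_n=W\circ\psi_n$ is well defined on all of $\Omega_n$), the assertion follows.

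For (i) I would simply invoke Proposition~\ref{li:WX}, where the unitarity of $W(X)$ is established in general. In the present rank-one case it is in any event immediate: setting $a=(I_{n-1}-xx^*)^{1/2}$ and observing that $x$ is an eigenvector of $I_{n-1}-xx^*$ with eigenvalue $1-x^*x$ (since $(I_{n-1}-xx^*)x=(1-x^*x)x$), one gets $ax=(1-x^*x)^{1/2}x$, so the off-diagonal blocks of $W(x)^*W(x)$ vanish while the diagonal blocks are $a^2+xx^*=I_{n-1}$ and $x^*x+(1-x^*x)=1$; thus $W(x)^*W(x)=I_n$.

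For (ii), take $w\in\Omega_n$ and a representative $z=(z_1,\dots,z_n)^T\in S(\C^n)$ with $z_n\neq0$, $z_n=\abs{z_n}e^{i\theta}$, so that $\psi_n(w)=x=e^{-i\theta}(z_1,\dots,z_{n-1})^T$. Since $\norm{z}=1$ we have $x^*x=\sum_{j=1}^{n-1}\abs{z_j}^2=1-\abs{z_n}^2$, hence $(1-x^*x)^{1/2}=\abs{z_n}$. The last column of $W(x)$, which is $\pi_1(W(x))$, is therefore $(x,\abs{z_n})^T=e^{-i\theta}z$, and applying $\pi_2$ gives $\pi(\iota_n(w))=\pi_2(e^{-i\theta}z)=\pi_2(z)=w$. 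The only point that deserves any care is this phase bookkeeping: one must notice that the scalar $e^{-i\theta}\in U(1)$ stripped off in the definition of $\psi_n$ (to make the $n$-th component real and positive) is precisely the factor by which $\pi_1(W(x))$ differs from $z$, so that the two vectors span the same complex line of $\C^n$. There is otherwise no obstacle; the remainder is a verbatim transcription of the Grassmann construction of Section~\ref{grassmann} at $k=1$, $X=x$.
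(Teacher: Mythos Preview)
Your proof is correct and follows essentially the same approach as the paper: invoke Proposition~\ref{li:WX} for the unitarity of $W(x)$, and derive the section identity $\pi\circ\iota_n=\mathrm{id}$ from the preparations (the construction of $g'$ in~(\ref{pr:gWx}) and the computation $\pi(g)=\pi(g')$) preceding the statement. You are in fact more explicit than the paper, which dispatches the second part as ``obvious by the above preparations''; your phase-bookkeeping check and the direct rank-one verification of unitarity are welcome but do not constitute a different route, and the paper itself remarks afterwards that Proposition~\ref{pr:lambda} is the $k=1$ case of Proposition~\ref{gr:lambda}.
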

\begin{proof}
	The matrix $W(x)$ in (\ref{pr:Wofx}) is unitary according to Proposition \ref{li:WX}; the proof of the remaining part is obvious by the above preparations.
\end{proof}	

From now on, we use the notation $\tilde{\Omega } $ for $\tilde{\Omega } _{n}$, and $\Omega $ for
$\Omega _{n}$.  Propositions \ref{pr:Omegasigma}, \ref{pr:kappa}, \ref{pr:lambda} are the special cases ($k = 1$) of Propositions \ref{gr:Omegasigma}, \ref{gr:kappa}, \ref{gr:lambda} respectively, and we omit the proofs.
\begin{prop} \label{pr:Omegasigma}
	\beq \label{pr:Usigma}
	\tilde{\Omega } _{j } = U_{j }\tilde{\Omega }  , \  {\rm and \, }\,  \Omega _{j } = U_{j }\Omega .
	\eeq
\end{prop}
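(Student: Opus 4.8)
The plan is to obtain Proposition \ref{pr:Omegasigma} as the $k=1$ specialization of Proposition \ref{gr:Omegasigma}, where the structure is even simpler because $S(1,\mathbb C^{n})=S(\mathbb C^{n})$ and the $k\times k$ block $Y_\sigma$ collapses to a single scalar, so that ``$Y_\sigma$ regular'' just means ``nonzero''. First I would record the explicit form of $U_j$ from Remark \ref{pr:usigma}: for $\sigma\in S_{1,n}$ with $\sigma(n)=j$ the matrix $U_j=U_\sigma$ is a permutation matrix with $U_je_k=e_k$ for $1\le k\le j-1$, $U_je_k=e_{k+1}$ for $j\le k\le n-1$, and $U_je_n=e_j$. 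In particular $U_j^{-1}e_j=e_n$, and since $U_j$ is unitary, for any $z\in\mathbb C^{n}$ one has $(U_j^{-1}z)_n=\langle e_n,U_j^{*}z\rangle=\langle U_je_n,z\rangle=\langle e_j,z\rangle=z_j$.

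Next I would prove the first identity $\tilde{\Omega}_j=U_j\tilde{\Omega}$ by the same two-line set-theoretic argument used for $\tilde{\Omega}_\sigma=U_\sigma\tilde{\Omega}$ in Proposition \ref{gr:Omegasigma}. Given $z\in S(\mathbb C^{n})$, the splitting $U_j^{-1}z=\big(\begin{smallmatrix}x\\ y\end{smallmatrix}\big)$ into its first $n-1$ entries $x$ and its last entry $y$ plays the role of $F_\sigma=\big(\begin{smallmatrix}X_\sigma\\ Y_\sigma\end{smallmatrix}\big)$, and by the coordinate computation above $y=(U_j^{-1}z)_n=z_j$. Hence $z\in\tilde{\Omega}_j\iff z_j\ne 0\iff (U_j^{-1}z)_n\ne 0\iff U_j^{-1}z\in\tilde{\Omega}_n=\tilde{\Omega}$; since $U_j$ preserves $S(\mathbb C^{n})$, this is exactly $z\in U_j\tilde{\Omega}$. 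Both inclusions follow at once.

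Finally I would push this down along $\pi_2\colon S(\mathbb C^{n})\to CP^{n-1}$, exactly as in Proposition \ref{gr:Omegasigma}, using the already-established facts that $\Omega_j=\pi_2(\tilde{\Omega}_j)$ and that $U(n)$ acts on $CP^{n-1}$ by $U\pi_2(z)=\pi_2(Uz)$: then $\Omega_j=\pi_2(\tilde{\Omega}_j)=\pi_2(U_j\tilde{\Omega})=U_j\pi_2(\tilde{\Omega})=U_j\Omega$. I do not expect any genuine obstacle; the one point that needs a moment's care is the bookkeeping of which coordinate $U_j$ carries to which, and that is precisely what Remark \ref{pr:usigma} supplies, so the remainder is a verbatim transcription of the Grassmannian argument — which is why the paper elects to omit the proof.
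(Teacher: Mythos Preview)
Your proposal is correct and follows exactly the approach the paper intends: the paper explicitly states that Proposition~\ref{pr:Omegasigma} is the special case $k=1$ of Proposition~\ref{gr:Omegasigma} and omits the proof, and your argument is precisely the $k=1$ transcription of that proof, with the scalar $y=z_j$ playing the role of the regular block $Y_\sigma$ and the push-down along $\pi_2$ verbatim.
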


Next we extend {Proposition} \ref{pr:lambda} to $\Omega _{j }$.

\begin{prop}  \label{pr:localsection}
	The mapping 
	$$      \iota _{j} = U_{j} W \circ \psi _{j}: \Omega _{j} \ni x  \rightarrow  W(\psi _{j}(x)) \in  U(n)$$
	gives a local section of $U(n)$ on $\Omega _{j} \subset  CP^{n-1}$ for $\pi : U(n) \rightarrow  CP^{n-1}$.
\end{prop}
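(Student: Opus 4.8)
The plan is to obtain this statement as the $k=1$ instance of the arguments in Section~\ref{grassmann}, assembling it from two facts that are already available: that $\psi_j$ is a bijection of $\Omega_j$ onto $B(n-1)$, and that $U_jW$ is a right inverse of $\pi$ after the identification furnished by $\psi_j$. Concretely, a permutation $\sigma\in S_{1,n}$ is determined by $j=\sigma(n)$, with $U_\sigma=U_j$ acting as in Remark~\ref{pr:usigma}, while $\psi_\sigma=\psi_j$ and $B(n-1,1)=B(n-1)$; so Propositions~\ref{gr:kappa} and~\ref{gr:lambdasigma} specialize directly. I will nonetheless spell the argument out in the present notation.

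First I would record the two ingredients. By Proposition~\ref{pr:kappa}, $\psi_j\colon\Omega_j\to B(n-1)$ is a bijection with inverse $\kappa_j$, hence in particular injective. By Proposition~\ref{pr:lambda} together with Proposition~\ref{li:WX}, $W(x)\in U(n)$ for every $x\in B(n-1)$; since $U_j\in U(n)$, it follows at once that $\iota_j(x)=U_jW(\psi_j(x))$ lies in $U(n)$ for every $x\in\Omega_j$, so $\iota_j$ indeed maps $\Omega_j$ into $U(n)$.

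Next comes the key step, the projective analogue of Proposition~\ref{gr:lambdasigma}: writing $W_j:=U_jW\colon B(n-1)\to U(n)$, one has $\psi_j\circ\pi\circ W_j=\mathrm{id}_{B(n-1)}$. Indeed $\pi_1\bigl(W_j(x)\bigr)=U_jW(x)e_n=U_j\begin{pmatrix} x \cr (1-x^{*}x)^{1/2}\end{pmatrix}$, namely the last column of $W(x)$ reshuffled by $U_j$; since $\sigma(n)=j$, the $j$-th entry of this vector is $(1-x^{*}x)^{1/2}$, which is strictly positive because $x\in B(n-1)$. Hence $\pi_1(W_j(x))\in\tilde\Omega_j$, and, this distinguished entry being already nonnegative, the normalizing phase $e^{i\theta}$ in the definition of $\tilde\psi_j$ equals $1$; reading off the remaining entries and undoing the permutation $U_j$ then returns exactly $x$. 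Passing to the quotient by $U(1)$ through $\pi_2$ yields $\psi_j\bigl(\pi(W_j(x))\bigr)=x$.

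Finally I would combine these. Given $x\in\Omega_j$, apply the identity of the previous paragraph to the point $\psi_j(x)\in B(n-1)$ to get $\psi_j\bigl(\pi(\iota_j(x))\bigr)=\psi_j(x)$; by the same positivity of the $j$-th component, $\pi(\iota_j(x))\in\Omega_j$, and injectivity of $\psi_j$ forces $\pi(\iota_j(x))=x$. Thus $\iota_j$ is a local section of $U(n)$ on $\Omega_j\subset CP^{n-1}$ for $\pi$. The one step that needs genuine care — the nearest thing to an obstacle — is the coordinate bookkeeping: tracking how $U_j$ permutes $e_1,\dots,e_n$ (Remark~\ref{pr:usigma}) and verifying that the $U(1)$-phase built into $\psi_j$ acts trivially on the image of $W_j$. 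Everything else is a verbatim transcription of the $k=1$ case of the Grassmann argument.
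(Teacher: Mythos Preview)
Your proposal is correct and follows exactly the paper's approach: the paper omits the proof entirely, remarking just before this proposition that the projective-space statements are the $k=1$ special cases of the Grassmannian results in Section~\ref{grassmann}, and Proposition~\ref{pr:localsection} is precisely the $k=1$ instance of the Corollary to Proposition~\ref{gr:lambdasigma}. You have simply spelled out that specialization in detail, which is fine.
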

For $j=1,\ldots,n-1$ introduce 
\beq  \label{pr:Vj}
V_{j} = \Omega _{j} \backslash  \cup _{i=j+1}^{n} \Omega _{i}.
\eeq  
Then
$$      CP^{n-1} = \cup _{j=1}^{n} V_{j}$$
is a disjoint union.  We can construct a section $\iota : CP^{n-1} \rightarrow  
U(n)$ for $\pi : U(n) \rightarrow  CP^{n-1}$ as follows.  
\begin{defn}
	Let $x \in  CP^{n-1}$.  Define the section $\iota (x)$ by
	\beq \label{pr:section}
	\iota (x) = \iota _{j}(x) \  {\rm if \, }\,  x \in  V_{j}.
	\eeq
\end{defn}
\begin{rem}
	Let $\sigma , \sigma ^{\prime } \in S_{1, n}$ be such that $\sigma (n) < \sigma ^{\prime } (n)$.  Then $\sigma > \sigma ^{\prime }$ according to the lexicographic ordering.  This causes the difference between (\ref{pr:Vj}) and (\ref{gr:Vj}) 
\end{rem}
\begin{prop}  \label{pr:uniqueexpression}
	Let $\iota $ be a section of $U(n)$ on $CP^{n-1}$ for $\pi $.  Then for any $g \in  U(n)$, there is a unique $h\in  U(n-1)\times U(1)$ such that   
	$$      g = \iota (\pi (g)) h\   . $$
\end{prop}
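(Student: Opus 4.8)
The plan is to argue exactly as in the proof of Proposition \ref{gr:uniqueexpression}, specialized to $k=1$. First I would put $g' = \iota(\pi(g))$. Since $\iota$ is a section for $\pi$, the defining relation $\pi\circ\iota = \mathrm{id}$ gives immediately $\pi(g') = \pi(\iota(\pi(g))) = \pi(g)$.

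The next step is to identify the fiber of $\pi$ through $g$ with a left coset of $U(n-1)\times U(1)$. Recall from the present section that $\pi = \pi_2\circ\pi_1$ with $\pi_1(g) = ge_n$, that $CP^{n-1}\cong S(\mathbb{C}^n)/U(1)$ (equation (\ref{pr:cosets})), and that $U(n)$ acts on $CP^{n-1}$ by $U\pi_2(z) = \pi_2(Uz)$ with isotropy subgroup $U(n-1)\times U(1)$ at the line spanned by $e_n$. Unwinding these identifications, $\pi^{-1}(\pi(g))$ consists of precisely those $g''\in U(n)$ with $g''e_n \in \{(ge_n)e^{i\theta}:\,e^{i\theta}\in U(1)\}$; and an element $h\in U(n)$ satisfies $he_n = e^{i\theta}e_n$ for some $\theta$ exactly when $h\in U(n-1)\times U(1)$. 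Hence $\pi^{-1}(\pi(g)) = g\cdot(U(n-1)\times U(1)) = \{gh:\,h\in U(n-1)\times U(1)\}$, which is exactly the description of $\pi(g)$ used in the Grassmannian case.

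Given this, existence and uniqueness are a one-line computation. Since $\pi(g') = \pi(g)$, the element $g'$ lies in the coset $g\cdot(U(n-1)\times U(1))$, so there exists $h\in U(n-1)\times U(1)$ with $g = g'h$, i.e.\ $g = \iota(\pi(g))h$. If $g = g'h = g'h'$ with $h,h'\in U(n-1)\times U(1)$, then multiplying on the left by $g'^{-1}$ (which exists since $g'\in U(n)$) yields $h = g'^{-1}g = h'$. The only point requiring any care — and it is a minor bookkeeping matter rather than a genuine obstacle — is checking that the fiber of $\pi=\pi_2\circ\pi_1$ is exactly a \emph{left} coset of $U(n-1)\times U(1)$ and not a coset of some larger subgroup; once the identifications $CP^{n-1}\cong S(\mathbb{C}^n)/U(1)$ and $S(\mathbb{C}^n)\cong U(n)/(U(n-1)\times\{1\})$ established above are in hand, this is routine, and the remainder is identical to Proposition \ref{gr:uniqueexpression}.
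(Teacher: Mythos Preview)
Your proof is correct and follows essentially the same approach as the paper: the paper treats Proposition~\ref{pr:uniqueexpression} as the $k=1$ special case of Proposition~\ref{gr:uniqueexpression} and omits the proof, and your argument is precisely that specialization, with the added (harmless) verification that the fiber of $\pi$ is indeed the left coset $g\cdot(U(n-1)\times U(1))$.
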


Now we are well prepared to present the parametrization of unitary matrices by the canonical coset decompositon.  
Symbolically, the canonical coset decomposition is:
$$      U(n) = U(n)/(U(n-1) \times  U(1)) \cdot  U(n-1)/(U(n-2) \times  U(1))$$
$$      \cdot  \cdots \cdot  U(2)/(U(1) \times  U(1)) \cdot  (U(1) \times  \cdots \times  U(1)).$$
The following proposition shows the precise meaning of the above formula.
\begin{prop}  \label{pr:canonicaldec}
	For any section $\iota _{j}$ of $U(j)$
	on $CP^{j-1}$ for $\pi _{j}: U(j)
	\rightarrow  CP^{j-1}$, there is a unique surjection $f: U(n) \ni g \rightarrow (z_{n}, z_{n-1}, \ldots , z_{2}) \in  CP^{n-1}  \times  CP^{n-2} \times 
	\cdots \times  CP^{1}$ and a unique $h \in  U(1) \times  \cdots \times 
	U(1)$ such that	
	$$  g = \iota _n (z_n) \begin{pmatrix} \iota _{n-1}(z_{n-1}) & 0 \cr 0 & I_{1} \end{pmatrix} 
	\cdots \begin{pmatrix} \iota _{2}(z_{2}) & 0 \cr 0 & I_{n-2}
	\end{pmatrix} h
	. $$ 
\end{prop}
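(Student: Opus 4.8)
The plan is to read this as the case $k_1 = \cdots = k_n = 1$, $m = n$, of Proposition~\ref{gr:canonicaldec}, and to carry out the same recursion directly in the notation of this section: coset representatives are stripped off one level at a time, each time by an application of Proposition~\ref{pr:uniqueexpression}.

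First I would apply Proposition~\ref{pr:uniqueexpression} to $g \in U(n)$ with the section $\iota_n$. With $z_n = \pi_n(g) \in CP^{n-1}$ there is a unique $h_n \in U(n-1) \times U(1)$ such that $g = \iota_n(z_n) h_n$; writing
\[
h_n = \begin{pmatrix} g_{n-1} & 0 \\ 0 & \eta_n \end{pmatrix}, \qquad g_{n-1} \in U(n-1),\ \eta_n \in U(1),
\]
both $z_n$ and the pair $(g_{n-1},\eta_n)$ are uniquely determined by $g$. Then I would repeat this with $g_{n-1}$ in place of $g$, using the section $\iota_{n-1}$ of $U(n-1)$ on $CP^{n-2}$ and $\pi_{n-1}$: this produces a unique $z_{n-1} = \pi_{n-1}(g_{n-1}) \in CP^{n-2}$ and a unique factorization $g_{n-1} = \iota_{n-1}(z_{n-1}) \begin{pmatrix} g_{n-2} & 0 \\ 0 & \eta_{n-1} \end{pmatrix}$ with $g_{n-2} \in U(n-2)$, $\eta_{n-1} \in U(1)$. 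Substituting back and using the elementary block identity
\[
\begin{pmatrix} A \begin{pmatrix} B & 0 \\ 0 & c \end{pmatrix} & 0 \\ 0 & d \end{pmatrix}
= \begin{pmatrix} A & 0 \\ 0 & I_1 \end{pmatrix}
\begin{pmatrix} B & 0 & 0 \\ 0 & c & 0 \\ 0 & 0 & d \end{pmatrix}
\]
(with $A = \iota_{n-1}(z_{n-1})$, $B = g_{n-2}$, $c = \eta_{n-1}$, $d = \eta_n$) rewrites $g$ as $\iota_n(z_n)\begin{pmatrix}\iota_{n-1}(z_{n-1}) & 0 \\ 0 & I_1\end{pmatrix}$ times a block-diagonal element of $U(n-2) \times U(1) \times U(1)$. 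Iterating $n-1$ steps in all, the running upper-left block shrinks from $U(n-1)$ down to an element $g_1 \in U(1)$, and the accumulated scalars assemble into $h \in U(1) \times \cdots \times U(1)$ (the direct sum of $g_1$ and $\eta_2, \ldots, \eta_n$), which gives exactly the product formula in the statement with $f(g) = (z_n, z_{n-1}, \ldots, z_2)$.

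It remains to record uniqueness and surjectivity. Uniqueness of $f$ and of $h$ for the fixed sections $\iota_j$ is immediate, since at every step Proposition~\ref{pr:uniqueexpression} pins down $z_j$ and the extracted block uniquely, and the block identity above shows that the prefix $\iota_n(z_n)\cdots\begin{pmatrix}\iota_{j+1}(z_{j+1}) & 0 \\ 0 & I_{n-j-1}\end{pmatrix}$ already built up is never altered by the later steps. For surjectivity of $f$: given any $(z_n, \ldots, z_2) \in CP^{n-1} \times \cdots \times CP^1$, put
\[
g' = \iota_n(z_n) \begin{pmatrix} \iota_{n-1}(z_{n-1}) & 0 \\ 0 & I_1 \end{pmatrix} \cdots \begin{pmatrix} \iota_2(z_2) & 0 \\ 0 & I_{n-2} \end{pmatrix};
\]
since $\pi_j \circ \iota_j = \mathrm{id}$ at every level, running the recursion on $g'$ returns precisely $z_j$ at the $(n-j+1)$-th step (formally, by induction on $n$, observing that the whole tail after $\iota_n(z_n)$ lies in $U(n-1) \times \{1\}$ and there restricts to the analogous product one dimension lower), so $f(g') = (z_n, \ldots, z_2)$.

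The only point needing a little care — the main, and fairly minor, obstacle — is the bookkeeping of the nested block embeddings: one must verify that each stage contributes exactly the factor $\begin{pmatrix} \iota_j(z_j) & 0 \\ 0 & I_{n-j} \end{pmatrix}$ and that the still-unresolved lower-right blocks do not interfere with the factors already isolated, which is precisely what the displayed block identity guarantees when applied repeatedly. With that telescoping in place everything else is routine, and Proposition~\ref{pr:uniqueexpression} supplies all the substance.
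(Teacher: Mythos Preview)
Your proof is correct and follows essentially the same approach as the paper: the paper treats Proposition~\ref{pr:canonicaldec} as the special case $k_1=\cdots=k_n=1$ of Proposition~\ref{gr:canonicaldec}, whose proof is precisely the recursive peeling-off via Proposition~\ref{gr:uniqueexpression} (here, Proposition~\ref{pr:uniqueexpression}) together with the same block-diagonal identity you display, and the same surjectivity argument via $g'$. Your write-up is slightly more explicit about why $f(g')=(z_n,\ldots,z_2)$, but the substance is identical.
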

Nondegenerate density matrices are parametrized by
$$    \Lambda_n^{\neq} \times CP^{n-1}  \times  CP^{n-2} \times \cdots \times  CP^{1}.$$

Let $\pi $ be the natural map
$$\pi : U(n) \rightarrow U(n)/(U(1)\times \cdots \times U(1)).$$

\begin{cor}  \label{pr:canonicaldec}
	For any section $\iota _{j}$ of $U(j)$
	on $CP^{j-1}$ for $\pi _{j}: U(j)
	\rightarrow  CP^{j-1}$, there is a unique bijection 
	$$   \phi : U(n)/(U(1)\times \cdots \times U(1)) \ni \pi (g) \rightarrow $$
	$$  (z_{n}, z_{n-1}, \ldots , z_{2}) \in  CP^{n-1}  \times  CP^{n-2} \times 
	\cdots \times  CP^{1}$$
	such that 
	$$ \pi (g) = \pi (g^{\prime }) {\rm \ for \ } g^{\prime } = \psi (\phi (\pi (g))),$$
	where
	$$  \psi (z_n, z_{n-1}, \ldots , z_2) = \iota _n (z_n) \begin{pmatrix} \iota _{n-1}(z_{n-1}) & 0 \\ 0 & I_{1} \end{pmatrix} 
	\cdots \begin{pmatrix} \iota _{2}(z_{2}) & 0 \\ 0 & I_{n-2}
	\end{pmatrix}
	. $$ 
	$\psi \circ \phi $ is a section of $U(n)$ on $U(n)/(U(1)\times \cdots \times U(1))$ for $\pi $.
\end{cor}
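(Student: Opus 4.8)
The plan is to obtain this Corollary as the $k_j=1$ ($1\le j\le n$) specialization of Proposition \ref{gr:flagsection}, since $G(1,\mathbb C^m)\cong CP^{m-1}$; as a direct argument in the projective setting is short, I would spell it out. First I would invoke the canonical coset decomposition — the special case $k_j=1$ of Proposition \ref{gr:canonicaldec} — which supplies, for each $g\in U(n)$, a \emph{unique} tuple $(z_n,\ldots,z_2)\in CP^{n-1}\times\cdots\times CP^1$ and a \emph{unique} $h\in U(1)\times\cdots\times U(1)$ with $g=\psi(z_n,\ldots,z_2)\,h$. Then $\pi(g)=\pi(\psi(z_n,\ldots,z_2))$, so $\pi\circ\psi$ is already surjective, and I would \emph{define} $\phi(\pi(g))=(z_n,\ldots,z_2)$ from these data.

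Next I would check that $\phi$ depends only on the coset $\pi(g)$. The key observation is the chain of inclusions $U(1)\times\cdots\times U(1)\subset U(j-1)\times U(1)$ for $2\le j\le n$, under the identifications of Section \ref{grassmann}. Tracing the inductive construction in the proof of Proposition \ref{gr:canonicaldec}: $z_n=\pi_n(g)$ is unchanged under right multiplication of $g$ by $U(n-1)\times U(1)$, hence by $U(1)\times\cdots\times U(1)$; peeling off $\iota_n(z_n)$ via Proposition \ref{pr:uniqueexpression} replaces $g$ by some $g_{n-1}\in U(n-1)$, determined modulo $U(1)\times\cdots\times U(1)\subset U(n-2)\times U(1)$, so $z_{n-1}=\pi_{n-1}(g_{n-1})$ is unchanged, and so on down the chain. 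This makes $\phi$ a well-defined map on $U(n)/(U(1)\times\cdots\times U(1))$.

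Then I would show that $\pi\circ\psi$ and $\phi$ are mutually inverse. One direction is immediate: $g=\psi(\phi(\pi(g)))\,h$ with $h\in U(1)\times\cdots\times U(1)$ yields $\pi(g)=\pi(\psi(\phi(\pi(g))))$, i.e.\ $(\pi\circ\psi)\circ\phi=\mathrm{id}$, which is also the property demanded of $\phi$. For the converse I would fix $(z_n,\ldots,z_2)$, put $g'=\psi(z_n,\ldots,z_2)$, observe that $\begin{pmatrix}\iota_{n-1}(z_{n-1})&0\\0&I_1\end{pmatrix}\cdots\begin{pmatrix}\iota_2(z_2)&0\\0&I_{n-2}\end{pmatrix}$ equals $\begin{pmatrix}\psi_{n-1}(z_{n-1},\ldots,z_2)&0\\0&I_1\end{pmatrix}\in U(n-1)\times\{I_1\}$ (the $(n-1)$-version of $\psi$), so that, $\iota_n$ being a section, $\pi_n(g')=z_n$; peeling this off via Proposition \ref{pr:uniqueexpression} leaves precisely $\psi_{n-1}(z_{n-1},\ldots,z_2)$, and induction on $n$ recovers the remaining $z_j$ with trivial final $U(1)$-part. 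Hence $\phi(\pi(\psi(z_n,\ldots,z_2)))=(z_n,\ldots,z_2)$, so $\pi\circ\psi$ is injective, thus bijective, $\phi=(\pi\circ\psi)^{-1}$ is the unique map with the stated property (any other $\phi'$ with it would satisfy $(\pi\circ\psi)\circ\phi'=(\pi\circ\psi)\circ\phi$, hence $\phi'=\phi$), and $\pi\circ(\psi\circ\phi)=\mathrm{id}$ shows $\psi\circ\phi$ is a section of $U(n)$ on $U(n)/(U(1)\times\cdots\times U(1))$ for $\pi$. The main obstacle is purely bookkeeping — keeping the inclusions $U(1)\times\cdots\times U(1)\subset U(j-1)\times U(1)$ and the induction on $n$ explicit enough that both the well-definedness of $\phi$ and the identity $\phi\circ\pi\circ\psi=\mathrm{id}$ drop out cleanly; no ingredient beyond Propositions \ref{gr:canonicaldec} and \ref{pr:uniqueexpression} is needed.
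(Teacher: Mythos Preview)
Your proposal is correct and follows essentially the same approach as the paper: the corollary is presented there without proof, as the $k_j=1$ specialization of Proposition~\ref{gr:flagsection} (Section~\ref{projectsp} being explicitly the $k_j=1$ case of Section~\ref{grassmann}), and your spelled-out argument mirrors the reasoning the paper gives just before Proposition~\ref{gr:flagsection} for the general Grassmannian case.
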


\section{Local charts of Grassmannian} \label{jarlskog}
For $F \in  S(k, \mathbb C^{n})$ the submatrix $M(F)_{n-k+1, \ldots , n}$ is not necessarily nonsingular, unles $F \in \tilde{\Omega }$.  From the uniqueness of the polar decomposition
(see \cite{BB15}) we get
\beq  \label{bd:polar}
M(F)_{n-k+1, \ldots , n}^{*} = V\vert M(F)_{n-k+1, \ldots , n}^{*}\vert  
\eeq
for a unique partial isometry $V$ with null space $N(V) = N(M(F)_{n-k+1, \ldots , n}^{*})$ (cp. (\ref{gr:polar})).
There exists $U \in  U(k)$ such that the restriction $U_{\vert N(V)^{\bot }}$ of $U$ to the orthogonal complement $N(V)^{\bot }$ of $N(V)$ is $V$.  Consequently we have
$$    M(F)_{n-k+1, \ldots , n} = \vert M(F)_{n-k+1, \ldots , n}^{*}\vert U^{*} $$ 
for some $U^{*} \in  U(k)$.  So, we can select some frame
$F^{\prime } = \begin{pmatrix} X^{\prime } \cr Y^{\prime } \end{pmatrix} \in  S(k, \mathbb C^{n})$ 
such that $Y^{\prime }$ = $M(F)_{n-k+1, \ldots , n}U$ =
$\vert M(F)_{n-k+1, \ldots , n}^{*}\vert $ is a nonnegative operator and $X^{\prime }$ =
$M(F)_{1, \ldots , n-k}U$.  
Since the column vectors of
$\begin{pmatrix} X^{\prime } \cr Y^{\prime } \end{pmatrix} = \begin{pmatrix} XU \cr YU \end{pmatrix}$ give an orthonormal
frame, we have\newline
$$    X^{\prime *}X + Y^{\prime 2} =  (X^{\prime *}, Y^{\prime }) \begin{pmatrix} X^{\prime } \cr Y^{\prime }
\end{pmatrix} = I_{k}.$$ 
This shows that $X^{\prime } \in  \bar{B}(n-k, k)$ and $Y^{\prime } = (I_{k} - X^{\prime *}X^{\prime })^{1/2}$ where we use the notation
$$      \bar{B}(n-k, k) = \{ X \in  M(n-k, k); X^{*}X \leq I_k\} .$$
\begin{rem}
	We selected a frame $F^{\prime }$ among the coset $FU(k)$ under the condition that $Y^{\prime }$ is nonnegative.
	But this condition can not determine a unique frame $F^{\prime }$, because there are many elements $U \in U(k)$ such that the restriction $U_{\vert N(V)^{\bot }}$ of $U$ to the orthogonal complement $N(V)^{\bot }$ of $N(V)$ is $V$.
\end{rem}
\begin{defn}
	Define the mapping $\bar{\kappa }_{e}: \bar{B}(n-k, k) \rightarrow  G(k, \mathbb C^{n})$ by\newline
	$$    \bar{\kappa }_{e}: \bar{B}(n-k, k) \ni  X \rightarrow  \pi _{2} \begin{pmatrix} X \cr (I_{k} - X^{*}X)^{1/2} \end{pmatrix} \in G(k, \mathbb C^{n}) .$$ 
\end{defn}
\begin{rem}  \label{bd:kappa}
	The mapping
	$$    \kappa _{e}: B(n-k, k) \ni  X \rightarrow  \pi _{2} \begin{pmatrix} X \cr (I_{k} - X^{*}X)^{1/2} \end{pmatrix} \in  \Omega .$$ 
	of Proposition \ref{gr:kappa} is bijective.  But the mapping $\bar{\kappa }_{e}$ of the above definition
	is not bijective but only surjective.
\end{rem}
For $X \in  \bar{B} (n-k, k)$ denote
$$    W(X) = \begin{pmatrix} (I - XX^{*})^{1/2} & X \cr -X^{*}& Y
\end{pmatrix}, \  Y = (I - X^{*}X)^{1/2}. $$ 

\begin{prop}  \label{bd:section}
	For any $g \in  U(n)$ we can find $X \in  \bar{B} (n-k,
	k)$, $V \in  U(k)$ and  $h \in U(n-k) \times  I_{k}$
	such that\newline
	$$    g = \begin{pmatrix} (I - XX^{*})^{1/2} & X \cr -X^{*}& Y
	\end{pmatrix} 
	\begin{pmatrix} I_{n-k} & O \cr O & V \end{pmatrix} h$$ 
	$$    = \begin{pmatrix} (I - XX^{*})^{1/2} & X \cr -X^{*}& Y
	\end{pmatrix} 
	\begin{pmatrix} U & O \cr O & V \end{pmatrix},$$ 
	with $U \in  U(n-k)$ and $\bar{B} (m, k) = \{ X\in M(m, k); X^{*}X \leq  I_{k}\} $.\newline
\end{prop}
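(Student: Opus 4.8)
\emph{Proof plan.} The idea is to run the polar-decomposition construction from the beginning of Section~\ref{jarlskog} on the last $k$ columns of $g$ and then read off the leftover factor as block diagonal. First I would write the last $k$ columns of $g$ as $\pi_{1}(g) = \begin{pmatrix} A \cr B \end{pmatrix} \in S(k,\C^{n})$, with $A$ of size $(n-k)\times k$ and $B$ of size $k\times k$. Exactly as in~(\ref{bd:polar}) I polar-decompose $B^{*} = V\abs{B^{*}}$ with $V$ the unique partial isometry with $N(V) = N(B^{*})$, extend $V$ to a $U \in U(k)$ with $U_{\vert N(V)^{\bot}} = V$, so that $B = \abs{B^{*}}U^{*}$, i.e.\ $BU = \abs{B^{*}} \geq 0$, and set $X := AU$. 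Orthonormality of the columns of $\pi_{1}(g)$ then gives $X^{*}X + (BU)^{*}(BU) = U^{*}(A^{*}A + B^{*}B)U = I_{k}$, whence $X^{*}X = I_{k} - \abs{B^{*}}^{2} \leq I_{k}$; thus $X \in \bar B(n-k,k)$, $(I_{k} - X^{*}X)^{1/2} = \abs{B^{*}} = BU =: Y$, and $\pi_{1}(g)\,U = \begin{pmatrix} X \cr Y \end{pmatrix}$.

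Next I would use that $W(X)$ is unitary for every $X \in \bar B(n-k,k)$, which is Proposition~\ref{li:WX} (the computation there never uses strictness of $X^{*}X < I_{k}$: the off-diagonal blocks of $W(X)^{*}W(X)$ vanish by the functional-calculus identity $Xf(X^{*}X) = f(XX^{*})X$, and the diagonal blocks are $I$). Since the last $k$ columns of $W(X)$ are $\begin{pmatrix} X \cr Y \end{pmatrix}$, we have $W(X)^{*}\begin{pmatrix} X \cr Y \end{pmatrix} = \begin{pmatrix} O \cr I_{k} \end{pmatrix}$, so the last $k$ columns of $M := W(X)^{*}g$ form $W(X)^{*}\begin{pmatrix} A \cr B \end{pmatrix} = W(X)^{*}\begin{pmatrix} X \cr Y \end{pmatrix}U^{*} = \begin{pmatrix} O \cr U^{*} \end{pmatrix}$. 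As $M \in U(n)$ and these last $k$ columns are an orthonormal basis of the span of $e_{n-k+1},\ldots,e_{n}$, the first $n-k$ columns of $M$ are orthogonal to that subspace, hence of the form $\begin{pmatrix} C \cr O \end{pmatrix}$ with $C \in U(n-k)$; so $M = \begin{pmatrix} C & O \cr O & U^{*} \end{pmatrix}$. Then
$$ g = W(X)M = W(X)\begin{pmatrix} I_{n-k} & O \cr O & U^{*} \end{pmatrix}\begin{pmatrix} C & O \cr O & I_{k} \end{pmatrix} = W(X)\begin{pmatrix} C & O \cr O & U^{*} \end{pmatrix}, $$
which is the asserted decomposition with $V := U^{*} \in U(k)$, $h := \begin{pmatrix} C & O \cr O & I_{k} \end{pmatrix} \in U(n-k)\times I_{k}$, and $U := C \in U(n-k)$ in the second displayed line.

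I expect the only real obstacle to be the singular case $\det B = 0$: when $B$ is invertible one is inside $\tilde\Omega$ and the clean theory of Section~\ref{grassmann} already applies, so the new content is precisely the handling of a genuine partial isometry and its extension to a full element of $U(k)$. I would also emphasize two small points: this extension --- hence $X$, $U$, $V$ --- is not unique (cf.\ the remark following~(\ref{bd:polar})), so the proposition asserts existence only; and one must check that $W(X)$ stays unitary all the way up to the boundary $X^{*}X = I_{k}$, which the computation above records.
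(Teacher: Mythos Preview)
Your argument is correct and follows exactly the approach the paper sets up in the discussion preceding the proposition: the polar decomposition~(\ref{bd:polar}) of the lower block, extension of the partial isometry to an element of $U(k)$, and the identification $Y'=(I_k-X'^{*}X')^{1/2}$ are precisely the ingredients you use, with unitarity of $W(X)$ on $\bar B(n-k,k)$ supplied by Proposition~\ref{li:WX}. The paper leaves the final step (that $W(X)^{*}g$ is block diagonal) implicit, whereas you write it out; otherwise the two arguments coincide.
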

\begin{rem}  \label{bd:unique}
	The above proposition is the counterpart of Proposition \ref{gr:uniqueexpression}.  Note that $V \in  U(k)$ is not unique, and consequently, $X$ and $U$ are also not unique.  If we use $B(n-k, k)$ instead of $\bar{B}(n-k, k)$, then we have the uniqueness.  The above proposition  only holds if $g$ satisfies $\pi (g) \in \Omega $. The question of uniqueness is addressed in the following proposition.
\end{rem}

\begin{prop}  \label{bd:unique}
	Let $g, g^{\prime } \in  U(n-k_{1}- \cdots -k_{j})$ and $W(X_{j}),
	W(X^{\prime }_{j})$ for $X_{j}, X^{\prime }_{j}  \in  \bar{B} (n-k_{1}- \cdots -k_{j}, k_{j})$ such that $W(X_{j})g =
	W(X^{\prime }_{j})g^{\prime }$.  Then $X_{j} = X^{\prime }_{j}$ and $g = g^{\prime }$.
\end{prop}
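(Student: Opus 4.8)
The statement is purely a matter of unwinding the block form of $W(X)$ in (\ref{gr:WX}), so the plan is to compute. Write $m = n - k_{1} - \cdots - k_{j}$ and $k = k_{j}$, so that $X_{j}, X_{j}^{\prime} \in \bar{B}(m,k)$ and $W(X_{j}), W(X_{j}^{\prime}) \in U(m+k)$, while $g, g^{\prime} \in U(m)$ enter the products $W(X_{j})g$, $W(X_{j}^{\prime})g^{\prime}$ read, as in Proposition \ref{gr:canonicaldec}, as the block matrices $\begin{pmatrix} g & 0 \\ 0 & I_{k} \end{pmatrix}$, $\begin{pmatrix} g^{\prime} & 0 \\ 0 & I_{k} \end{pmatrix}$ in $U(m+k)$. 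For $X \in \bar{B}(m,k)$ and $g \in U(m)$,
$$ W(X)\begin{pmatrix} g & 0 \\ 0 & I_{k} \end{pmatrix} = \begin{pmatrix} (I_{m} - XX^{*})^{1/2}g & X \\ -X^{*}g & (I_{k} - X^{*}X)^{1/2} \end{pmatrix}; $$
applying this to $(X_{j},g)$ and to $(X_{j}^{\prime},g^{\prime})$ and comparing the upper-right blocks of the identity $W(X_{j})g = W(X_{j}^{\prime})g^{\prime}$ forces $X_{j} = X_{j}^{\prime}$, whereupon the lower-right blocks coincide as well.

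Now put $X := X_{j} = X_{j}^{\prime}$ and $D := g - g^{\prime}$. Equality of the upper-left blocks reads $(I_{m} - XX^{*})^{1/2}D = 0$, and equality of the lower-left blocks reads $X^{*}D = 0$; it remains to deduce $D = 0$. This is the one delicate point: since $X$ is allowed to lie on the boundary of the ball --- the passage from the open $B(m,k)$ to its closure $\bar{B}(m,k)$ being exactly what distinguishes this section from Section \ref{grassmann}, cf.\ Remark \ref{bd:kappa} --- the factor $(I_{m} - XX^{*})^{1/2}$ need not be invertible, so the first relation by itself does not give $D = 0$. The two relations together do: applying $(I_{m} - XX^{*})^{1/2}$ once more to the first relation gives $(I_{m} - XX^{*})D = 0$, i.e.\ $D = XX^{*}D$, and then
$$ D = XX^{*}D = X(X^{*}D) = 0. $$
(Equivalently, every column of $D$ lies in the range of $X$ and in $N(X^{*})$, which is its orthogonal complement.) Hence $g = g^{\prime}$, and the proposition is proved.

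I do not expect a serious obstacle. The argument uses only elementary manipulations with positive selfadjoint matrices and their square roots; the uniqueness of the polar decomposition (\ref{bd:polar}) underlying this section has already been used up in constructing $W(X)$ and plays no further role here, and even the unitarity of $W(X)$ is not invoked. The only thing requiring care is bookkeeping --- keeping track of which unitary group each matrix belongs to, and in particular making explicit the embedding $g \mapsto \begin{pmatrix} g & 0 \\ 0 & I_{k} \end{pmatrix}$ concealed in the notation $W(X_{j})g$. Once that is pinned down, the proof is just the two block comparisons followed by the short vanishing argument above.
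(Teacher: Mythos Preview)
Your proof is correct and the key step---reading off the last $k$ columns of $W(X_j)\begin{pmatrix} g & 0 \\ 0 & I_k\end{pmatrix}$ to force $X_j = X_j'$---is exactly what the paper does (it phrases this as applying the projection $\pi_1$ of (\ref{gr:pi1})). Your bookkeeping about the embedding $g\mapsto\begin{pmatrix} g & 0\\ 0 & I_k\end{pmatrix}$ is also on target; the paper's dimensions are written loosely here.

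Where you diverge is in the second step. Once $X_j = X_j'$ you have $W(X_j) = W(X_j')$ as matrices, and the paper simply cancels: $g = W(X_j)^{-1}W(X_j')g' = g'$, using that $W(X_j)$ is unitary (Proposition~\ref{li:WX}). Your block-by-block vanishing argument for $D = g-g'$ via $(I_m - XX^*)^{1/2}D = 0$ and $X^*D = 0$ is valid and has the mild virtue of not invoking unitarity of $W(X)$, but it is more work than needed: invertibility of $W(X_j)$ is already established and makes the conclusion immediate.
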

\begin{proof}  Let
	$$    \pi _1: U(n-k_{1}- \cdots -k_{j}) \rightarrow  S(k_{j}, \mathbb C^{n-k_{1}- \cdots -k_{j}})$$  
	be the projection defined by (\ref{gr:pi1}).  Then $\begin{pmatrix} X_{j} \cr Y_{j}
	\end{pmatrix}$ = $\pi (W(X_{j})g)$ = $\pi (W(X^{\prime }_{j})g^{\prime })$ = $\begin{pmatrix} X^{\prime }_{j}
	\cr Y^{\prime }_{j} \end{pmatrix}$, and $X_{j} = X^{\prime }_{j}$, $g = W(X_{j})^{-1}W(X^{\prime }_{j})g^{\prime } = g^{\prime }$.
	
	It follows from Proposition \ref{bd:section} that for any $U_{1} \in  U(n-k_{1})$, there is
	$W(X_{2})$ for $X_{2} \in  \bar{B} (n-k_{1}-k_{2}, k_{2})$ such that
	$$    U_{1} = W(X_{2}) \begin{pmatrix} U_{2} & O \cr O & V_{2} \end{pmatrix}
	= \begin{pmatrix} (I - X_{2}X_{2}^{*})^{1/2}U_{2} & X_{2}V_{2} \cr -X_{2}^{*}U_{2} & Y_{2}V_{2}
	\end{pmatrix},$$ 
	where $U_{2} \in  U(n-k_{1}-k_{2})$ and $V_{2} \in  U(k_{2})$.  Then  (5) and (6) imply, for $g \in  U(n)$,
	$$    g = W(X_{1})
	\begin{pmatrix}
	\begin{pmatrix} (I - X_{2}X_{2}^{*})^{1/2}U_{2} & X_{2}V_{2} \cr -X_{2}^{*}U_{2} & Y_{2}V_{2} \end{pmatrix} 
	& O \cr O & V_{1} \end{pmatrix}$$ 
	$$    = W(X_{1})
	\begin{pmatrix}
	\begin{pmatrix} (I - X_{2}X_{2}^{*})^{1/2} & X_{2} \cr -X_{2}^{*} & Y_{2} \end{pmatrix} 
	& O \cr O & I_{k_{1}} \end{pmatrix}
	\begin{pmatrix}
	U_{2} & O  & O \cr O & V_{2} & O \cr
	O  & O &  V_{1} \end{pmatrix}.$$  
\end{proof}
Iteration of this procedure gives
\begin{prop}
	(1)  For any $g \in  U(n)$ there exists $(X_{m}, \ldots , X_{2}) \in  \bar{B} (n-k_{m},
	k_{m}) \times  \cdots \times  \bar{B} (n-k_{3}- \cdots - k_{m}, k_{2})$ and $(V_{1}, \ldots , V_{m}) \in 
	U(k_{1}) \times \cdots \times  U(k_{m})$ such that\newline
	$$    g = W(X_{m}) 
	\begin{pmatrix} W(X_{m-1}) & O \cr O & I_{k_{m}} \end{pmatrix} \cdots
	\begin{pmatrix} W(X_{2}) & O \cr O & I_{k_{3}+ \cdots + k_{m}} \end{pmatrix} 
	\begin{pmatrix} V_{1} & \ldots & O \cr \vdots & \ddots & \vdots \cr O & \ldots & V_{m} \end{pmatrix} .$$ 
	(2)  The mapping\newline
	$$    \bar{B} (n-k_{m}, k_{m}) \times  \cdots \times  \bar{B} (n-k_{3}- \cdots - k_{m}, k_{2}) \ni  (X_{m}, \ldots , X_{2}) \rightarrow  $$ 
	$$    W(X_{m}) 
	\begin{pmatrix} W(X_{m-1}) & O \cr O & I_{k_{m}} \end{pmatrix} \cdots
	\begin{pmatrix} W(X_{2}) & O \cr O & I_{k_{3}+ \cdots + k_{m}} \end{pmatrix} \in  U(n) $$ 
	is injective.
\end{prop}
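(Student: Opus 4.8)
The plan is to derive part (1) by iterating Proposition \ref{bd:section} and part (2) by stripping off the outermost factor $W(X_m)$ and inducting on the number of $W$-factors. For part (1), I would apply Proposition \ref{bd:section} first to $g \in U(n)$ with $k = k_m$, obtaining $X_m \in \bar B(n-k_m, k_m)$, $U_m \in U(n-k_m)$ and $V_m \in U(k_m)$ with $g = W(X_m)\begin{pmatrix} U_m & O \cr O & V_m \end{pmatrix}$. Since $U_m \in U(n-k_m) = U(k_1 + \cdots + k_{m-1})$, I would apply the same proposition to $U_m$ with $k = k_{m-1}$, substitute, and pull the block $\begin{pmatrix} W(X_{m-1}) & O \cr O & I_{k_m} \end{pmatrix}$ to the left of the remaining block-diagonal factor. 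The pattern is that, after the step peeling off $W(X_j)$, what is left is (the product of the $W$-blocks already extracted) followed by $\mathrm{diag}(U_j, V_{j+1}, \ldots, V_m)$ with $U_j \in U(k_1 + \cdots + k_j)$. Continuing down to the $U(k_1 + k_2)$ block and applying Proposition \ref{bd:section} once more with $k = k_2$ produces $X_2 \in \bar B(k_1, k_2) = \bar B(n - k_3 - \cdots - k_m, k_2)$ and a leftover factor $V_1 \in U(k_1)$; the accumulated block-diagonal factors then collapse into $\mathrm{diag}(V_1, \ldots, V_m)$, which is the displayed formula.

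For part (2) I would induct on $m$. For $m = 2$ the map is $X_2 \mapsto W(X_2)$, and $W(X_2) = W(X_2')$ forces $X_2 = X_2'$ by reading off the upper-right block in (\ref{gr:WX}). For the inductive step, I would write the product as $W(X_m)\begin{pmatrix} B & O \cr O & I_{k_m} \end{pmatrix}$, where $B \in U(n-k_m)$ is a product of exactly the same form with parameters $(k_1, \ldots, k_{m-1})$ — one checks that the inner factors multiply out to this block-diagonal shape — and $W(X_m)$ is unitary by Proposition \ref{li:WX}. Given two tuples with the same product, I would apply $\pi_1$ of (\ref{gr:pi1}), i.e. right multiplication by $\begin{pmatrix} O \cr I_{k_m} \end{pmatrix}$: since $\begin{pmatrix} B & O \cr O & I_{k_m} \end{pmatrix}$ fixes $\begin{pmatrix} O \cr I_{k_m} \end{pmatrix}$, both sides reduce to $\begin{pmatrix} X_m \cr (I_{k_m} - X_m^* X_m)^{1/2} \end{pmatrix}$, whence $X_m = X_m'$ and $W(X_m) = W(X_m')$; cancelling this invertible factor and then the trivial bottom block leaves $B = B'$, and the inductive hypothesis applied to $B$ finishes the argument. (This peeling step is essentially Proposition \ref{bd:unique}, which could be quoted instead of re-deriving it.)

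The calculations involved are routine: verifying that $W(X)$ is unitary for every $X \in \bar B(n-k, k)$ (the argument behind Proposition \ref{li:WX} applies verbatim when $X^*X \leq I_k$), and the bookkeeping that the inner factors genuinely assemble into $\begin{pmatrix} B & O \cr O & I_{k_m} \end{pmatrix}$ with $B$ of the same type one level down. I do not expect a genuine obstacle here; the single point worth stressing is the contrast with part (1): injectivity holds for the bare product of $W$-factors only, since once the trailing $\mathrm{diag}(V_1, \ldots, V_m)$ is present neither the $X_j$ nor the $V_j$ are determined, as the remarks around Proposition \ref{bd:unique} and the discussion of singular polar decompositions in Section \ref{jarlskog} already make clear.
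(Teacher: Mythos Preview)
Your proposal is correct and follows essentially the same approach as the paper: part (1) is obtained by iterating Proposition \ref{bd:section} exactly as you describe, and part (2) is the peeling argument via $\pi_1$, which is precisely the content of Proposition \ref{bd:unique} (the paper's proof simply cites these two propositions). Your write-up is in fact more explicit than the paper's, which leaves the iteration and the induction implicit.
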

\begin{proof}
	The existence of $X_{j}$ and $V_{j}$ follows from Proposition \ref{bd:section} and the
	injectivity of the mapping follows from Proposition \ref{bd:unique}.
\end{proof}
\begin{rem}
	The above proposition is the counterpart of Propositon \ref{gr:canonicaldec} and the following proposition is the counterpart of Proposition \ref{pr:canonicaldec}.
\end{rem}
\begin{prop}
	Let $x_j \in \bar{B}(j-1) = \{x \in \mathbb C ^{j-1}; \norm{x} \leq 1\}$ and
	\beq  \label{bd:Wxj}
	W(x_{j}) = \begin{pmatrix} (I_{j-1} - x_{j}x_{j}^{*})^{1/2} & x_{j} \cr -x_{j}^{*}& y_{j}
	\end{pmatrix}, \  y_{j} = (1 - x_{j}^{*}x_{j})^{1/2} .
	\eeq
	(1)  For any $g \in  U(n)$ there exists $(x_{n}, x_{n-1}, \ldots , x_{2}) \in  \bar{B} (n-1) \times \bar{B} (n-2) \times  \cdots \times  \bar{B} (1)$ and $(V_{1}, \ldots , V_{n}) \in 
	U(1) \times \cdots \times  U(1)$ such that
	\beq  \label{bd:parameter}
	g = W(x_{n}) 
	\begin{pmatrix} W(x_{n-1}) & O \cr O & I_{1} \end{pmatrix} \cdots
	\begin{pmatrix} W(x_{2}) & O \cr O & I_{n-2} \end{pmatrix} 
	\begin{pmatrix} V_{1} & \ldots & O \cr \vdots & \ddots & \vdots \cr O & \ldots & V_{n} \end{pmatrix} .
	\eeq 
	(2)  The mapping\newline
	$$    \bar{B} (n-1) \times  \cdots \times  \bar{B} (1) \ni  (x_{n}, \ldots , x_{2}) \rightarrow  $$ 
	$$    W(x_{n}) 
	\begin{pmatrix} W(x_{n-1}) & O \cr O & I_{1} \end{pmatrix} \cdots
	\begin{pmatrix} W(x_{2}) & O \cr O & I_{n-2} \end{pmatrix} \in  U(n) $$ 
	is injective.
\end{prop}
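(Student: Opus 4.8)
The plan is to recognize this Proposition as the special case $m=n$, $k_{j}=1$ for $1\le j\le n$, of the Proposition stated immediately above, and to record how Propositions~\ref{bd:section} and~\ref{bd:unique} combine to yield it. First I would note the two identifications that make the reduction literal: $\bar B(j-1)=\bar B(j-1,1)$ as a subset of $\C^{j-1}$, and the matrix $W(x_{j})$ of~(\ref{bd:Wxj}) is exactly the matrix $W(X)$ of Section~\ref{grassmann} evaluated at the column $X=x_{j}$, since $X^{*}X=x_{j}^{*}x_{j}$ is then a scalar; moreover $U(1)\times\cdots\times U(1)$ ($n$ factors) is the group $U(k_{1})\times\cdots\times U(k_{m})$ in the present case. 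Both parts will be proved by induction on $n$, the base case $n=1$ being trivial (take $V_{1}=g$ and no $x_{j}$'s).

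For part~(1), given $g\in U(n)$ I would apply Proposition~\ref{bd:section} with $k=1$: there are $x_{n}\in\bar B(n-1)$, $V_{n}\in U(1)$ and $U\in U(n-1)$ with
$$
g=W(x_{n})\begin{pmatrix} U & O \\ O & V_{n}\end{pmatrix}.
$$
Then the induction hypothesis applied to $U\in U(n-1)$ expresses $U$ as $W(x_{n-1})\begin{pmatrix} W(x_{n-2}) & O \\ O & I_{1}\end{pmatrix}\cdots\begin{pmatrix} W(x_{2}) & O \\ O & I_{n-3}\end{pmatrix}$ times $\mathrm{diag}(V_{1},\dots,V_{n-1})$, with $x_{j}\in\bar B(j-1)$ and $V_{j}\in U(1)$. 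Substituting, and using that a block-diagonal factor $\mathrm{diag}(D,I_{1})$ embeds into the full $n\times n$ matrix by enlarging the lower identity block from $I_{\ell}$ to $I_{\ell+1}$, while $\mathrm{diag}(V_{1},\dots,V_{n-1},I_{1})\,\mathrm{diag}(I_{n-1},V_{n})=\mathrm{diag}(V_{1},\dots,V_{n})$, one arrives at the asserted expression. This is the only bookkeeping step, and it is routine; the factors $V_{j}$ need not be unique (cf.\ the Remark preceding this Proposition).

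For part~(2), suppose two tuples $(x_{n},\dots,x_{2})$ and $(x_{n}\pr,\dots,x_{2}\pr)$ produce the same $g\in U(n)$ under the displayed map. Writing $g=W(x_{n})\begin{pmatrix} M & O \\ O & I_{1}\end{pmatrix}=W(x_{n}\pr)\begin{pmatrix} M\pr & O \\ O & I_{1}\end{pmatrix}$ with $M,M\pr\in U(n-1)$ the products of the remaining $W$-blocks, I would invoke Proposition~\ref{bd:unique} with $k_{j}=1$ to conclude $x_{n}=x_{n}\pr$ and $M=M\pr$; concretely, $x_{n}$ is read off as the first $n-1$ entries of the last column $ge_{n}=W(x_{n})e_{n}$ via~(\ref{gr:pi1}) (the block behind $W(x_{n})$ fixes $e_{n}$), and then $M$ is the upper-left block of $W(x_{n})^{*}g$. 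Since $M$ is again a product of $W$-blocks of the same shape in $U(n-1)$, the induction hypothesis gives $x_{n-1}=x_{n-1}\pr,\dots,x_{2}=x_{2}\pr$, and injectivity follows.

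I do not expect a genuine obstacle: all of the analytic content---uniqueness of polar decomposition, unitarity of the matrices $W(X)$ (Proposition~\ref{li:WX}), and the peeling-off lemma---already sits in Propositions~\ref{bd:section} and~\ref{bd:unique}. The only points needing care are the two identifications in the first paragraph and the block-matrix bookkeeping that moves the diagonal $U(1)$-factors to the right of the product; this is exactly parallel to Proposition~\ref{pr:canonicaldec} and to the preceding Proposition, and the non-uniqueness of the $V_{j}$ is harmless since part~(2) concerns only the $W$-part of the product.
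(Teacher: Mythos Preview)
Your proposal is correct and follows essentially the same route as the paper: the paper presents this Proposition as the $k_j=1$ special case of the immediately preceding Proposition (whose proof reads ``existence from Proposition~\ref{bd:section}, injectivity from Proposition~\ref{bd:unique}''), and you spell out precisely that reduction together with the induction and block-matrix bookkeeping. Your added remark that $x_n$ is read off as the top $n-1$ entries of $ge_n$ is exactly the content of the paper's proof of Proposition~\ref{bd:unique} specialized to $k_j=1$.
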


Instead of $\bar{B}(j-1)$ in \cite{BP08,BMMP12}
the following parameter space $\bar{Q}_{j}$ 
\beq  \label{bd:Qj}
\bar{Q}_{j} = \{ (\theta _{j}, \zeta _{j}); 0 \leq  \theta  \leq  \pi /2, \zeta _{j} \in  S(\mathbb C^{j-1})\}, \ S(\mathbb C^{j-1}) = \{\zeta \in \mathbb C^{j-1}; \norm{\zeta } = 1\}
\eeq
is introduced for the Jarlskog parametrization \cite{Ja05}, \cite{Ja06}.
The mapping
$$      \bar{Q}_{j} \ni  (\theta _{j}, \zeta _{j}) \rightarrow  \sin \theta _{j} \zeta _{j} = x \in  \bar{B}(j-1) = \{ x \in  \mathbb C^{j-1}; \Vert x\Vert  \leq  1\} $$
shows that both parameter spaces are the same.

The formula in \cite{Ja05} which correspond to the formula (\ref{bd:parameter}) is
$$  U_n = A_{n,n}A_{n,n-1} \cdots A_{n,2}D(e^{i \alpha _1}, \ldots e^{i \alpha _n}),$$
where, using bra and ket notation of Physics,
$$  A_{n,j} = \begin{pmatrix} V_{j}(\theta _{j}, \zeta _{j}) & 0 \cr 0 & I_{n-j} \end{pmatrix},
V_{j}(\theta _{j}, \zeta _{j}) = \begin{pmatrix} I_{j-1} - (1 - \cos \theta _{j})\vert \zeta _{j}\rangle \langle \zeta _{j}\vert  & \sin \theta _{j} \vert \zeta _{j}\rangle  \cr
- \sin \theta _{j} \langle \zeta _{j}\vert  & \cos \theta _{j} \end{pmatrix} . $$ 
For  $x=\sin{\theta_j} \zeta_j$, $V_{j}(\theta _{j}, \zeta _{j})$ and $W(x)$ of (\ref{bd:Wxj}) are precisely the same.\newline

\indent In \cite{Ak07b} there is a statement that a typical coset
representative in the coset space $U(2)/(U(1)\times U(1))$ is\newline
$$    V = \begin{pmatrix} \cos \alpha  & e^{i\phi } \sin \alpha  \cr -e^{-i\phi } \sin \alpha  & \cos
\alpha  \end{pmatrix}. $$
$V$ is $V_{1}(\theta _{1}, \zeta _{1})$ for $(\theta _1, \zeta _1) = (\alpha , \phi ) \in \bar{Q}_1$.
But if $\cos \alpha  = 0$ and $e^{i\phi } \neq 1$,\newline
$$    \begin{pmatrix} 0 & e^{i\phi } \cr -e^{-i\phi } & 0  \end{pmatrix}
\neq \begin{pmatrix} 0 & 1 \cr - 1 & 0  \end{pmatrix} $$ 
and\newline
$$    \begin{pmatrix} 0 & e^{i\phi } \cr -e^{-i\phi } & 0  \end{pmatrix}
= \begin{pmatrix} 0 & e^{i\phi } \cr -e^{-i\phi } & 0  \end{pmatrix}
\begin{pmatrix} 1 & 0 \cr 0 & 1  \end{pmatrix}
= \begin{pmatrix} 0 & 1 \cr - 1 & 0  \end{pmatrix} 
\begin{pmatrix} e^{-i\phi } & 0 \cr 0 & e^{i\phi }  \end{pmatrix}.$$
This is an example showing that the expression of Proposition \ref{bd:section} is not unique. 
We find
$$    \begin{pmatrix} 0 & e^{i\phi } \cr -e^{-i\phi } & 0  \end{pmatrix}
\begin{pmatrix} \lambda  & 0 \cr 0 & \mu   \end{pmatrix} 
\begin{pmatrix} 0 & e^{i\phi } \cr -e^{-i\phi } & 0  \end{pmatrix} ^{*}= \begin{pmatrix} \mu &0\\0& \lambda \end{pmatrix}
= \begin{pmatrix} 0 & 1 \cr -1 & 0  \end{pmatrix}
\begin{pmatrix} \lambda  & 0 \cr 0 & \mu   \end{pmatrix} 
\begin{pmatrix} 0 & 1 \cr -1 & 0  \end{pmatrix} ^{*},$$ 
which shows that the density matrix $\begin{pmatrix} \mu  & 0 \cr 0 & \lambda   \end{pmatrix} $ has two parametrization, i.e., $((\lambda , \mu ), 1)$ and $((\lambda , \mu ), e^{i \alpha })$.

However, if we use the parameter space 
$$  Q_{j} = \{ (\theta _{j}, \zeta _{j}); 0 \leq  \theta  <  \pi /2, \zeta _{j} \in  S(\mathbb C^{j-1})\}, \ S(\mathbb C^{j-1}) = \{\zeta \in \mathbb C^{j-1}; \norm{\zeta } = 1\} $$
instead of  $\bar{Q}_{j}$ of (\ref{bd:Qj}), then the uniqueness is recovered.  But not every $g$ can not be expressed by such parameters
(see Remark \ref{bd:unique}).
Let $g = \begin{pmatrix} 0 & e^{i\phi } \cr -e^{-i\phi } & 0  \end{pmatrix}$.  Then $\pi (g) = \pi _2 \begin{pmatrix} e^{i\phi } \cr 0 \end{pmatrix}$ does not belong to $\Omega  = \Omega _2$ but belongs to $\Omega _1 = \pi _2 (\tilde {\Omega }_1)$, where $\tilde {\Omega }_j = \{z = (z_1, z_2) \in S(\mathbb C^2); z_j \neq 0\}$.  The unitary matrix $U_1$ in Remark \ref{pr:usigma} is 
$\begin{pmatrix} 0 & 1 \cr 1 & 0  \end{pmatrix}$, and the mapping $\kappa _1$ which gives bijection between $B(1)$ and $\Omega _1$ is $\kappa _1 = U_1 \kappa $:
$$  \kappa _1: B(1) \ni 0 \rightarrow \begin{pmatrix} 0 & 1 \cr 1 & 0  \end{pmatrix} 
\pi _2 \begin{pmatrix} 0 \cr 1 \end{pmatrix} = \pi _2\begin{pmatrix} 1 \cr 0 \end{pmatrix}
= \pi _2 \begin{pmatrix} e^{i\phi } \cr 0 \end{pmatrix},$$
and
$$  \lambda _1 = U_1 \lambda : B(1) \ni 0 \rightarrow   \begin{pmatrix} 0 & 1 \cr 1 & 0  \end{pmatrix}
\begin{pmatrix} 1 & 0 \cr 0 & 1  \end{pmatrix} = \begin{pmatrix} 0 & 1 \cr 1 & 0  \end{pmatrix}
= \iota _1 (\pi (g)).
$$
Thus $g$ has the unique form
$$  g = \begin{pmatrix} 0 & e^{i\phi } \cr -e^{-i\phi } & 0  \end{pmatrix} = \iota _1 (\pi (g)) h
= \begin{pmatrix} 0 & 1 \cr 1 & 0  \end{pmatrix} \begin{pmatrix} -e^{-i\phi } & 0 \cr 0 & e^{-i\phi }  \end{pmatrix}.  $$

It follows from Remarks \ref{bd:kappa} and \ref{bd:unique} that Proposition \ref{bd:section} for the parameter space	
$B(n-k, k)$	is valid only if $\pi (g) \in \Omega $.  As shown in the following, $G(k, \mathbb C^{n}) \backslash  \Omega  = \partial \Omega $.  So, in most cases $\partial \Omega _{\sigma }$ is negligible.

\begin{defn}
	Let $(\Omega _{\sigma }, \phi _{\sigma })_{\sigma \in S}$ be the system of (at most) countable coordinate neighborhoods of a $m$-dimensional manifold $M$.  A subset $A \subset M$ is said to have the measure zero if for every coordinate neighborhood $(\Omega _{\sigma }, \phi _{\sigma })$ the set $\phi  _{\sigma }(A \cap \Omega _{\sigma })$ has Lebesque measure zero in $\mathbb R^{m}$.
\end{defn}

\begin{prop}
	Let $(\Omega _{\sigma }, \phi _{\sigma })_{\sigma \in S}$ be the atlas of the Grassmann manifold $G(k, \mathbb C^{n})$.  Then
	$\Omega _{\sigma } \cap  \Omega _{\sigma ^{\prime }}$ is an open and dense subset of $\Omega _{\sigma }$.\newline
\end{prop}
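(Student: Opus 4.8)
The plan is to treat openness and density separately. Openness is immediate: $\Omega_{\sigma'}$ is one of the domains of the atlas $\{(\Omega_\tau,\phi_\tau):\tau\in S_{k,n}\}$, hence open in $G(k,\C^n)$, so $\Omega_\sigma\cap\Omega_{\sigma'}$ is open in $\Omega_\sigma$. (Intrinsically, the condition $\det M(W)_{\sigma'(n-k+1),\ldots,\sigma'(n)}\neq 0$ is well defined and open on $G(k,\C^n)$, since a change of basis of $W$ multiplies that determinant by a nonzero scalar.)

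For density I would fix $\sigma$ and work inside the chart $(\Omega_\sigma,\phi_\sigma)$, identifying $W\in\Omega_\sigma$ with $Z=\phi_\sigma(W)\in\C^{(n-k)k}$. Writing $R=\{\sigma(n-k+1),\ldots,\sigma(n)\}$ and $R'=\{\sigma'(n-k+1),\ldots,\sigma'(n)\}$, the canonical representative of $W$ is the $n\times k$ matrix $N=U_\sigma\begin{pmatrix}Z\cr I_k\end{pmatrix}$, whose rows indexed by $R$ form the identity block and whose remaining $n-k$ rows are the rows of $Z$; by the description of $\Omega_{\sigma'}$ we have $W\in\Omega_{\sigma'}$ if and only if the $k\times k$ submatrix $N_{R'}$ on the rows $R'$ is nonsingular. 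Expanding $\det N_{R'}$ along the rows coming from the identity block (those in $R\cap R'$) shows that $\det N_{R'}=\pm\det Z'$ for a square submatrix $Z'$ of $Z$ of size $k-|R\cap R'|$. In particular $p:=\det N_{R'}$ is a polynomial in the coordinates of $\C^{(n-k)k}$, and $\phi_\sigma(\Omega_\sigma\setminus\Omega_{\sigma'})=\{p=0\}$.

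It remains to see that $p\not\equiv 0$, i.e. that $\Omega_\sigma\cap\Omega_{\sigma'}\neq\emptyset$. The case $\sigma=\sigma'$ is trivial, so assume $\sigma\neq\sigma'$; since a member of $S_{k,n}$ is determined by its last-$k$ index set, this forces $R\neq R'$, hence $Z'$ has size at least $1$. One then either checks directly that $p$ is, up to sign, a minor of the matrix $Z$ of independent coordinates and so a nonzero polynomial, or --- more transparently --- exhibits one subspace lying in every chart domain at once: take $\boldsymbol{z}_j=(1,t_j,t_j^2,\ldots,t_j^{n-1})^T$ with $0<t_1<\cdots<t_k$, so that every maximal minor of $M(W)=(\boldsymbol{z}_1,\ldots,\boldsymbol{z}_k)$ is a (generalized) Vandermonde determinant and hence nonzero; then $W\in\Omega_\tau$ for all $\tau\in S_{k,n}$, in particular $W\in\Omega_\sigma\cap\Omega_{\sigma'}$ and $p(\phi_\sigma(W))\neq 0$. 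Since a nonzero polynomial on the connected space $\C^{(n-k)k}$ has zero set with empty interior (indeed of Lebesgue measure zero, matching the preceding definition), $\Omega_\sigma\setminus\Omega_{\sigma'}$ is nowhere dense in $\Omega_\sigma$, which gives the asserted density.

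The only step that needs care is the determinant expansion identifying $\det N_{R'}$ with a minor of $Z$ of positive size; using the Vandermonde subspace this can be bypassed, since for density it is enough to know that $\det N_{R'}$ is \emph{some} polynomial in $Z$ which is nonzero at one explicit point. Everything else is routine.
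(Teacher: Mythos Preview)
Your argument is correct and follows the same overall strategy as the paper: pass to the chart $\phi_\sigma$, observe that $\Omega_\sigma\setminus\Omega_{\sigma'}$ is the zero set of a polynomial $f(Z)=\det N_{\sigma^{-1}\circ\sigma'(n-k+1),\ldots,\sigma^{-1}\circ\sigma'(n)}$ in the chart coordinates, and conclude that this zero set is closed with empty interior. The paper reaches the last step via analytic continuation (if $f$ vanished on an open set it would vanish identically), while you invoke directly that the zero locus of a nonzero polynomial on $\mathbb C^{(n-k)k}$ is nowhere dense; these are equivalent here.

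The one place where you actually add something is the verification that $f\not\equiv 0$. The paper simply asserts that $f$ is a non-constant polynomial when $\sigma\neq\sigma'$, without justification. Your two routes---identifying $f$ up to sign with a nontrivial minor of $Z$, or exhibiting a Vandermonde-type subspace lying in every $\Omega_\tau$ simultaneously---both fill this gap cleanly. The Vandermonde construction is particularly economical since, as you note, it makes the minor computation unnecessary: one only needs that $f$ is \emph{some} polynomial that does not vanish at one explicit point.
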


\begin{proof}  Assume $\sigma  \neq \sigma ^{\prime }$.  Let\newline
	$$    M_{\sigma } M_{\sigma (n-k+1), \ldots , \sigma (n)}^{-1} = 
	\begin{pmatrix} Z \cr I \end{pmatrix}      
	= \begin{pmatrix} z_{11} & \ldots & z_{1k} \cr
	\vdots & \ddots & \vdots \cr
	z_{n-k,1} & \ldots & z_{n-k,k} \cr
	1   & \ldots & 0 \cr
	\vdots & \ddots & \vdots \cr
	0   & \ldots & 1  \end{pmatrix} = N.$$ 
	Define a function $f(Z)$ of $Z \in  M(n-k, k)$ by 
	$$   f(Z) = \det N_{\sigma ^{-1} \circ \sigma ^{\prime }(n-k+1), \sigma ^{-1} \circ \sigma ^{\prime }(n-k+2), \ldots , \sigma ^{-1} \circ \sigma ^{\prime }(n)}.$$
	Then $f(Z)$ is a non-constant polynomial (provided $\sigma  \neq \sigma ^{\prime }$), and
	$$    \Omega _{\sigma } \supset  \Omega _{\sigma } \cap  \Omega _{\sigma ^{\prime }} = \{ Z \in  M(n-k, k); f(Z) \neq 0\} , $$ 
	$$    \Omega _{\sigma } \backslash  \Omega _{\sigma ^{\prime }} = \{ Z \in  M(n-k, k); f(Z) = 0\} .$$ 
	Since $f(Z)$ is a polynomial of $Z$, $f(Z)$ is continuous and $\Omega _{\sigma } \cap  
	\Omega _{\sigma ^{\prime }}$ is open subset of $\Omega _{\sigma }$.  For any neighborhood $V$ of $Z_{0} \in  \Omega _{\sigma } \backslash 
	\Omega _{\sigma ^{\prime }}$ there exists $Z_{1} \in  V$ such that $Z_{1} \in  \Omega _{\sigma } \cap  \Omega _{\sigma ^{\prime }}$ ($\Omega _{\sigma } \cap  \Omega _{\sigma ^{\prime }}$ is
	dense in $\Omega _{\sigma }$).  Otherwise, there exists a neighborhood $V$ of $Z_{0}$
	such that $f(Z) = 0$ on $V$.  Since $f(Z)$ is an analytic function,
	we have $f(Z) \equiv  0$.  This is a contradiction.
\end{proof}
\begin{cor}
	$\Omega _{\sigma }$ is an open and dense subset of $G(k, \mathbb C^{n})$, and
	therefore $G(k, \mathbb C^{n}) \backslash  \Omega _{\sigma } = \partial \Omega _{\sigma }$ and $\partial \Omega _{\sigma }$ is a  set of measure zero.\newline
\end{cor}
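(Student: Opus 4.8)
The plan is to obtain the three assertions in turn from the preceding Proposition, using beyond it only the classical fact that the zero locus of a nonzero polynomial has measure zero. First, $\Omega_\sigma$ is open in $G(k,\mathbb C^n)$: it is one of the coordinate neighbourhoods of the atlas $\{(\Omega_\tau,\phi_\tau)\}_{\tau\in S_{k,n}}$, or equivalently the vanishing of $\det M(W)_{\sigma(n-k+1),\ldots,\sigma(n)}$ (a condition on $W$ that does not depend on the chosen basis of $W$) is closed in every chart. For density, let $W\in G(k,\mathbb C^n)$ be arbitrary. By (\ref{gr:union}) there is $\tau\in S_{k,n}$ with $W\in\Omega_\tau$, and the preceding Proposition applied to the pair $(\tau,\sigma)$ — the case $\tau=\sigma$ being trivial — says that $\Omega_\tau\cap\Omega_\sigma$ is dense in $\Omega_\tau$. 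Since $\Omega_\tau$ is open, any neighbourhood $U$ of $W$ meets $\Omega_\tau$ in a nonempty open subset of $\Omega_\tau$, hence meets $\Omega_\tau\cap\Omega_\sigma\subset\Omega_\sigma$; so $W\in\overline{\Omega_\sigma}$, and as $W$ was arbitrary, $\overline{\Omega_\sigma}=G(k,\mathbb C^n)$.

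Because $\Omega_\sigma$ is open and dense, $G(k,\mathbb C^n)\setminus\Omega_\sigma$ is closed with empty interior, and openness of $\Omega_\sigma$ gives $\partial\Omega_\sigma=\overline{\Omega_\sigma}\setminus\Omega_\sigma=G(k,\mathbb C^n)\setminus\Omega_\sigma$, which is the stated identity. For the measure-zero claim I would verify the definition chart by chart: fix $\tau\in S_{k,n}$; then $\partial\Omega_\sigma\cap\Omega_\tau=\Omega_\tau\setminus\Omega_\sigma$, which is empty if $\tau=\sigma$, and otherwise, by the computation in the proof of the preceding Proposition, is carried by $\phi_\tau$ onto $\{Z\in M(n-k,k):f(Z)=0\}$, where $f$ is a non-constant polynomial in the entries $z_{ij}$ of $Z$ (a suitable $k\times k$ minor of the matrix in (\ref{gr:Z}), with its rows reordered by $\tau^{-1}\circ\sigma$). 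It then suffices to know that the zero set of such an $f$ is Lebesgue-null in $M(n-k,k)\cong\mathbb R^{2(n-k)k}$.

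That last point is the only ingredient beyond formal manipulation, and it is itself classical: the zero set of a nonzero polynomial $f\colon\mathbb C^{N}\to\mathbb C$ has Lebesgue measure zero in $\mathbb R^{2N}$, which follows by induction on $N$ from Fubini's theorem together with the observation that a nonzero polynomial in one complex variable has only finitely many roots, hence a null slice. Taking $N=(n-k)k$ and letting $\tau$ range over $S_{k,n}$ shows that $\partial\Omega_\sigma$ meets every chart in a Lebesgue-null set, i.e. $\partial\Omega_\sigma$ has measure zero. I do not anticipate a genuine obstacle: the entire argument reduces to quoting the preceding Proposition correctly and invoking the standard polynomial zero-set lemma, the latter being the only non-bookkeeping step.
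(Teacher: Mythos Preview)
Your proof is correct and follows the same route as the paper, which disposes of the Corollary in one line by citing Relation~(\ref{gr:union}). You have simply unpacked what that citation entails: combining the covering $G(k,\mathbb C^n)=\cup_\tau\Omega_\tau$ with the preceding Proposition to get density, and then checking the measure-zero claim chart by chart via the polynomial zero-set lemma --- details the paper leaves to the reader.
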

\begin{proof}  This follows from Relation  (\ref{gr:union}).\end{proof}

{
	Note that $(U(n), G(k, \mathbb C^{n}), \pi , U(n-k) \times  U(k))$ has
	the structure of a fiber bundle, where $U(n)$, $G(k, \mathbb C^{n})$, and
	$U(n-k) \times  U(k)$ are  the total space, the base space, and the fiber
	respectively, and $\pi : U(n) \rightarrow  G(k, \mathbb C^{n})$ is a continuous surjection
	satisfying a local triviality condition:\newline
	For every $z \in  G(k, \mathbb C^{n})$, there is an open neighborhood $\Omega _{\sigma }$ of $z$
	(which will be called a trivializing neighborhood) such that there
	is a homeomorphism
	$$   \phi : \Omega _{\sigma } \times  (U(n-k) \times  U(k)) \ni  (z, h) \rightarrow   \phi (z, h) = \iota _{\sigma }(z) h \in  \pi ^{-1}(\Omega _{\sigma }). $$ 
	\begin{prop}  $\pi ^{-1}(\partial \Omega _{e})$ is a set of measure zero.
	\end{prop}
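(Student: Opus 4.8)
The plan is to combine the fiber-bundle structure of $\pi\colon U(n)\to G(k,\mathbb C^{n})$ recorded above with the corollary just proved, namely that $\partial\Omega_{e}=G(k,\mathbb C^{n})\setminus\Omega_{e}$ has measure zero in $G(k,\mathbb C^{n})$. Since \emph{measure zero} is a local notion that is preserved under diffeomorphisms of finite-dimensional manifolds and is stable under finite (indeed countable) unions, and since $G(k,\mathbb C^{n})=\cup_{\sigma\in S_{k,n}}\Omega_{\sigma}$ with $S_{k,n}$ finite, it suffices to show that for each $\sigma$ the set $\pi^{-1}(\partial\Omega_{e})\cap\pi^{-1}(\Omega_{\sigma})$ has measure zero in the open subset $\pi^{-1}(\Omega_{\sigma})\subset U(n)$, and then to take the union over $\sigma$.

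First I would upgrade the local trivialization to a diffeomorphism. The map
$$\Phi_{\sigma}\colon\Omega_{\sigma}\times(U(n-k)\times U(k))\ni(z,h)\longmapsto\iota_{\sigma}(z)\,h\in\pi^{-1}(\Omega_{\sigma})$$
is smooth, because the local section $\iota_{\sigma}=U_{\sigma}W\circ\psi_{\sigma}$ is smooth: on $\tilde\Omega_{\sigma}$ the block $Y_{\sigma}$ is invertible, so the unitary factor in its polar decomposition is $Y_{\sigma}^{*}(Y_{\sigma}Y_{\sigma}^{*})^{-1/2}$ and depends smoothly on the frame, whence $\psi_{\sigma}$ is a diffeomorphism onto $B(n-k,k)$ (with smooth inverse $\kappa_{\sigma}$ of Proposition \ref{gr:kappa}), while $W(X)$ of $(\ref{gr:WX})$, built from the square roots $(I_{n-k}-XX^{*})^{1/2}$ and $(I_{k}-X^{*}X)^{1/2}$, is smooth on $B(n-k,k)$ where one has the strict inequality $X^{*}X<I_{k}$. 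Its inverse $g\mapsto\big(\pi(g),\iota_{\sigma}(\pi(g))^{-1}g\big)$ is smooth as well, since $\pi=\pi_{2}\circ\pi_{1}$ sends $g$ to the span of its last $k$ columns and is therefore smooth, and the factorization $g=\iota_{\sigma}(\pi(g))h$ with $h\in U(n-k)\times U(k)$ is the unique one furnished by Proposition \ref{gr:uniqueexpression}. Consequently $\Phi_{\sigma}$ and $\Phi_{\sigma}^{-1}$ carry measure-zero sets to measure-zero sets.

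Under $\Phi_{\sigma}$ one has $\pi^{-1}(\partial\Omega_{e})\cap\pi^{-1}(\Omega_{\sigma})=\Phi_{\sigma}\big((\partial\Omega_{e}\cap\Omega_{\sigma})\times(U(n-k)\times U(k))\big)$. By the corollary $\partial\Omega_{e}$ has measure zero in $G(k,\mathbb C^{n})$, so by definition its image under the chart $\phi_{\sigma}\colon\Omega_{\sigma}\cong M(n-k,k)$ is a Lebesgue-null subset of $\mathbb C^{(n-k)k}$; reading the product in the corresponding product charts, $(\partial\Omega_{e}\cap\Omega_{\sigma})\times(U(n-k)\times U(k))$ becomes the product of a Lebesgue-null set with a Euclidean open set, which is Lebesgue-null by Fubini's theorem. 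Hence $\pi^{-1}(\partial\Omega_{e})\cap\pi^{-1}(\Omega_{\sigma})$ has measure zero in $\pi^{-1}(\Omega_{\sigma})$, and taking the union over the finitely many $\sigma\in S_{k,n}$ shows that $\pi^{-1}(\partial\Omega_{e})=\cup_{\sigma}\big(\pi^{-1}(\partial\Omega_{e})\cap\pi^{-1}(\Omega_{\sigma})\big)$ has measure zero in $U(n)$. The only point requiring genuine care is the smoothness — rather than mere continuity — of $\Phi_{\sigma}$ and $\iota_{\sigma}$, which is exactly what makes measure zero transfer correctly between $U(n)$ and $\Omega_{\sigma}\times(U(n-k)\times U(k))$; the remaining steps are Fubini's theorem together with the chart-independence of null sets on a finite-dimensional manifold.
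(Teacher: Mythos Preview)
Your proof is correct and follows essentially the same route as the paper: use the local trivializations $\Phi_{\sigma}:\Omega_{\sigma}\times(U(n-k)\times U(k))\to\pi^{-1}(\Omega_{\sigma})$ to identify $\pi^{-1}(\partial\Omega_{e})\cap\pi^{-1}(\Omega_{\sigma})$ with $(\partial\Omega_{e}\cap\Omega_{\sigma})\times(U(n-k)\times U(k))$, invoke the corollary that $\partial\Omega_{e}$ is null in the base, and take the finite union over $\sigma$. The paper frames the intermediate step as ``$\pi^{-1}(\partial\Omega_{e})$ is the boundary of $\pi^{-1}(\Omega_{e})$ in $U(n)$'' and concludes measure zero from that, whereas you make explicit the two points the paper leaves implicit: that $\Phi_{\sigma}$ is a diffeomorphism (not merely a homeomorphism), so null sets are preserved, and that the product of a null set with the fiber is null by Fubini.
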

	\begin{proof}
		Since $\partial \Omega _{e} \cap  \Omega _{\sigma }$ is the boundary of $\Omega _{e} \cap  \Omega _{\sigma }$ in $\Omega _{\sigma }$, $\partial \Omega _{e} \cap  \Omega _{\sigma } \times 
		(U(n-k) \times  U(k)) \cong  \pi ^{-1}(\partial \Omega _{e} \cap  \Omega _{\sigma })$ is the boundary of $\Omega _{e} \cap  \Omega _{\sigma } \times 
		(U(n-k) \times  U(k)) \cong  \pi ^{-1}(\Omega _{e} \cap  \Omega _{\sigma })$ in $\Omega _{\sigma } \times  (U(n-k) \times  U(k)) \cong  
		\pi ^{-1}(\Omega _{\sigma })$.  Thus $\pi ^{-1}(\partial \Omega _{e})$ is the boundary of $\pi ^{-1}(\Omega _{e})$ in $U(n) =
		\pi ^{-1}(G(k, \mathbb C^{n}))$ and a set of measure zero.
	\end{proof}
}
\begin{prop}  \label{bd:section}
	For almost all $g \in U(n)$, there is a unique $X \in  B (n-k,
	k)$ and $h \in U(n-k) \times U(k)$ such that 
	$$
	g = W(X) h.
	$$
\end{prop}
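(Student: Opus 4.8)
The plan is to recognize the exceptional null set as $\pi^{-1}(\partial\Omega_e)$ and to read off the decomposition on its complement directly from the local section $\iota_e=W\circ\psi_e$ of Proposition~\ref{gr:lambda}, the uniqueness then coming from the strict positivity of the lower--right block of $W(X)$, i.e.\ from the uniqueness of the polar decomposition of an invertible operator. First I would pin down the full-measure set: by the Corollary following the density proposition, $\Omega_e$ is open and dense in $G(k,\C^n)$ and $\partial\Omega_e=G(k,\C^n)\setminus\Omega_e$ has measure zero, and by the proposition immediately after it, $\pi^{-1}(\partial\Omega_e)$ is then a set of measure zero in $U(n)$. So it suffices to prove the claim for every $g$ with $\pi(g)\in\Omega_e$.

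\emph{Existence.} For such a $g$ I set $X:=\psi_e(\pi(g))$. Since $\psi_e\colon\Omega_e\to B(n-k,k)$ is a bijection by Proposition~\ref{gr:kappa}, we have $X\in B(n-k,k)$, and Proposition~\ref{gr:lambda} gives $\iota_e(\pi(g))=W(X)$ with $\pi(W(X))=\pi(g)$. Since the fibre over $\pi(g)$ is $\{\,g\tilde h:\tilde h\in U(n-k)\times U(k)\,\}$, the matrices $W(X)$ and $g$ differ on the right by an element of $U(n-k)\times U(k)$, i.e.\ $g=W(X)h$ for some $h\in U(n-k)\times U(k)$; this is exactly the argument of Proposition~\ref{gr:uniqueexpression} applied to the local section $\iota_e$ over $\Omega_e$. (Conversely $g=W(X)h$ forces $\pi(g)\in\Omega_e$, since then $(I_k-X^*X)^{1/2}$ is invertible, so $\pi^{-1}(\Omega_e)$ is precisely the set on which the decomposition exists.)

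\emph{Uniqueness.} Suppose $g=W(X)h=W(X')h'$ with $X,X'\in B(n-k,k)$ and $h=\mathrm{diag}(U,V)$, $h'=\mathrm{diag}(U',V')$ in $U(n-k)\times U(k)$. Applying $\pi_1$ of (\ref{gr:pi1}) and using $hx=\begin{pmatrix}O\\ V\end{pmatrix}$ for $x=\begin{pmatrix}O\\ I_k\end{pmatrix}$, one computes $\pi_1(g)=W(X)\begin{pmatrix}O\\ V\end{pmatrix}=\begin{pmatrix}XV\\ YV\end{pmatrix}$ with $Y=(I_k-X^*X)^{1/2}$, and likewise $\pi_1(g)=\begin{pmatrix}X'V'\\ Y'V'\end{pmatrix}$ with $Y'=(I_k-X'^*X')^{1/2}$. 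Hence $YV=Y'V'$ and $XV=X'V'$; since $X,X'\in B(n-k,k)$ the operators $Y,Y'$ are strictly positive, so $(YV)(YV)^*=Y^2$ and $(Y'V')(Y'V')^*=Y'^2$ give $Y^2=Y'^2$, hence $Y=Y'$; then $V=V'$ because $Y$ is invertible, then $X=X'$ because $V$ is unitary, and finally $h=W(X)^{-1}g=h'$. This is the uniqueness foreshadowed in Remark~\ref{bd:kappa}: on the open ball $B(n-k,k)$, in contrast with $\bar B(n-k,k)$, the representative is unique.

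The main obstacle is this last step, and more precisely the reason it needs the open ball together with the measure condition: everything hinges on $Y=(I_k-X^*X)^{1/2}$ being invertible, and on $\bar B(n-k,k)$ that fails precisely on the boundary $\partial B(n-k,k)$, whose image fills out the set $\partial\Omega_e$ whose $\pi$-preimage must be discarded. Off $\pi^{-1}(\Omega_e)$ one is left only with the earlier, non-unique decomposition $g=W(X)\,\mathrm{diag}(U,V)$ over $\bar B(n-k,k)$; isolating exactly the set where uniqueness is restored, and showing it is of full measure, is the content of the statement.
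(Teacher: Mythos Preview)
Your proposal is correct and follows essentially the same route as the paper: the paper's proof is the one-line observation that $U(n)=\pi^{-1}(\Omega_e)\cup\pi^{-1}(\partial\Omega_e)$ together with the preceding proposition that $\pi^{-1}(\partial\Omega_e)$ has measure zero, leaving the existence and uniqueness on $\pi^{-1}(\Omega_e)$ implicit via the local section $\iota_e$ and Proposition~\ref{gr:uniqueexpression}. You have simply unpacked those implicit steps, including a direct polar-decomposition style uniqueness argument in place of invoking Proposition~\ref{gr:uniqueexpression}.
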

{
	\begin{proof}
		The proposition follows from the fact that $\pi ^{-1}(\partial \Omega _{e}) \cup \pi ^{-1}(\Omega _{e}) = U(n)$ and the previous proposition.
	\end{proof}
	\begin{prop}
		For almost all $g \in  U(n)$ we have a mapping $U(n) \ni g \rightarrow (X_{3}, X_{2}) \in  B(n-k_{3},
		k_{3}) \times B(n-k_{3}-k_{2}, k_{2})$ and a unique $h \in 
		U(k_{1}) \times  U(k_{1}) \times  U(k_{3})$ such that
		\beq  \label{bd:twotimes}
		g = W(X_{3}) 
		\begin{pmatrix} W(X_{2}) & O \cr O & I_{k_{3}} \end{pmatrix} h.
		\eeq 
	\end{prop}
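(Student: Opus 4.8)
The plan is to iterate the ``almost all'' one--step decomposition of Proposition~\ref{bd:section} twice, paralleling the matrix manipulation of Proposition~\ref{gr:canonicaldec} for the case $m=3$, while keeping track of the exceptional sets of measure zero. Here $n=k_{1}+k_{2}+k_{3}$, so $n-k_{3}=k_{1}+k_{2}$ and $n-k_{3}-k_{2}=k_{1}$; note that the target group in the statement, written $U(k_{1})\times U(k_{1})\times U(k_{3})$, should read $U(k_{1})\times U(k_{2})\times U(k_{3})$, which is what the block sizes force.

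First I would apply Proposition~\ref{bd:section} to $g\in U(n)$ with block size $k=k_{3}$: for all $g$ outside a set of measure zero there is a unique $X_{3}\in B(n-k_{3},k_{3})$ and a unique $h^{\prime}\in U(n-k_{3})\times U(k_{3})$ with $g=W(X_{3})h^{\prime}$; write $h^{\prime}=\begin{pmatrix} U_{2} & O \cr O & V_{3}\end{pmatrix}$ with $U_{2}\in U(n-k_{3})$, $V_{3}\in U(k_{3})$. Then I would apply Proposition~\ref{bd:section} inside $U(n-k_{3})=U(k_{1}+k_{2})$ with block size $k=k_{2}$: for $U_{2}$ outside a set $N\subset U(n-k_{3})$ of measure zero there is a unique $X_{2}\in B(n-k_{3}-k_{2},k_{2})=B(k_{1},k_{2})$ and a unique $\begin{pmatrix} V_{1} & O \cr O & V_{2}\end{pmatrix}\in U(k_{1})\times U(k_{2})$ with $U_{2}=W(X_{2})\begin{pmatrix} V_{1} & O \cr O & V_{2}\end{pmatrix}$. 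Substituting and regrouping the block--diagonal factors yields
$$
g=W(X_{3})\begin{pmatrix} W(X_{2}) & O \cr O & I_{k_{3}}\end{pmatrix}\begin{pmatrix} V_{1} & O & O \cr O & V_{2} & O \cr O & O & V_{3}\end{pmatrix},
$$
which is (\ref{bd:twotimes}) with $h={\rm diag}(V_{1},V_{2},V_{3})\in U(k_{1})\times U(k_{2})\times U(k_{3})$. The assignment $g\mapsto(X_{3},X_{2})$ is well defined and $h$ is unique: $X_{3}$ and $h^{\prime}$ are uniquely determined by $g$ in the first step, $U_{2}$ is then the upper--left block of $W(X_{3})^{-1}g$, and $X_{2},V_{1},V_{2}$ are uniquely determined by $U_{2}$ in the second step (this is also a special case of Proposition~\ref{bd:unique}).

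The only step that is not purely formal --- and where I expect the real work --- is showing that the set of $g\in U(n)$ for which this procedure breaks down still has measure zero. The first step fails only on $\pi^{-1}(\partial\Omega_{e})$, already shown above to be null. For the second step I must check that $\{g\in U(n):U_{2}(g)\in N\}$ is null in $U(n)$. On the good open set of the first step the assignment $g\mapsto(X_{3},h^{\prime})$ is the inverse of the smooth trivialization $(X,h)\mapsto W(X)h=\iota_{e}(X)h$ coming from the fiber--bundle structure of $\pi:U(n)\to G(k_{3},\mathbb C^{n})$, hence a diffeomorphism onto an open, dense, full--measure subset of $U(n)$; composing it with the submersive projection $h^{\prime}\mapsto U_{2}$ exhibits $g\mapsto U_{2}$ as a submersion there, so the preimage of the null set $N$ is again null. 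The exceptional set is then a union of two null sets, hence null, and on its complement (\ref{bd:twotimes}) holds with the stated uniqueness; everything else is the bookkeeping already carried out in Proposition~\ref{gr:canonicaldec}.
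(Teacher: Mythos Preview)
Your proposal is correct and follows essentially the same approach as the paper: iterate the one--step decomposition and use the fibre--bundle trivialization $\phi:\Omega_e\times(U(n-k_3)\times U(k_3))\to\pi^{-1}(\Omega_e)$ to control the measure of the exceptional set at the second step. The only cosmetic difference is that the paper argues directly in the product picture (writing the bad set as $\phi(\Omega_e,\pi'^{-1}(\partial\Omega_e')\times U(k_3))$, a product whose middle factor is null by the preceding proposition), whereas you phrase the same step as ``the preimage of a null set under a submersion is null''; these are the same Fubini argument in different clothing.
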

	\begin{proof}
		Let $m = n-k_{3}$ and consider the two fiber bundles $F = (U(n), G(k_{3}, \mathbb C^{n}),
		\pi , U(n-k_{3}) \times  U(k_{3}))$ and $F^{\prime } = (U(m), G(k_{2}, \mathbb C^{m}),
		\pi ^{\prime }, U(m-k_{2}) \times  U(k_{2}))$. 
		Let $\Omega _{e}^{\prime }$ be a trivializing neighborhood of
		$F^{\prime }$.  Then $\partial \Omega _{e}^{\prime } \times  (U(m-k_{2}) \times  U(k_{2})) \cong  \pi ^{\prime -1}(\partial \Omega _{e}^{\prime })$ is the boundary
		of $\Omega _{e}^{\prime } \times  (U(m-k_{2}) \times  U(k_{2})) \cong  \pi ^{\prime -1}(\Omega _{e}^{\prime })$ in $U(m) $ and $\Omega _{e} \times 
		\pi ^{\prime -1}(\partial \Omega _{e}^{\prime }) \times  U(k_{3}) \cong  \phi (\Omega _{e}, \pi ^{\prime -1}(\partial \Omega _{e}^{\prime }) \times  U(k_{3}))$ is the boundary of
		$\Omega _{e} \times  \pi ^{\prime -1}(\Omega _{e}^{\prime }) \times  U(k_{3}) \cong  \phi (\Omega _{e}, \pi ^{\prime -1}(\Omega _{e}^{\prime }) \times  U(k_{3}))$ in $\Omega _{e} \times  U(m) \times 
		U(k_{3}) \cong  \pi ^{-1}(\Omega _{e})$.  Thus $\phi (\Omega _{e}, \pi ^{\prime -1}(\partial \Omega _{e}^{\prime }) \times  U(k_{3}))$ is a set of
		measure zero.  Since 
		$$    \phi (\Omega _{e}, \pi ^{\prime -1}(\Omega _{e}^{\prime }) \times  U(k_{3})) \cup \phi (\Omega _{e}, \pi ^{\prime -1}(\partial \Omega _{e}^{\prime }) \times  U(k_{3})) = \pi ^{-1}(\Omega _{e})  ,$$
		almost all $g \in  U(n)$ are expressed as
		$y = \phi (\Omega _{e}, \pi ^{\prime -1}(\Omega _{e}^{\prime }) \times  U(k_{3}))$, which is just (\ref{bd:twotimes}).
	\end{proof}
	Continuation of the above chain of arguments implies the following theorem.
}
\begin{thm}
	For almost all $g \in  U(n)$ we have a mapping $U(n) \ni g \rightarrow (X_{m}, \ldots , X_{2}) \in  B(n-k_{m},
	k_{m}) \times  \ldots \times B(n-k_{3}- \ldots - k_{m}, k_{2})$ and a unique $h \in 
	U(k_{1}) \times \cdots \times  U(k_{m})$ such that
	$$    g = W(X_{m}) 
	\begin{pmatrix} W(X_{m-1}) & O \cr O & I_{k_{m}} \end{pmatrix} \cdots
	\begin{pmatrix} W(X_{2}) & O \cr O & I_{k_{3}+ \cdots + k_{m}} \end{pmatrix} h.
	$$ 
\end{thm}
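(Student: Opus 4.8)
The plan is to prove the theorem by induction on $m$, peeling off one Grassmann factor at a time and carrying the measure-zero exceptional set along through the fiber-bundle structure, exactly as in the two- and three-factor propositions above. The single-step decomposition just proved (the version with parameter space $B(n-k,k)$) asserts that for almost all $g \in U(n)$ there are a unique $X_m \in B(n-k_m,k_m)$ and a unique block matrix $\mathrm{diag}(g_m,h_m) \in U(n-k_m)\times U(k_m)$ with $g = W(X_m)\,\mathrm{diag}(g_m,h_m)$. I would then apply the inductive hypothesis --- the theorem itself for $U(n-k_m)$ with multiplicities $k_1,\ldots,k_{m-1}$ (and $n-k_m = k_1+\cdots+k_{m-1}$) --- to the block $g_m$, obtaining $(X_{m-1},\ldots,X_2) \in B(n-k_{m-1}-k_m,k_{m-1})\times\cdots\times B(k_1,k_2)$ and a unique $\mathrm{diag}(V_1,\ldots,V_{m-1}) \in U(k_1)\times\cdots\times U(k_{m-1})$. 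Substituting the decomposition of $g_m$ into that of $g$, moving the block $W(X_{m-1})$ past the trailing $I_{k_m}$, and collecting all remaining block-diagonal pieces into the single matrix $h = \mathrm{diag}(V_1,\ldots,V_{m-1},h_m) \in U(k_1)\times\cdots\times U(k_m)$ yields the displayed formula. The base case $m=2$ is the single-step statement and $m=3$ is the two-step proposition above, so the induction is anchored.

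The crux --- and the main obstacle --- is that ``almost all'' must be preserved at every iteration, and here I would argue exactly as in the fiber-bundle proofs above. Let $E' \subset U(n-k_m)$ be the measure-zero exceptional set supplied by the inductive hypothesis. Over the trivializing neighborhood $\Omega_e$ of the bundle $(U(n),G(k_m,\C^n),\pi,U(n-k_m)\times U(k_m))$ we have the smooth identification $\phi \colon \Omega_e \times (U(n-k_m)\times U(k_m)) \to \pi^{-1}(\Omega_e)$, $\phi(z,\mathrm{diag}(u,v)) = \iota_e(z)\,\mathrm{diag}(u,v)$; under $\phi^{-1}$ the set of $g \in \pi^{-1}(\Omega_e)$ whose associated block $g_m$ lands in $E'$ corresponds to $\Omega_e \times (E' \times U(k_m))$, a product of a null set with manifolds, hence of measure zero in $\Omega_e \times (U(n-k_m)\times U(k_m))$. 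Since $\phi$ is a diffeomorphism (all the maps at issue are real-analytic on $U(n)$, so they preserve the class of measure-zero sets in the manifold sense of the definition above), its image is null in the open subset $\pi^{-1}(\Omega_e) \subset U(n)$. Adding $\pi^{-1}(\partial\Omega_e)$, already known to have measure zero, gives the full exceptional set for $U(n)$; it is still null, and outside it the decomposition holds.

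For the uniqueness assertion I would observe that at each peeling stage the single-step statement determines $X_j$, $g_{j-1}$, and $h_j$ uniquely from the group element being decomposed at that stage, so composing the $m-1$ stages shows that the whole tuple $(X_m,\ldots,X_2)$ and the block-diagonal matrix $h = \mathrm{diag}(V_1,\ldots,V_{m-1},h_m)$ are uniquely determined by $g$ --- this is precisely the claimed mapping $g \mapsto (X_m,\ldots,X_2)$ together with the uniqueness of $h$; equivalently, the map sending $(X_m,\ldots,X_2)$ to $W(X_m)\,\mathrm{diag}(W(X_{m-1}),I_{k_m})\cdots\mathrm{diag}(W(X_2),I_{k_3+\cdots+k_m})$ is injective on the product of open balls, the restriction to open balls of part~(2) of the iterated closed-ball proposition above. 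I expect the genuinely delicate point to be the measure-zero transfer of the previous paragraph --- verifying that the bundle chart $\phi$ and its inverse really are smooth, so that both directions carry null sets to null sets in the sense used here, and that nesting the construction $m-1$ times does not quietly reintroduce bad behaviour on a positive-measure set. Everything else is routine block-matrix bookkeeping, and once the null-set transfer is in place, iterating the single-step result $m-1$ times completes the proof.
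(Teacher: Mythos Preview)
Your proposal is correct and follows essentially the same approach as the paper: the paper's own proof is the single line ``Continuation of the above chain of arguments implies the following theorem,'' and your inductive argument --- peeling off one Grassmann factor at a time and transferring the measure-zero exceptional set through the bundle trivialization exactly as in the $m=3$ proposition --- is precisely the continuation the paper has in mind. You have simply spelled out in full what the paper leaves implicit.
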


\section{Section on Grassmannian} \label{section}

The mapping
$$ 
W(X) = \begin{pmatrix} (I_{n-k} - XX^{*})^{1/2} & X \cr -X^{*} & (I_{k} -
X^{*}X)^{1/2} \end{pmatrix} 
$$
of (\ref{gr:WX}) for $X \in B(n-k, k)$ which gives a local section $\iota _{e} = W \circ \psi _{e} : \Omega _{e} \rightarrow U(n)$ is not really suitable for concrete  calculations.  So  here we construct a local section using simpler ones 
$$   
W(x) = \begin{pmatrix} (I_{n-1} - xx^{*})^{1/2} & x \cr -x^{*} & (1 -
x^{*}x)^{1/2} \end{pmatrix} . 
$$
of (\ref{pr:Wofx}) for $x \in B(n-1)$ which gives a local section 
$\iota _{n} = W \circ \psi _{n}: \Omega _{n} \rightarrow   U(n)$. 

We begin with the embedding of $CP^{m-1}$ in $CP^{n-1}$ ($m < n$).
$S(\mathbb C^{m})$ can be embeded in $S(\mathbb C^{n})$ by
$$      \iota _{n,m}: S(\mathbb C^{m}) \ni  (z_{1}, \ldots , z_{m})^{T} \rightarrow  (z_{1}, \ldots , z_{m}, 0, \ldots , 0)^{T} \in  S(\mathbb C^{n}).$$
This mapping $\iota _{n,m}$ induces the embedding $\iota _{n,m}: CP^{m-1} \rightarrow  CP^{n-1}$.\newline
Let $\sigma _{j,n} \in  S_{1,n}$ such that $\sigma _{j,n}(n) = j$.  Then $\tilde{\Omega } _{m} = \tilde{\Omega } _{m,m} =  \tilde{\Omega } _{\sigma _{m,m}}$ (resp. $\Omega _{m} = \Omega _{m,m} = \Omega _{\sigma _{m,m}}$) is
identified with $\tilde{\Omega } _{m,n} = \tilde{\Omega } _{\sigma _{m,n}}$ (resp. $\Omega _{m,n} = \Omega _{\sigma _{m,n}}$).
Let $\tilde{\psi } _{m,n}$ be the mapping
$$   \tilde{\psi } _{m,n}: \tilde{\Omega } _{m,n} \ni  z = (z_{1}, \ldots , z_{m-1}, 0, \ldots , 0, z_{m})^{T} \rightarrow  (z_{1},
\ldots , z_{m-1}, 0, \ldots , 0)^{T}/e^{i\theta } \in  B(n-1), \  z_{m} = \vert z_{m}\vert e^{i\theta }.$$    
Then $\tilde{\psi } _{m,n}$ induces the mapping
$$   \psi _{m,n}: \Omega _{m,n} \ni  \pi _{2}(z)  \rightarrow  (z_{1}, \ldots , z_{m-1}, 0, \ldots , 0)^{T}/e^{i\theta } \in 
B(n-1), \  z_{m} = \vert z_{m}\vert e^{i\theta },$$
where $\pi _{2}: S(\mathbb C^{m}) \rightarrow CP^{m-1}$ is the canonical projection. 
\begin{prop} \label{pr2:proptriangular}
	Let
	$$    S_{k}(\mathbb C^{n}) \ni  F = \begin{pmatrix} X \cr Y \end{pmatrix}, \  X \in 
	M(n-k, k), \  Y \in  M(k, k) $$ 
	be given with  $\det Y \neq 0$.  Then there exists a unique $U \in  U(k)$ such that
	$Y^{\prime } = YU = T$ where $T$ is a lower triangular matrix:
	\beq  \label{pr2:triangular}
	T = \begin{pmatrix} t_{11} & 0 & \ldots & 0 \cr
		t_{21} & t_{22} & \ddots & \vdots \cr
		\vdots & \ddots & \ddots & 0 \cr
		* & \ldots & * & t_{kk} \end{pmatrix}, \  t_{jj} > 0, \  1 \leq  j \leq  k. 
	\eeq 
\end{prop}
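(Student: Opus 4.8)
The plan is to recognize Proposition \ref{pr2:proptriangular} as the uniqueness of the $QR$ (equivalently $LQ$) factorization of the invertible matrix $Y$; the block $X$ and the frame condition of $S_k(\mathbb C^n)$ play no role, only $\det Y \neq 0$ is used. Taking adjoints in the desired identity $YU = T$ turns it into $Y^{*} = U\,T^{*}$, a factorization of $Y^{*}$ into a unitary $U \in U(k)$ times the upper triangular matrix $T^{*}$, whose diagonal entries are $\overline{t_{jj}} = t_{jj} > 0$. So it suffices to prove existence and uniqueness of such a factorization of $Y^{*}$.

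For existence I would run Gram--Schmidt on the columns $c_{1}, \ldots , c_{k}$ of $Y^{*}$. Since $\det Y \neq 0$ these columns are linearly independent, so the process does not break down and yields an orthonormal system $u_{1}, \ldots , u_{k}$ with $u_{j} \in \mathrm{span}(c_{1}, \ldots , c_{j})$ and $\langle u_{j}, c_{j}\rangle \neq 0$; after rescaling each $u_{j}$ by a unimodular phase we may arrange $\langle u_{j}, c_{j}\rangle > 0$. With $U = (u_{1}\ \cdots\ u_{k}) \in U(k)$ the matrix $R := U^{*}Y^{*}$ has $(i,j)$ entry $\langle u_{i}, c_{j}\rangle$, which vanishes for $i > j$ (since $c_{j} \in \mathrm{span}(u_{1}, \ldots , u_{j})$) and is positive for $i = j$, so $R$ is upper triangular with positive diagonal; then $Y^{*} = UR$ gives $YU = R^{*} =: T$, a lower triangular matrix with positive diagonal. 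Alternatively, existence is immediate from the Cholesky factorization $YY^{*} = TT^{*}$ of the positive definite matrix $YY^{*}$, with $T$ lower triangular and positive diagonal: setting $U := Y^{-1}T$ one has $UU^{*} = Y^{-1}TT^{*}(Y^{*})^{-1} = Y^{-1}(YY^{*})(Y^{*})^{-1} = I_{k}$, so $U \in U(k)$ and $YU = T$.

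For uniqueness, suppose $YU_{1} = T_{1}$ and $YU_{2} = T_{2}$ both have the stated form. Then
$$
T_{2}^{-1}T_{1} = (T_{2}^{-1}Y)(Y^{-1}T_{1}) = U_{2}^{*}U_{1},
$$
which is unitary, while at the same time it is lower triangular (products and inverses of lower triangular matrices are lower triangular) with diagonal entries $t^{(1)}_{jj}/t^{(2)}_{jj} > 0$. A unitary lower triangular matrix $V$ satisfies $V^{*} = V^{-1}$, and $V^{-1}$ is again lower triangular whereas $V^{*}$ is upper triangular, so $V$ is diagonal; a diagonal unitary matrix with positive diagonal equals $I_{k}$. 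Hence $T_{1} = T_{2}$, and since $Y$ is invertible, $U_{1} = Y^{-1}T_{1} = Y^{-1}T_{2} = U_{2}$.

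The only genuinely delicate points are the phase bookkeeping in Gram--Schmidt that forces the diagonal of $R$ to be positive, and the small lemma that a unitary lower triangular matrix with positive diagonal is $I_{k}$ (equivalently, uniqueness of the Cholesky factor). I expect this lemma to be the crux of the argument; everything around it is formal manipulation, entirely in the spirit of the earlier appeals to uniqueness of the polar decomposition in the paper.
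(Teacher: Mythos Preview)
Your argument is correct. For existence you run Gram--Schmidt on the columns of $Y^{*}$, which is exactly what the paper does, only phrased dually: the paper applies Gram--Schmidt to the \emph{rows} $\boldsymbol{y}_{1},\ldots,\boldsymbol{y}_{k}$ of $Y$ in the row space $\mathbb C^{*k}$, sets $U=(\boldsymbol{u}_{1}^{*},\ldots,\boldsymbol{u}_{k}^{*})$, and reads off $YU=T$ directly. So the existence parts are the same argument up to transposition; your Cholesky alternative is a bonus the paper does not mention. The genuine difference is in the uniqueness step. The paper argues recursively row by row: from $Y=TU^{*}=T'U'^{*}$ it deduces $t_{11}\boldsymbol{u}_{1}=t'_{11}\boldsymbol{u}'_{1}$, hence $t_{11}=t'_{11}$ and $\boldsymbol{u}_{1}=\boldsymbol{u}'_{1}$, then peels off the next row using orthonormality, and so on. You instead use the structural lemma that a lower triangular unitary matrix with positive diagonal equals $I_{k}$, applied to $T_{2}^{-1}T_{1}=U_{2}^{*}U_{1}$. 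Your route is slicker and packages the induction into a single observation; the paper's route is more hands-on but avoids introducing an auxiliary lemma. Either way the content is the same.
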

\begin{proof}  Let $\mathbb C^{*k}$ be the set of all complex row $k$ vectors $\boldsymbol{z} =
	(z_{1}, \ldots , z_{k})$ with an inner product $(\boldsymbol{z}, \boldsymbol{z}^{\prime }) = \sum _{j=1}^{k}
	z_{j}\bar{z} ^{\prime }_{j}$, and $\boldsymbol{z}^{*} = (\bar{z} _{1}, \ldots , \bar{z} _{k})^{T}$.  Let $\boldsymbol{y}_{j}$ be the $j$-th row
	vector of $Y$ and $\{ \boldsymbol{u}_{1}, \ldots , \boldsymbol{u}_{k}\} $ be the Schmidt's
	orthogonalization of $\{ \boldsymbol{y}_{1}, \ldots , \boldsymbol{y}_{k}\} $, i.e., $\boldsymbol{u}_{1} = \boldsymbol{y}_{1}/\Vert \boldsymbol{y}_{1}\Vert $, $\boldsymbol{u}_{j}
	= \boldsymbol{x}_{j}/\Vert \boldsymbol{x}_{j}\Vert $, $\boldsymbol{x}_{j} = \boldsymbol{y}_{j} - \sum _{i = 1}^{j-1} (\boldsymbol{y}_{j}, \boldsymbol{u}_{i})\boldsymbol{u}_{i}$.  Define $U \in  U(k)$ by \newline
	$$    U = (\boldsymbol{u}_{1}^{*}, \ldots , \boldsymbol{u}_{k}^{*}).  $$ 
	Since ${\rm span \, }\{ \boldsymbol{y}_{1}, \ldots , \boldsymbol{y}_{j}\}  = {\rm span \, }\{ \boldsymbol{u}_{1}, \ldots , \boldsymbol{u}_{j}\} $, $(\boldsymbol{y}_{i}, \boldsymbol{u}_{l}) =
	0$ if $i < l$, and $(\boldsymbol{y}_{j}, \boldsymbol{u}_{j}) > 0$.  Thus we have $YU = T$.  
	For the uniqueness of $U$,
	suppose $Y = TU^* = T^{\prime }U^{\prime *}$ for
	$$   U^* = \begin{pmatrix} \boldsymbol{u}_{1} \cr \boldsymbol{u}_{2} \cr \vdots \cr \boldsymbol{u}_{k} \end{pmatrix},
	U^{\prime *} = \begin{pmatrix} \boldsymbol{u}_{1}^{\prime } \cr \boldsymbol{u}_{2}^{\prime } \cr \vdots \cr \boldsymbol{u}_{k}^{\prime } \end{pmatrix},
	T^{\prime } = \begin{pmatrix} t_{11}^{\prime } & 0 & \ldots & 0 \cr
	t_{21}^{\prime } & t_{22}^{\prime } & \ddots & \vdots \cr
	\vdots & \ddots & \ddots & 0 \cr
	* & \ldots & * & t_{kk}^{\prime } \end{pmatrix}. $$ 
	Then we have $\boldsymbol{y}_{1} = t_{11} \boldsymbol{u}_{1} = t_{11}^{\prime } \boldsymbol{u}_{1}^{\prime }$ and therefore $t_{11} = t_{11}^{\prime }$ and $\boldsymbol{u}_{1} = \boldsymbol{u}_{1}^{\prime }$.  From $\boldsymbol{y}_{2} = t_{21} \boldsymbol{u}_{1} + t_{22} \boldsymbol{u}_{2} = t_{21}^{\prime } \boldsymbol{u}_{1} + t_{22}^{\prime } \boldsymbol{u}_{2}^{\prime }$, we have $t_{21} = (\boldsymbol{y}_{2}, \boldsymbol{u}_{1}) = t_{21}^{\prime }$ and $\boldsymbol{y}_{2} - t_{21} \boldsymbol{u}_{1} = t_{22} \boldsymbol{u}_{2} = t_{22}^{\prime } \boldsymbol{u}_{2}^{\prime }$.  Thus we have $t_{22} = t_{22}^{\prime }$ and $\boldsymbol{u}_{2} = \boldsymbol{u}_{2}^{\prime }$.  Continuing these procedures, we get $U^* = U^{\prime *}$, i.e., the uniqueness of $U$.
\end{proof}
\begin{prop}  \label{pr2:recursive}
	Let $g_{n} \in  U(n)$ be given with
	\beq \label{pr2:W}
	g_{n} = \begin{pmatrix} W & X \cr V & T \end{pmatrix}, 
	\eeq
	where $W \in  M(n-k, n-k)$, $X \in  M(n-k, k)$, $V \in  M(k, n-k)$ and $T \in 
	M(k, k)$ is a lower triangular matrix with positive diagonal
	elements as (\ref{pr2:triangular}). 
	Then there exist $x \in \mathbb C^{n-1}$ and $g_{n-1} \in  U(n-1)$ such that
	\beq \label{pr2:eqrecursive} 
	g_{n} = W(x)\cdot (g_{n-1} \times  I_{1}),
	\eeq 
	where
	$$    W(x) = \begin{pmatrix} (I_{n-1} - xx^{*})^{1/2} & x \cr -x^{*} & (1 -
	x^{*}x)^{1/2} \end{pmatrix}, $$
	and $g_{n-1}$ has the form
	\beq \label{pr2:W2}
	g_{n-1} = \begin{pmatrix} W^{\prime } & X^{\prime } \cr V^{\prime } & T^{\prime } \end{pmatrix}, 
	\eeq
	where $W^{\prime } \in  M(n-k, n-k)$, $X^{\prime } \in  M(n-k, k-1)$, $V^{\prime } \in  M(k-1, n-k)$ and
	$T^{\prime } \in 
	M(k-1, k-1)$ is a lower triangular matrix with positive diagonal
	elements of the form
	\beq  \label{pr2:triangular2}
	T = \begin{pmatrix} T^{\prime } & 0 \cr * & t_{kk} \end{pmatrix}.
	\eeq
\end{prop}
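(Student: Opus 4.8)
The plan is to split off the last column of $g_n$ by a unitary matrix $W(x)$ with $x\in B(n-1)$, and then check that what remains is a $g_{n-1}\times I_1$ of the asserted shape. First I would read off the last column $g_ne_n$. In the $(n-k)+k$ block decomposition, $e_n$ corresponds to the last standard basis vector of $\C^{k}$, so the first $n-k$ entries of $g_ne_n$ form the last column $\xi$ of $X$ and the last $k$ entries form the last column of $T$; since $T$ is lower triangular, the latter is $(0,\dots,0,t_{kk})^{T}$. Thus $g_ne_n=(\xi^{T},0,\dots,0,t_{kk})^{T}$. As $g_n$ is unitary this vector has norm $1$, so $\norm{\xi}^{2}+t_{kk}^{2}=1$, and since $t_{kk}>0$ we get $\norm{\xi}<1$. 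Setting $x:=(\xi^{T},0,\dots,0)^{T}\in\C^{n-1}$ we then have $x\in B(n-1)$, $t_{kk}=(1-x^{*}x)^{1/2}$, and $g_ne_n$ is exactly the last column $W(x)e_n$ of $W(x)$.

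Next I would define $g_{n-1}\times I_1:=W(x)^{*}g_n$, which is legitimate because $W(x)\in U(n)$ by Proposition \ref{li:WX}. It does have the claimed block-diagonal form: $W(x)^{*}g_ne_n=W(x)^{*}(W(x)e_n)=e_n$, so its last column is $e_n$; a unitary matrix whose last column is $e_n$ has last row $e_n^{T}$ as well (its remaining columns being orthogonal to $e_n$). Hence $W(x)^{*}g_n=\begin{pmatrix}g_{n-1}&0\\0&1\end{pmatrix}$ for a unique $g_{n-1}\in U(n-1)$, which is equivalent to (\ref{pr2:eqrecursive}).

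The last and most computational step is to verify the finer block structure (\ref{pr2:W2})--(\ref{pr2:triangular2}). The key point is that the last $k-1$ entries of $x$ vanish, so $xx^{*}$ is block diagonal with respect to the $(n-k)+(k-1)$ splitting and therefore $(I_{n-1}-xx^{*})^{1/2}=\begin{pmatrix}(I_{n-k}-\xi\xi^{*})^{1/2}&0\\0&I_{k-1}\end{pmatrix}$. Writing $W(x)$ and $g_{n-1}\times I_1$ in the threefold $(n-k)+(k-1)+1$ block decomposition and multiplying, the bottom-right $k\times k$ corner of $g_n$, namely $T$, comes out equal to $\begin{pmatrix}T'&0\\-\xi^{*}X'&t_{kk}\end{pmatrix}$, where $T'$ and $X'$ are the $(k-1)\times(k-1)$ and $(n-k)\times(k-1)$ blocks of $g_{n-1}$. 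This is exactly the form (\ref{pr2:triangular2}), and $T'$, being the leading $(k-1)\times(k-1)$ principal submatrix of the lower triangular matrix $T$ with positive diagonal, is itself lower triangular with positive diagonal; the remaining blocks $W',V'$ of $g_{n-1}$ carry no constraints. I expect this bookkeeping with the three nested block decompositions to be the only real obstacle; everything before it is forced by unitarity and by the shape of the last column of $g_n$.
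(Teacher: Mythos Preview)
Your proof is correct and follows essentially the same route as the paper's. Both arguments hinge on the observation that the last column of $g_n$ has the form $(\xi^{T},0,\dots,0,t_{kk})^{T}$ because $T$ is lower triangular, so that $x=(\xi^{T},0,\dots,0)^{T}$ and $(I_{n-1}-xx^{*})^{1/2}=(I_{n-k}-\xi\xi^{*})^{1/2}\times I_{k-1}$; from this block structure one reads off that $T'$ is the leading $(k-1)\times(k-1)$ submatrix of $T$. The only cosmetic difference is that the paper treats the existence of the factorization $g_n=W(x)(g_{n-1}\times I_1)$ as already established and focuses only on the shape of $T'$, computing it via $W(x)^{-1}g_n$, whereas you spell out the existence step explicitly and recover $T$ by multiplying in the other direction.
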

\begin{proof}
	We only have  to show that $T^{\prime } \in  M(k-1, k-1)$ is of the form of (\ref{pr2:triangular2}).  Let $(x, t)^{T}$
	$(x \in  \mathbb C^{n-1}, t \in  \mathbb C)$ be the last column of the matrix $g_{n}$.
	
	Since $T$ is lower triangular, $x = (x_{1}, \ldots , x_{n-k}, 0, \ldots ,
	0)^{T}$, $(I_{n-1} - xx^{*})^{1/2} = (I_{n-k} - x^{\prime }x^{\prime *})^{1/2} \times  I_{k-1}$, where $x^{\prime } = (x_{1},
	\ldots , x_{n-k})^{T}$.  Since $W(x)^{-1}g_{n} = g_{n-1} \times  I_{1}$,
	$$    T^{\prime } \times  I_{1} = \begin{pmatrix} O_{1} & I_{k-1} & O_{2} \cr
	x^{*} & O_{3} & (1 - x^{*}x)^{1/2} \end{pmatrix}
	\begin{pmatrix} X \cr T \end{pmatrix} , $$ 
	where $O_{1}$ (resp. $O_{2}$, $O_{3}$) is the $(k-1) \times  (n-k)$ (resp. $(k-1) \times 
	1$, $1 \times  (k-1)$) zero matrix.  Therefore \newline
	$$     \begin{pmatrix} T^{\prime } & O_{4} \end{pmatrix}
	= \begin{pmatrix} I_{k-1} & O_{2} \end{pmatrix} T, $$ 
	where $O_{4}$ is the $(k-1) \times  1$ zero matrix and $T^{\prime }$ is the $(k-1) \times  (k-1)$-submatrix of $T$.
\end{proof}
\begin{prop}    
	Let $\pi : U(n) \rightarrow G(k, \mathbb C^{n})$ be the canonical projection.	
	Then there is a unique surjection 
	$$
	f: U(n) \supset \pi ^{-1}(\Omega _{e}) \ni g \rightarrow (z_{k}, z_{k-1}, \ldots , z_{1}) 
	$$
	$$
	\in  \Omega _{m, n} 
	\times \Omega _{m, n-1} \times \ldots \times \Omega _{m, n-k+1} \subset CP^{n-1} \times CP^{n-2} \times \ldots \times CP^{n-k}
	$$ 
	for $m = n-k+1$ and $\Omega _{e} \subset  G(k, \mathbb C^{n})$, and a unique $h \in U(n-k) \times U(k)$ such that
	\beq \label{pr2:cosetrep}
	g = W(\psi _{m,n}(z_{k})) \begin{pmatrix} W(\psi _{m,n-1}(z_{k-1})) & 0 \cr 0 & I_{1} \end{pmatrix} \cdots 
	\begin{pmatrix} W(\psi _{m,n-k+1}(z_{1})) & 0 \cr 0 & I_{k-1} \end{pmatrix} h.
	\eeq
\end{prop}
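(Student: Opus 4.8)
The plan is to bring $g$ into the shape on which Proposition \ref{pr2:recursive} bites, and then to iterate that proposition $k$ times. Given $g\in\pi^{-1}(\Omega_e)$, I would write $g$ in $2\times 2$ block form with bottom right $k\times k$ block $Y_0$; since $\pi(g)\in\Omega_e$ we have $\det Y_0\neq 0$, so Proposition \ref{pr2:proptriangular} produces a unique $U_0\in U(k)$ with $Y_0U_0=T_0$ lower triangular with positive diagonal. Setting $g^{(0)}:=g\cdot\mathrm{diag}(I_{n-k},U_0)$, the matrix $g^{(0)}$ then has exactly the form required in Proposition \ref{pr2:recursive}, and everything from here on is the iteration of that proposition together with a final repackaging of the remainder into an element of $U(n-k)\times U(k)$.

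I would then run Proposition \ref{pr2:recursive} for $j=1,\dots,k$: at the $j$-th step, applied to $g^{(j-1)}_{n-j+1}$ (with $g^{(0)}_n:=g^{(0)}$) and with block sizes $n-j+1$ and $k-j+1$, it yields $x^{(j)}\in\mathbb C^{n-j}$ and $g^{(j)}_{n-j}\in U(n-j)$ with $g^{(j-1)}_{n-j+1}=W(x^{(j)})\,(g^{(j)}_{n-j}\times I_1)$, where $g^{(j)}_{n-j}$ again carries a lower triangular positive-diagonal bottom block, now of size $k-j$. Telescoping these identities, using $(AB)\times I_1=(A\times I_1)(B\times I_1)$ and $(A\times I_1)\times I_1=A\times I_2$ at each substitution, gives
$$g^{(0)}=W(x^{(1)})\begin{pmatrix}W(x^{(2)})&0\\0&I_1\end{pmatrix}\cdots\begin{pmatrix}W(x^{(k)})&0\\0&I_{k-1}\end{pmatrix}\bigl(g^{(k)}_{n-k}\times I_k\bigr).$$
The step that must be checked at each $j$ is that, because $T_{j-1}$ is lower triangular, the last column of $g^{(j-1)}_{n-j+1}$ has the shape $(x^{(j)},t_j)^T$ with $x^{(j)}$ vanishing in coordinates $n-k+1,\dots,n-j$ and $t_j>0$ a diagonal entry of $T_{j-1}$ — this is precisely the mechanism exploited in the proof of Proposition \ref{pr2:recursive}. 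Hence that column lies in $\tilde\Omega_{m,n-j+1}$ for $m=n-k+1$, and since $t_j>0$ the definition of $\psi_{m,n-j+1}$ forces $x^{(j)}=\psi_{m,n-j+1}(z_{k-j+1})$ with $z_{k-j+1}:=\pi_2(\textrm{last column of }g^{(j-1)}_{n-j+1})\in\Omega_{m,n-j+1}\subset CP^{n-j}$. Finally $(g^{(k)}_{n-k}\times I_k)\cdot\mathrm{diag}(I_{n-k},U_0^{-1})=\mathrm{diag}(g^{(k)}_{n-k},U_0^{-1})=:h\in U(n-k)\times U(k)$, and since $g=g^{(0)}\cdot\mathrm{diag}(I_{n-k},U_0^{-1})$ this is exactly (\ref{pr2:cosetrep}) with $f(g):=(z_k,\dots,z_1)$.

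For uniqueness I would note that every object above is forced by $g$: $U_0$ by Proposition \ref{pr2:proptriangular}, and in Proposition \ref{pr2:recursive} the vector $x$ equals the leading coordinates of the last column of its argument while $g_{n-1}$ equals the leading block of $W(x)^{-1}$ times that argument, so $x^{(j)}$ and $g^{(j)}_{n-j}$ are uniquely determined; thus $f$ and $h$ are well defined. Conversely, given any representation of $g$ of the form (\ref{pr2:cosetrep}) with data $(z_k,\dots,z_1)$ and $h=\mathrm{diag}(A,B)$, the product $Q$ of the $W(\psi_{m,\cdot})$-factors again has lower triangular positive-diagonal bottom block $T_Q$ (repeated Proposition \ref{pr2:recursive}), so the bottom block of $g=Q\,h$ is $T_QB$ and Proposition \ref{pr2:proptriangular} forces $U_0=B^{-1}$; then $g^{(0)}=Q\,(A\times I_k)$ has last column $W(\psi_{m,n}(z_k))e_n$, so the algorithm recovers $x^{(1)}=\psi_{m,n}(z_k)$, and removing $W(x^{(1)})$ leaves a matrix of the same form in $U(n-1)$ with parameters $(z_{k-1},\dots,z_1)$; an induction on $k$ then identifies the recovered data with $(z_k,\dots,z_1)$ and $h$. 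Surjectivity is then immediate: for any $(z_k,\dots,z_1)$ in the target, the product in (\ref{pr2:cosetrep}) with $h=I_n$ is unitary by Proposition \ref{li:WX}, has lower triangular positive-diagonal bottom block, hence lies in $\pi^{-1}(\Omega_e)$, and $f$ sends it to $(z_k,\dots,z_1)$.

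I expect the main obstacle to be the bookkeeping that keeps the lower triangular structure alive through the recursion, so that the vector extracted at step $j$ really is supported on the first $n-k$ coordinates with positive terminal entry and hence represents a point of $\Omega_{m,n-j+1}$ on which $\psi_{m,n-j+1}$ is inverted by the identity embedding; Proposition \ref{pr2:recursive} is tailored to propagate exactly this, but the parallel piece of bookkeeping in the uniqueness argument — that the $U(k)$-part $B$ of $h$ is forced to be $U_0^{-1}$, the inverse of the matrix triangularizing the bottom block of $g$ — is what welds the two halves of the statement together.
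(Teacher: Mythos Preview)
Your argument is correct and follows essentially the same route as the paper: right-multiply by $\mathrm{diag}(I_{n-k},U_0)$ from Proposition \ref{pr2:proptriangular} to make the lower $k\times k$ block triangular with positive diagonal, then iterate Proposition \ref{pr2:recursive} $k$ times and absorb the remaining $g_{n-k}\times I_k$ together with $U_0^{-1}$ into $h$. You spell out more than the paper does---in particular the verification that the extracted column at step $j$ lands in $\tilde\Omega_{m,n-j+1}$ with positive last entry (so that $\psi_{m,n-j+1}$ acts as the identity on it), and the converse direction of uniqueness forcing $B=U_0^{-1}$---whereas the paper leaves these implicit.
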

\begin{proof}
	Since $g \in  \pi ^{-1}(\Omega _{e})$, $g$ has the form of (\ref{pr2:W}) with $\det Y \neq 0$ and there exists $U \in  U(k)$ such that $T = YU$ has the form of (\ref{pr2:triangular}).  Let \newline
	$$    g_{n} = g \begin{pmatrix} I_{n-k} & 0 \cr 0 & U \end{pmatrix}.  $$ 
	Then it follows from Proposition \ref{pr2:recursive} that there exists $g_{n-1} \in  U(n-1)$
	which satisfies (\ref{pr2:eqrecursive}) and has the form of (\ref{pr2:W2}) where $T^{\prime }$ again satisfies
	the hypothesis of Proposition \ref{pr2:recursive}.  Iterating this argument, we get
	$$
	g_{n} = W(\psi _{m,n}(z_{k})) \begin{pmatrix} W(\psi _{m,n-1}(z_{k-1})) & 0 \cr 0 & I_{1} \end{pmatrix} \cdots 
	\begin{pmatrix} W(\psi _{m,n-k+1}(z_{1})) & 0 \cr 0 & I_{k-1} \end{pmatrix} h,
	$$
	where $h = g_{n-k} \times I_{k}, \ g_{n-k} \in U(n-k)$.  This shows (\ref{pr2:cosetrep}) with $h = g_{n-k} \times U$.
	The relation 
	$f(g^{\prime }) = (z_{k}, z_{k-1}, \ldots , z_{1})$ for $(z_{k}, z_{k-1}, \ldots , z_{1}) \in  \Omega _{m, n} 
	\times \Omega _{m, n-1} \times \ldots \times \Omega _{m, n-k+1}$ and
	$$
	g^{\prime } = W(\psi _{m,n}(z_{k})) \begin{pmatrix} W(\psi _{m,n-1}(z_{k-1})) & 0 \cr 0 & I_{1} \end{pmatrix} \cdots 
	\begin{pmatrix} W(\psi _{m,n-k+1}(z_{1})) & 0 \cr 0 & I_{k-1} \end{pmatrix} 
	$$
	show the surjectivity of $f$.	
\end{proof}
\begin{cor}  \label{pr2:canonicaldec}
	Let $\pi : U(n) \rightarrow G(k, \mathbb C^{n})$ be the canonical projection, and
	$\iota _{j}$ the section of $U(j)$
	on $CP^{j-1}$ defined by (\ref{pr:section}).
	Then there is a unique bijection 
	$$   \phi _{e}: G(k, \mathbb C^{n}) \supset \Omega _{e} \ni \pi (g) \rightarrow $$
	$$(z_{k}, z_{k-1}, \ldots , z_{1}) \in  \Omega _{m, n} 
	\times \Omega _{m, n-1} \times \ldots \times \Omega _{m, n-k+1} \subset CP^{n-1} \times CP^{n-2} \times \ldots \times CP^{n-k}  $$
	such that 
	$$ \pi (g) = \pi (g^{\prime }) {\rm \ for \ } g^{\prime } = \psi (\phi _{e}(\pi (g))),$$
	where
	$$  \psi (z_{k}, z_{k-1}, \ldots , z_{1}) = \iota _{n} (z_{k}) \begin{pmatrix} \iota _{n-1}(z_{k-1}) & 0 \\ 0 & I_{1} \end{pmatrix} 
	\cdots \begin{pmatrix} \iota _{n-k+1}(z_{1}) & 0 \\ 0 & I_{k-1}
	\end{pmatrix}
	. $$ 
	$\psi \circ \phi _{e}$ is a section of $U(n)$ on $\Omega _{e} \subset G(k, \mathbb C^{n})$ for $\pi $.
\end{cor}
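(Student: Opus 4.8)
The plan is to deduce this Corollary from the Proposition immediately preceding it, in exactly the way the corresponding Corollary at the end of Section \ref{projectsp} was deduced from its companion Proposition: by pushing the surjection $f$ of that Proposition down to the quotient $\Omega_e\subset G(k,\C^n)$. The preliminary observation I would record is that the map $\psi$ written in the statement — the product of the sections $\iota_{n},\dots,\iota_{n-k+1}$ of $U(n),\dots,U(n-k+1)$ on $CP^{n-1},\dots,CP^{n-k}$ given by (\ref{pr:section}) — agrees on $\Omega_{m,n}\times\cdots\times\Omega_{m,n-k+1}$ with the product of unitaries appearing in (\ref{pr2:cosetrep}). Indeed each $z_j\in\Omega_{m,n-j+1}$ lies, as a point of $CP^{n-j}$, in the top cell $V_m$ of that projective space (all coordinates past the $m$-th vanish while the $m$-th does not), and there the piecewise section $\iota_{n-j+1}$ of (\ref{pr:section}) is the explicit local section $W\circ\psi_{m,n-j+1}$ of Propositions \ref{pr:lambda} and \ref{pr:localsection}, up to the permutation matrix $U_m$ implicit in the identification $\Omega_{m,n-j+1}\cong\Omega_{m}$. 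So the preceding Proposition reads: for every $g\in\pi^{-1}(\Omega_e)$ there is a tuple $f(g)=(z_k,\dots,z_1)$ and a \emph{unique} $h\in U(n-k)\times U(k)$ with $g=\psi(f(g))\,h$.

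First I would show $f$ is constant on the fibres of $\pi$. If $\pi(g)=\pi(g_1)$ then $g_1=g\,h_0$ with $h_0\in U(n-k)\times U(k)$, hence $g_1=\psi(f(g))\,(h\,h_0)$ with $h\,h_0\in U(n-k)\times U(k)$; the uniqueness of the second factor in the preceding Proposition, applied to $g_1$, forces $f(g_1)=f(g)$. Thus $\phi_e(\pi(g)):=f(g)$ is a well-defined map $\Omega_e\to\Omega_{m,n}\times\cdots\times\Omega_{m,n-k+1}$, and it is the only map satisfying the requirement in the statement (any such map must send $\pi(g)$ to a tuple whose image under $\psi$ is $\pi$-equivalent to $g$, and uniqueness of $h$ pins that tuple down to $f(g)$). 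Writing $g'=\psi(\phi_e(\pi(g)))=\psi(f(g))=g\,h^{-1}$ and using $h^{-1}\in U(n-k)\times U(k)$ gives $\pi(g')=\pi(g)$ — the asserted identity — and simultaneously shows $\pi\circ(\psi\circ\phi_e)=\mathrm{id}_{\Omega_e}$, i.e.\ $\psi\circ\phi_e$ is a section once $\phi_e$ is known to be a bijection.

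Bijectivity follows from the uniqueness already in hand. For injectivity: $\phi_e(\pi(g_1))=\phi_e(\pi(g_2))$ gives $\psi(f(g_1))=\psi(f(g_2))$, i.e.\ $g_1h_1^{-1}=g_2h_2^{-1}$, hence $g_1=g_2(h_2^{-1}h_1)$ with $h_2^{-1}h_1\in U(n-k)\times U(k)$, so $\pi(g_1)=\pi(g_2)$. For surjectivity: given a tuple $(z_k,\dots,z_1)$ in the target, set $g'=\psi(z_k,\dots,z_1)$; the nested triangular structure built up in Proposition \ref{pr2:recursive} shows the lower-right $k\times k$ block of $g'$ is lower triangular with positive diagonal, hence nonsingular, so $\pi(g')\in\Omega_e$; and $g'=\psi(z_k,\dots,z_1)\cdot I$ is the decomposition (\ref{pr2:cosetrep}) of $g'$ with $h=I$, so $f(g')=(z_k,\dots,z_1)$ and therefore $\phi_e(\pi(g'))=(z_k,\dots,z_1)$. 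Hence $\phi_e$ is a bijection, equal to $(\pi\circ\psi)^{-1}$ on its domain, and $\psi\circ\phi_e$ is a section of $U(n)$ on $\Omega_e$ for $\pi$.

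The only genuinely delicate step is the first one: checking that the ``abstract'' product $\psi$ assembled from the piecewise sections $\iota_{n-j+1}$ of (\ref{pr:section}) coincides with the ``concrete'' product $W(\psi_{m,n}(z_k))\cdot\mathrm{diag}(W(\psi_{m,n-1}(z_{k-1})),I_1)\cdots$ of (\ref{pr2:cosetrep}). This is pure bookkeeping — identifying which cell $V_m\subset CP^{n-j}$ contains $z_j$ and tracking the permutation matrices $U_m$ implementing the embeddings $CP^{m-1}\hookrightarrow CP^{n-j}$ — but it is where all the care is needed; after that, everything is a transcription of the proof of the analogous Corollary in Section \ref{projectsp}. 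A minor secondary point is the verification $\pi(g')\in\Omega_e$ used in the surjectivity step.
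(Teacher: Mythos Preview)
Your argument is correct and is exactly the kind of derivation the paper intends: the Corollary is stated in the paper without any proof, as an immediate consequence of the preceding Proposition, and your ``descend $f$ to the quotient $\Omega_e$'' argument is the standard way to make that implication precise. The one point you rightly flag as delicate --- matching the piecewise sections $\iota_{n-j+1}$ of (\ref{pr:section}) with the explicit factors $W\circ\psi_{m,n-j+1}$ appearing in (\ref{pr2:cosetrep}) --- is genuine bookkeeping with the paper's (somewhat imprecise) identification of $\Omega_{m}\subset CP^{m-1}$ with $\Omega_{m,n-j+1}\subset CP^{n-j}$, but you have handled it correctly: on that embedded copy the relevant cell is $V_m$ and the global section $\iota$ reduces there to the local formula.
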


\section{Lie algebraic back ground} \label{liealg}
In the articles which we have mentioned many statements are based
on the use of the Lie algebra $\mathfrak{u}(n)$ of Lie group $U(n)$. We comment here on the connection with the approach presented above.

The Lie algebra $\mathfrak{u}(n)$ of the Lie group $U(n)$ is defined by
$$      \mathfrak{u}(n) = \{ X \in  M(n, n); \forall t \in  \mathbb R, \exp tX \in  U(n)\} .$$
From the relation
$$      \exp tX^{*} = (\exp tX)^{*} = (\exp tX)^{-1} = \exp -tX $$
it follows
$$      \mathfrak{u}(n) = \{ X \in  M(n, n); X^{*} = -X\} .$$
Let $n = k_{1} + k_{2}$.  Then the Lie algebra of the Lie group $U(k_{1}) \times 
U(k_{2})$ is $\mathfrak{u}(k_{1}) \oplus  \mathfrak{u}(k_{2})$, namely, the set of
the elements of the form\newline
$$    \begin{pmatrix}X_{1} & 0 \cr 0 & X_{2} \end{pmatrix}, \  X_{j}
\in \mathfrak{u}(k_{j}).$$ 
Let $\mathfrak{p}$ be a subset of $\mathfrak{u}(n)$ such that\newline
$$    \mathfrak{u}(n) = \mathfrak{u}(k_{1}) \oplus  \mathfrak{u}(k_{2}) \oplus  
\mathfrak{p} . $$ 
Then $\mathfrak{p}$ consists of the elements of the form\newline
$$    K = K(B) = \begin{pmatrix}O_{1} & B \cr
-B^{*} &  O_{2}\end{pmatrix},$$ 
where $O_{j}$ ($j = 1, 2$) is the $k_{j} \times  k_{j}$ matrix whose
entries are all zero and $B$ is an $k_{1} \times  k_{2}$ complex matrix.  Since
the space
$$    \mathfrak{u}(n)/(\mathfrak{u}(k_{1}) \oplus  \mathfrak{u}(k_{2})) \cong
\mathfrak{p} = \{ K(B); B \in  M(k_{1}, k_{2})\}  $$ 
is considered to be the tangent space of the homogeneous space $U(n)/(U(k_{1}) \times  U(k_{2}))$ at $
o = \pi (e)$, where $e$ is the identity of $U(n)$ and $\pi : U(n) \rightarrow U(n)/(U(k_{1}) \times  U(k_{2}))$ of (\ref{gr:pi}), we study $\exp
K(B)$.  First, we have\newline
$$    K^{2} = \begin{pmatrix}-BB^{*} & O \cr
O^{*} & -B^{*}B\end{pmatrix},$$ 
where $BB^{*}$ is an $k_{1} \times  k_{1}$-matrix, $B^{*}B$ an $k_{2} \times  k_{2}$-matrix and
$O$ is the $k_{1} \times  k_{2}$-matrix whose entries are all zero. Observe now 
$$    K^{2n+2}  
= \begin{pmatrix}-BB^{*} & O \cr
O^{*} & -B^{*}B\end{pmatrix} ^{n+1} 
= \begin{pmatrix}-\sqrt{BB^{*}}^{2} & O \cr
O^{*} & -\sqrt{B^{*}B}^{2}\end{pmatrix} ^{n+1}  $$ 
$$    = \begin{pmatrix}(-1)^{n+1} \sqrt{BB^{*}}^{2n+2} & O \cr
O^{*} & (-1)^{n+1}\sqrt{B^{*}B}^{2n+2}\end{pmatrix}. $$ 
This gives
$$      I_{n} + \sum _{n=0}^{\infty } \frac{1}{(2n+2)!} K^{2n+2} 
= \begin{pmatrix}\cos \sqrt{BB^{*}} & O \cr
O^{*} & \cos \sqrt{B^{*}B}\end{pmatrix}$$ 
and similarly
$$    \sum _{n=0}^{\infty } \frac{1}{(2n+1)!} K^{2n}K
= \sum _{n=0}^{\infty } \frac{1}{(2n+1)!} \begin{pmatrix}(-BB^{*})^{n} & O \cr
O^{*} & (-B^{*}B)^{n}\end{pmatrix}
\begin{pmatrix}O_{1} & B \cr
-B^{*} &  O_{2}\end{pmatrix}$$ 
$$    = \sum _{n=0}^{\infty } \frac{1}{(2n+1)!} 
\begin{pmatrix}O_{1} & (-BB^{*})^{n}B \cr
(-B^{*}B)^{n}(-B^{*}) &  O_{2}\end{pmatrix}$$ 
$$    = \sum _{n=0}^{\infty } \frac{1}{(2n+1)!} 
\begin{pmatrix}O_{1} & B(-B^{*}B)^{n} \cr
(-B^{*}B)^{n}(-B^{*}) &  O_{2}\end{pmatrix}$$ 
$$    = \sum _{n=0}^{\infty } \frac{1}{(2n+1)!} 
\begin{pmatrix}O_{1} & B(-1)^{n} \sqrt{B^{*}B}^{2n} \cr
(-1)^{n}\sqrt{B^{*}B}^{2n}(-B^{*}) &  O_{2}\end{pmatrix}$$ 
$$    = \sum _{n=0}^{\infty } \frac{1}{(2n+1)!} 
\begin{pmatrix}O_{1} & B \sqrt{B^{*}B}^{-1}(-1)^{n} \sqrt{B^{*}B}^{2n+1} \cr
\sqrt{B^{*}B}^{-1}(-1)^{n}\sqrt{B^{*}B}^{2n+1}(-B^{*}) &  O_{2}\end{pmatrix}$$ 
$$    = \begin{pmatrix}O_{1} & B \sqrt{B^{*}B}^{-1} \sin \sqrt{B^{*}B} \cr
\sqrt{B^{*}B}^{-1} \sin \sqrt{B^{*}B} (-B^{*}) &  O_{2}\end{pmatrix}.$$ 
Thus we conclude\newline
$$      e^{K} = I_{n} + \sum _{n=0}^{\infty } \frac{1}{(2n+2)!} K^{2n+2} + \sum _{n=0}^{\infty } \frac{1}{(2n+1)!} K^{2n+1}=
\begin{pmatrix}\cos \sqrt{BB^{*}} & B \frac{\sin \sqrt{B^{*}B}}{\sqrt{B^{*}B}} \cr
- \frac{\sin \sqrt{B^{*}B}}{\sqrt{B^{*}B}} B^{*} &  \cos \sqrt{B^{*}B}\end{pmatrix}.$$ 

\begin{rem}  Since
	$$      (-BB^{*})^{n+1} = (-1)^{n+1} B(B^{*}B)^{n}B^{*},$$
	$$      \cos \sqrt{BB^{*}} = I_{k_{1}} + \sum _{n=0}^{\infty } \frac{1}{(2n+2)!} (-BB^{*})^{n+1} $$
	$$      = I_{k_{1}} + B \sum _{n=0}^{\infty } \frac{1}{(2n+2)!} (-1)^{n+1}(B^{*}B)^{n} B^{*}$$
	$$      = I_{k_{1}} + B (\sqrt{B^{*}B})^{-2} \sum _{n=0}^{\infty } \frac{1}{(2n+2)!} (-1)^{n+1}(\sqrt{B^{*}B})^{2n+2} B^{*}$$
	$$      = I_{k_{1}} + B (\sqrt{B^{*}B})^{-2} (\cos \sqrt{B^{*}B} - I_{k_{2}}) B^{*}.$$
\end{rem}
\begin{rem}
	Since
	$$      (-B^{*}B)^{n+1} = (-1)^{n+1} B^{*}(BB^{*})^{n}B,$$
	$$      \cos \sqrt{B^{*}B} = I_{k_{2}} + \sum _{n=0}^{\infty } \frac{1}{(2n+2)!} (-B^{*}B)^{n+1} $$
	$$      = I_{k_{2}} + B^{*} \sum _{n=0}^{\infty } \frac{1}{(2n+2)!} (-1)^{n+1}(BB^{*})^{n} B$$
	$$      = I_{k_{2}} + B^{*} (\sqrt{BB^{*}})^{-2} \sum _{n=0}^{\infty } \frac{1}{(2n+2)!} (-1)^{n+1}(\sqrt{BB^{*}})^{2n+2} B$$
	$$      = I_{k_{2}} + B^{*} (\sqrt{BB^{*}})^{-2} (\cos \sqrt{BB^{*}} - I_{k_{1}}) B.$$
\end{rem}
\begin{rem}
	$\cos \sqrt{B^{*}B}$, $\sqrt{B^{*}B}^{-1} \sin \sqrt{B^{*}B}$ and $(\sqrt{B^{*}B})^{-2} (\cos \sqrt{B^{*}B} -
	I_{k_{2}})$ are entire functions of $B^{*}B$.
\end{rem}
\begin{rem}
	$$    \sum _{n=0}^{\infty } \frac{1}{(2n+1)!} K^{2n}K
	= \sum _{n=0}^{\infty } \frac{1}{(2n+1)!} 
	\begin{pmatrix}O_{1} & (-BB^{*})^{n}B \cr
	(-B^{*}B)^{n}(-B^{*}) &  O_{2}\end{pmatrix}$$ 
	$$    = \sum _{n=0}^{\infty } \frac{1}{(2n+1)!} 
	\begin{pmatrix}O_{1} & (-BB^{*})^{n}B \cr
	(-B^{*})(-B^{*}B)^{n} &  O_{2}\end{pmatrix}$$ 
	$$    = \sum _{n=0}^{\infty } \frac{1}{(2n+1)!} 
	\begin{pmatrix}O_{1} & (-1)^{n} \sqrt{BB^{*}}^{2n}B \cr
	(-B^{*})(-1)^{n}\sqrt{BB^{*}}^{2n} &  O_{2}\end{pmatrix}$$ 
	$$    = \begin{pmatrix}O_{1} &  \sqrt{B^{*}B}^{-1} \sin \sqrt{B^{*}B} B \cr
	(-B^{*})\sqrt{B^{*}B}^{-1} \sin \sqrt{B^{*}B} &  O_{2}\end{pmatrix}.$$ 
\end{rem}
Let $X = B \frac{\sin \sqrt{B^{*}B}}{\sqrt{B^{*}B}}$ and $Y = \cos \sqrt{B^{*}B}$.  Then
$$    X^{*}X = \frac{\sin \sqrt{B^{*}B}}{\sqrt{B^{*}B}} B^{*}B \frac{\sin \sqrt{B^{*}B}}{\sqrt{B^{*}B}} = \sin
^{2} \sqrt{B^{*}B},$$ 
and\newline
$$    X^{*}X + Y^{2} = \sin ^{2} \sqrt{B^{*}B} + \cos ^{2} \sqrt{B^{*}B} = I, \  Y = (I - X^{*}X)^{1/2}.$$ 
In the same way, we have\newline
$$    XX^{*} = B \frac{\sin \sqrt{B^{*}B}}{\sqrt{B^{*}B}} \frac{\sin \sqrt{B^{*}B}}{\sqrt{B^{*}B}} B^{*}$$ 
$$     = B \frac{\sin ^{2} \sqrt{B^{*}B}}{B^{*}B} B^{*} = \frac{\sin ^{2} \sqrt{BB^{*}}}{BB^{*}} BB^{*} =
\sin ^{2} \sqrt{BB^{*}},$$   
where we used the fact that for an entire function $f(x) = \sum _{n=0}^{\infty } a_{n}
x^{n}$,\newline
$$    B f(B^{*}B)B^{*} = \sum _{n=0}^{\infty } a_{n}B(B^{*}B)^{n}B^{*} = \sum _{n=0}^{\infty } a_{n}(BB^{*})^{n} BB^{*} =
f(BB^{*})BB^{*},$$ 
and $\frac{\sin \sqrt{B^{*}B}}{\sqrt{B^{*}B}}$ is an entire function of $B^{*}B$.  This shows
\newline
$$    \cos \sqrt{BB^{*}} = (I - XX^{*})^{1/2}.$$ 
Since $K(B) \in  \mathfrak{u}(n)$, $\exp K(B) \in  U(n)$ and 
\beq  \label{li:exp}
\exp K(B) =
\begin{pmatrix} (I_{k_1} - XX^{*})^{1/2} & X \cr
	- X^{*} & (I_{k_2} - X^{*}X)^{1/2} \end{pmatrix} = W(X) \in  U(n).
\eeq
Without knowing such background, we can show directly the
unitarity of the matrix (\ref{li:exp}).
\begin{prop}
	For $X \in  M(k_{1}, k_{2})$, $X^{*}X \leq  I_{k_2} \Leftrightarrow  XX^{*} \leq  I_{k_1}$.
\end{prop}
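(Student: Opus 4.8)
The plan is to reduce the statement to an elementary fact about positive semidefinite Hermitian matrices, namely that $A \leq I$ holds iff the associated quadratic form is a contraction, together with the observation that $X$ and $X^*$ are simultaneously contractions. First I would record the translation of the hypothesis into operator language: since $\norm{Xz}^2 = \lan z, X^*X z\ran$ for every $z \in \C^{k_2}$, the condition $X^*X \leq I_{k_2}$ is equivalent to $\norm{Xz} \leq \norm{z}$ for all $z$; symmetrically, $XX^* \leq I_{k_1}$ is equivalent to $\norm{X^* y} \leq \norm{y}$ for all $y \in \C^{k_1}$.

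Assuming $X^*X \leq I_{k_2}$, I would then take an arbitrary $y \in \C^{k_1}$ and estimate
$$\norm{X^* y}^2 = \lan y, XX^* y\ran \leq \norm{y}\,\norm{X(X^* y)} \leq \norm{y}\,\norm{X^* y},$$
using Cauchy--Schwarz for the first inequality and the contraction property $\norm{Xz} \leq \norm{z}$ with $z = X^* y$ for the second. If $X^* y \neq 0$ this yields $\norm{X^* y} \leq \norm{y}$, and if $X^* y = 0$ the same inequality is trivial; hence $XX^* \leq I_{k_1}$. The reverse implication follows by interchanging the roles of $X$ and $X^*$, which is legitimate because $(X^*)^* = X$, so that $XX^* \leq I_{k_1}$ gives $X^*X \leq I_{k_2}$ by exactly the same computation.

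There is no real obstacle here; the only points deserving a moment's care are the exclusion of the degenerate case $X^* y = 0$ before dividing, and the observation that the converse is obtained for free by the symmetry $X \leftrightarrow X^*$. An alternative, equally short route would be to note that the nonzero eigenvalues of $X^*X$ and $XX^*$ coincide (if $X^*X v = \lambda v$ with $\lambda \neq 0$, then $XX^*(Xv) = \lambda (Xv)$ and $\norm{Xv}^2 = \lambda \norm{v}^2 \neq 0$), so the two positive operators share the same spectral radius, and $A \leq I$ for a positive Hermitian $A$ is equivalent to that spectral radius being $\leq 1$; hence $X^*X \leq I_{k_2}$ and $XX^* \leq I_{k_1}$ are each equivalent to the common condition $\norm{X}^2 \leq 1$.
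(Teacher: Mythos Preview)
Your proof is correct and follows essentially the same route as the paper: both rewrite $X^{*}X \leq I_{k_2}$ as the contraction condition $\norm{Xz} \leq \norm{z}$ and then argue this is equivalent to $\norm{X^{*}y} \leq \norm{y}$. The only cosmetic difference is that the paper obtains that equivalence through the manifestly symmetric pairing $\lvert (d, Xe)\rvert = \lvert (X^{*}d, e)\rvert \leq 1$ for unit vectors $d,e$, whereas you use Cauchy--Schwarz on $\lan y, XX^{*}y\ran$ and then invoke the $X \leftrightarrow X^{*}$ symmetry; your spectral alternative is not in the paper but is equally valid.
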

\begin{proof} Here is the elementary proof.
	$$    X^{*}X \leq  I_{k_2} \Leftrightarrow  \forall e \in  \mathbb C^{k_{2}}(\Vert e\Vert  = 1 \Rightarrow  (e, X^{*}Xe)_{2} \leq  (e, I_{k_2}e)_{2} = 1)$$ 
	$$    \Leftrightarrow  \forall e \in  \mathbb C^{k_{2}}(\Vert e\Vert  = 1 \Rightarrow  (Xe, Xe)_{1} \leq  1)
	\Leftrightarrow  \forall e \in  \mathbb C^{k_{2}}(\Vert e\Vert  = 1 \Rightarrow  \Vert Xe\Vert _{1} \leq  1)$$ 
	$$    \Leftrightarrow  \forall e \in  \mathbb C^{k_{2}}, \forall d \in  \mathbb C^{k_{1}}(\Vert e\Vert  = \Vert d\Vert  = 1 \Rightarrow  (d, Xe)_{1} \leq  1) $$ 
	$$    \Leftrightarrow  \forall e \in  \mathbb C^{k_{2}}, \forall d \in  \mathbb C^{k_{1}}(\Vert e\Vert  = \Vert d\Vert  = 1 \Rightarrow  (X^{*}d, e)_{2} \leq  1) $$ 
	$$    \Leftrightarrow  \forall d \in  \mathbb C^{k_{1}}(\Vert d\Vert  = 1 \Rightarrow  \Vert X^{*}d\Vert _{2} \leq  1)
	\Leftrightarrow  \forall d \in  \mathbb C^{k_{1}}(\Vert d\Vert  = 1 \Rightarrow  (X^{*}d, X^{*}d)_{2} \leq  1) $$ 
	$$    \Leftrightarrow  \forall d \in  \mathbb C^{k_{1}}(\Vert d\Vert  = 1 \Rightarrow  (d, XX^{*}d)_{1} \leq  1) \Leftrightarrow  XX^{*} \leq  I_{k_1}.$$ 
\end{proof}
\begin{prop}  \label{li:WX}
	Let $X \in  \bar{B} (k_{1}, k_{2})$ = $\{ X \in  M(k_{1}, k_{2}); X^{*}X \leq  I_{k_2}\} $.  Then $(I_{k_2} -
	X^{*}X)^{1/2}$ and $(I_{k_1} - XX^{*})^{1/2}$ are well defined, and $W(X)$ of (\ref{li:exp}) which appeared in Proposition \ref{gr:lambda} is unitary.
\end{prop}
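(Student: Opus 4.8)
The plan is to first establish that the two square roots occurring in $W(X)$ are well defined, and then to verify unitarity by a direct $2\times 2$ block computation whose only nontrivial ingredient is an intertwining identity between those square roots.

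First I would observe that $X^{*}X$ is positive semidefinite, since $(e,X^{*}Xe)=\norm{Xe}^{2}\geq 0$ for all $e\in\C^{k_{2}}$; the hypothesis $X\in\bar B(k_{1},k_{2})$ then gives $0\leq X^{*}X\leq I_{k_{2}}$, so $0\leq I_{k_{2}}-X^{*}X\leq I_{k_{2}}$ and, by the spectral theorem, $(I_{k_{2}}-X^{*}X)^{1/2}$ exists, is self-adjoint, and has spectrum in $[0,1]$. By the Proposition immediately preceding, $X^{*}X\leq I_{k_{2}}$ is equivalent to $XX^{*}\leq I_{k_{1}}$; combined with $XX^{*}\geq 0$ this shows in the same way that $(I_{k_{1}}-XX^{*})^{1/2}$ is well defined, self-adjoint, with spectrum in $[0,1]$. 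This proves the first assertion.

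Next I would compute $W(X)^{*}W(X)$ in block form. Writing $A=(I_{k_{1}}-XX^{*})^{1/2}$ and $D=(I_{k_{2}}-X^{*}X)^{1/2}$, self-adjointness gives $W(X)^{*}=\begin{pmatrix} A & -X \\ X^{*} & D \end{pmatrix}$, and a direct multiplication yields $(1,1)$-block $A^{2}+XX^{*}=I_{k_{1}}$, $(2,2)$-block $X^{*}X+D^{2}=I_{k_{2}}$, and $(1,2)$-block $AX-XD$, the $(2,1)$-block being its adjoint. Hence $W(X)^{*}W(X)=I_{n}$ as soon as
\[
(I_{k_{1}}-XX^{*})^{1/2}\,X \;=\; X\,(I_{k_{2}}-X^{*}X)^{1/2},
\]
and since $W(X)$ is square, $W(X)^{*}W(X)=I_{n}$ already implies $W(X)W(X)^{*}=I_{n}$, i.e. $W(X)\in U(n)$.

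The remaining point, and the only real obstacle, is the displayed intertwining identity. The elementary fact is $X(X^{*}X)^{j}=(XX^{*})^{j}X$ for every $j\geq 0$ (immediate by induction from $X(X^{*}X)=(XX^{*})X$), hence $Xp(X^{*}X)=p(XX^{*})X$ for every polynomial $p$. Since the spectra of $X^{*}X$ and $XX^{*}$ are contained in $[0,1]$, I would apply the Weierstrass approximation theorem to pick polynomials $p_{\ell}$ with $p_{\ell}(t)\to\sqrt{1-t}$ uniformly on $[0,1]$; then $p_{\ell}(X^{*}X)\to(I_{k_{2}}-X^{*}X)^{1/2}$ and $p_{\ell}(XX^{*})\to(I_{k_{1}}-XX^{*})^{1/2}$ in operator norm, so passing to the limit in $Xp_{\ell}(X^{*}X)=p_{\ell}(XX^{*})X$ gives the identity. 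Alternatively, diagonalizing by a singular value decomposition $X=U\Sigma V^{*}$ reduces $W(X)$, up to the unitary conjugation $\mathrm{diag}(U,V)(\cdot)\mathrm{diag}(U^{*},V^{*})$ and a permutation of rows and columns, to a direct sum of plane rotations $\begin{pmatrix}\sqrt{1-\sigma^{2}} & \sigma \\ -\sigma & \sqrt{1-\sigma^{2}}\end{pmatrix}$ and identity blocks, which is manifestly unitary. This is the direct, Lie-algebra-free verification of the unitarity announced after formula (\ref{li:exp}), where the same matrix appeared as $\exp K(B)$ with $K(B)$ skew-adjoint.
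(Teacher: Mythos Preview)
Your proof is correct and follows the same overall scheme as the paper: compute $W(X)^{*}W(X)$ block by block, observe that the diagonal blocks are $I_{k_{1}}$ and $I_{k_{2}}$, and reduce the vanishing of the off-diagonal blocks to the intertwining relation $(I_{k_{1}}-XX^{*})^{1/2}X = X(I_{k_{2}}-X^{*}X)^{1/2}$. The only genuine difference lies in how that intertwining is established. The paper expands $(I-XX^{*})^{1/2}$ via the binomial series $\sum_{n\geq 0}\binom{1/2}{n}(-XX^{*})^{n}$, which converges only on the open ball $B(k_{1},k_{2})$, and then passes to $\bar B(k_{1},k_{2})$ by a separate limiting argument in the Hilbert--Schmidt norm. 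Your route via Weierstrass approximation of $t\mapsto\sqrt{1-t}$ on $[0,1]$ handles all of $\bar B(k_{1},k_{2})$ in one stroke, since the spectra of $X^{*}X$ and $XX^{*}$ already lie in $[0,1]$; this is a mild but genuine simplification. The alternative SVD reduction you sketch is also valid and gives a more structural picture, but is not used in the paper.
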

\begin{proof}
	$$ 
	\begin{pmatrix} (I_{k_1} - XX^{*})^{1/2} & -X \cr
	X^{*} & (I_{k_2} - X^{*}X)^{1/2} \end{pmatrix}
	\begin{pmatrix} (I_{k_1} - XX^{*})^{1/2} & X \cr
	- X^{*} & (I_{k_2} - X^{*}X)^{1/2} \end{pmatrix}$$ 
	$$    = \begin{pmatrix} I_{k_1} - XX^{*} + XX^{*} & (I_{k_1} - XX^{*})^{1/2}X - X(I_{k_2} - X^{*}X)^{1/2} \cr
	X^{*}(I_{k_1} - XX^{*})^{1/2} - (I_{k_2} - X^{*}X)^{1/2}X^{*} & X^{*}X + I_{k_2} - X^{*}X \end{pmatrix}.$$ 
	Since $(1 + x)^{\alpha }$ for $\alpha  = 1/2$ is expanded as $1 + \sum _{n=1}^{\infty } {\alpha  \choose n} x^{n}$ for $-1 < x < 1$,\newline
	$$    (I_{k_1} - XX^{*})^{1/2}X = (I_{k_1} + \sum _{n=1}^{\infty } {\alpha  \choose n} (-XX^{*})^{n})X $$ 
	$$    = X + \sum _{n=1}^{\infty } {\alpha  \choose n} X(-X^{*}X)^{n} = X(I_{k_1} - X^{*}X)^{1/2}$$ 
	and\newline
	$$    X^{*}(I_{k_1} - XX^{*})^{1/2} = X^{*}(I_{k_2} + \sum _{n=1}^{\infty } {\alpha  \choose n} (-XX^{*})^{n}) $$ 
	$$    = X^{*} + \sum _{n=1}^{\infty } {\alpha  \choose n} (-X^{*}X)^{n}X^{*} = (I_{k_2} - X^{*}X)^{1/2}X^{*}$$ 
	hold for $X \in  B(k_{1}, k_{2})$ = $\{ X \in  M(k_{1}, k_{2}); X^{*}X < I_{k_2} \} $. 
	In order to show the above two formulae for $X \in  \bar{B} (k_{1}, k_{2})$,
	we employ a limiting process $B(k_{1},
	k_{2}) \ni  X_{n} \rightarrow  X \in  \bar{B} (k_{1}, k_{2})$ as $n \rightarrow  \infty $ with respect to a norm $\Vert X\Vert ^{2} =
	{\rm Tr\, }X^{*}X = \sum _{i=1}^{k_{1}}\sum _{j=1}^{k_{1}} \vert x_{ij}\vert ^{2}$.  Thus we have $W(X)^* W(X) = I_n$ for all $X \in  \bar{B} (k_{1}, k_{2})$.
\end{proof}
{
	\begin{rem}  \label{li:XX*}
		The following relation is useful. Here $\alpha=1/2$.
		$$    (I_{k_1} - XX^{*})^{1/2} = I_{k_1} - X\sum _{n=1}^{\infty } {\alpha \choose n} (-X^{*}X)^{n-1} X^{*} $$
		$$  = I_{k_1} + X(X^{*}X)^{-1/2} \sum _{n=1}^{\infty } {\alpha  \choose n} (-X^{*}X)^{n} (X^{*}X)^{-1/2} X^{*}$$ 
		$$  = I_{k_1} + X(X^{*}X)^{-1/2} [(1 - X^{*}X)^{1/2} - 1] (X^{*}X)^{-1/2} X^{*}.$$ 
	\end{rem}
}
\begin{rem}  Since a Lie algebra describes only the local properties of its Lie group, the mapping $B \rightarrow \pi (\exp K(B))$ gives a local homeomorphism, that is,
	there is a neighborhood $V$ of $0$ in $M(k_{1}, k_{2})$ and $U$ of $e$ in $U(n)$ such that $V \ni B \rightarrow \pi (\exp K(B)) \in \pi (U)$ is homeomorphic (see \cite{Ma72}).  But for $W(X)$,
	Proposition \ref{gr:kappa} says that the mapping
	$\kappa _{\sigma }: B(k_{1}, k_{2}) \ni X \rightarrow \pi (W(X) \in \Omega _{\sigma }\subset \pi (U(n))$ is bijective and
	Proposition \ref{bd:section} says the mapping
	$\bar{\kappa }: \bar{B}(k_{1}, k_{2}) \ni B \rightarrow \pi (W(X) \in \pi (U(n))$ is surjective.
\end{rem}

\section{Examples} \label{example}
Here we give two examples of the parametrization of degenerate
density matrices with diagonal matrices of eigenvalues of the forms:\newline
1)  $D_{4}(\boldsymbol{\lambda }) = {\rm diag \, }_{4}(\lambda _{1}I_{3}, \lambda _{2}I_{1})$,\newline
2)  $D_{4}(\boldsymbol{\lambda }) = {\rm diag \, }_{4}(\lambda _{1}I_{2}, \lambda _{2}I_{2})$.\newline
For the first case, the density matrices are parametrized by
$$      \Lambda_2^{\neq}  \times  G(1, \mathbb C^{4}) = \Lambda_2^{\neq} \times  CP^{3}$$
(see (\ref{gr:degenerateden})).\newline
Since $\Omega _{4} \subset  CP^{3}$ is an open dense subset of $CP^{3}$ and $\Omega _{4}$ is
parametrized by $B(3) = \{ z \in  \mathbb C^{3}; \vert z\vert  < 1\} $, almost all density
matrices are parametrized by $\Lambda  \times  B(3)$.\newline
Concretely, we have the following parametrization:
$$      \Lambda_2^{\neq}  \times  B(3) \ni  ((\lambda _{1}, \lambda _{2}), x) \rightarrow  \rho  = $$
$$  \begin{pmatrix} (I_{3} - xx^{*})^{1/2} & x \cr -x^{*} & (1 - x^{*}x)^{1/2}
\end{pmatrix}
\begin{pmatrix} \lambda _{1}I_{3} & 0 \cr 0 & \lambda _{2}I_{1} \end{pmatrix}
\begin{pmatrix} (I_{3} - xx^{*})^{1/2} & x \cr -x^{*} & (1 - x^{*}x)^{1/2}
\end{pmatrix} ^{*}. $$ 
$(I_{3} - xx^{*})^{1/2}$ can be calculated as:
$$     (I_{3} - xx^{*})^{1/2} = I_{3} + (x^{*}x)^{-1} [(1 - x^{*}x)^{1/2} - 1]xx^{*}$$
(see Remark \ref{li:XX*}).

For the second case, the density matrices are parametrized by
$$      \Lambda_2^{\neq} \times  G(2, \mathbb C^{4}).$$
Since $\Omega _{e} \subset  G(2, \mathbb C^{4})$ is an open dense subset of $G(2, \mathbb C^{4})$ and $\Omega _{e}$ is
parametrized by $B(2,2) = \{ X \in  M(2,2); X^{*}X < I_{2}\} $, almost all density
matrices are parametrized by $\Lambda  \times  B(2,2)$.\newline
Concretely, we have the following parametrization:
$$      \Lambda_2^{\neq} \times  B(2,2) \ni  ((\lambda _{1}, \lambda _{2}), x) \rightarrow  \rho  = $$
$$  \begin{pmatrix} (I_{2} - xx^{*})^{1/2} & x \cr -x^{*} & (I_{2} - x^{*}x)^{1/2}
\end{pmatrix}
\begin{pmatrix} \lambda _{1}I_{2} & 0 \cr 0 & \lambda _{2}I_{2} \end{pmatrix}
\begin{pmatrix} (I_{2} - xx^{*})^{1/2} & x \cr -x^{*} & (I_{2} - x^{*}x)^{1/2}
\end{pmatrix} ^{*}. $$ 
But unfortunately, $(I_{2} - xx^{*})^{1/2}$ and $(I_{2} - x^{*}x)^{1/2}$ are not easy
to calculate.  So, we employ Corollary \ref{pr2:canonicaldec} which states that there is a
bijection $\phi _{e}: \Omega _{e} \rightarrow  \Omega _{2,4} \times  \Omega _{2,3} \subset  CP^{3} \times  CP^{2}$, and $\psi \circ \phi _{e}$ is a local
section of $U(4)$ on $\Omega _{e}$, where
$$    \psi (z_{2}, z_{1}) = \iota _{4}(z_{2}) \begin{pmatrix} \iota _{3}(z_{1}) & 0 \cr 0 & I_{1}
\end{pmatrix}, $$ 
and $\iota _{j}$ is the section defined by (\ref{pr:section}).
Concretely, we have the following parametrization:
$$      \Lambda_2^{\neq} \times  B(2)^{2} \ni  (\lambda _{1}, \lambda _{2}, x_{2}, x_{1}) \rightarrow  \rho  = U(x_{2}, x_{1}) {\rm diag }_{4}(\lambda _{1}I_{2}, \lambda _{2}I_{2}) U(x_{2}, x_{1})^{*},$$
$$  U(x_{2}, x_{1}) = \begin{pmatrix} (I_{2} - x_{2}x_{2}^{*})^{1/2} & 0 & x_{2} \cr
0 & 1 & 0  \cr -x_{2}^{*} & 0 & (1 - x_{2}^{*}x_{2})^{1/2} \end{pmatrix}
\begin{pmatrix} (I_{2} - x_{1}x_{1}^{*})^{1/2} & x_{1} & 0 \cr 
-x_{1}^{*} & (1 - x_{1}^{*}x_{1})^{1/2} & 0 \cr
0 & 0 & 1 \end{pmatrix} . $$  

\section{Conclusion}
The problem of parametrizing degenerate density matrices required to develope{a new} { approach using techniques from the theory of homogeneous spaces} as outlined in sections \ref{intro} - \ref{section}. This approach is not based on the use of Lie algebra methods. Actually our approach helps to detect some short comings of the Lie algebra approach as used the the given references{, i.e., the exponential map from Lie algebra to Lie group is not one to one and onto,} and to correct these, also in the case of non-degenerate density matrices. These short comings are due to the non-injectivity of the given map at the boundary of the respective parameter domain. 

\section*{Acknowledgements} We are very grateful to the National Institute for Theoretical Physics (South Africa) which helped to realise this collaboration through a substantial grant from their visitor program 2016 for S. Nagamachi.

\end{document}